\newcommand{\algo}{{LYGE}}
\newcommand{\eg}{\textit{e.g.}}
\newcommand{\ie}{\textit{i.e.}}
\newcommand{\xg}{x_\mathrm{goal}}
\newcommand\reallywidehat[1]{%
\savestack{\tmpbox}{\stretchto{%
  \scaleto{%
    \scalerel*[\widthof{\ensuremath{#1}}]{\kern-.6pt\bigwedge\kern-.6pt}%
    {\rule[-\textheight/2]{1ex}{\textheight}}%
  }{\textheight}%
}{0.5ex}}%
\stackon[1pt]{#1}{\tmpbox}%
}
\title[Lyapunov-guided Exploration]{Learning to Stabilize High-dimensional Unknown Systems Using Lyapunov-guided Exploration}
\begin{document}

\maketitle

\begin{abstract}%
    Designing stabilizing controllers is a fundamental challenge in autonomous systems, particularly for high-dimensional, nonlinear systems that can hardly be accurately modeled with differential equations. The Lyapunov theory offers a solution for stabilizing control systems, still, current methods relying on Lyapunov functions require access to complete dynamics or samples of system executions throughout the entire state space. Consequently, they are impractical for high-dimensional systems. This paper introduces a novel framework, \textbf{LY}apunov-\textbf{G}uided \textbf{E}xploration (\algo), for learning stabilizing controllers tailored to high-dimensional, unknown systems. \algo\ employs Lyapunov theory to iteratively guide the search for samples during exploration while simultaneously learning the local system dynamics, control policy, and Lyapunov functions. We demonstrate its scalability on highly complex systems, including a high-fidelity F-16 jet model featuring a 16D state space and a 4D input space. Experiments indicate that, compared to prior works in reinforcement learning, imitation learning, and neural certificates, \algo\ reduces the distance to the goal by $50\%$ while requiring only $5\%$ to $32\%$ of the samples. Furthermore, we demonstrate that our algorithm can be extended to learn controllers guided by other certificate functions for unknown systems.\footnote{Project website: \href{https://mit-realm.github.io/lyge-website/}{https://mit-realm.github.io/lyge-website/}. The appendix can be found on the project website.}
\end{abstract}

\begin{keywords}%
    Lyapunov-guided Exploration, High-dimensional Unknown Systems, Machine Learning
\end{keywords}

\section{Introduction}

Designing stabilizing controllers for high-dimensional systems with potentially unknown dynamics is essential in autonomous systems, where Lyapunov-based control design plays a significant role~\citep{parrilo2000structured}. In recent years, methods have been proposed to automatically construct Lyapunov functions and control Lyapunov functions (CLFs) for systems of varying complexity, encompassing both optimization and learning-based methods~\citep{giesl2015review,dawson2022survey}. However, existing methods encounter two major challenges: \emph{scalability}, \ie, applicability to high-dimensional systems, and \emph{model transparency}, \ie, the necessity of knowing the system dynamics.

Traditional optimization-based control design involves finding controllers and Lyapunov functions by solving a sequence of semi-definite programming (SDP) problems~\citep{parrilo2000structured,majumdar2013control,ahmadi2016some}. However, scalability to high-dimensional systems is hindered by the exponential growth of the number of decision variables with system dimension~\citep{lofberg2009pre} and numerical problems~\citep{permenter2018partial}, including strict feasibility and numerical reliability issues. Although recent advancements have produced scalable SDP solvers~\citep{yurtsever2021scalable}, they depend on assumptions such as the sparsity of the decision matrix. Neural network (NN)-based representations of Lyapunov functions have gained popularity~\citep{chang2019neural,han2020actor,chang2021stabilizing,dawson2022safe} and can to some extent alleviate dimensionality limitations when finding a CLF. However, due to their reliance on state-space sampling, learning techniques still face exponential growth in sample complexity for high-dimensional systems.

Additionally, most optimization-based and learning-based methods require system dynamics to be known as ordinary differential equations (ODEs), limiting their applicability and practicality for real-world systems. For instance, the F-16 fighter jet model~\citep{heidlauf2018verification} investigated in this paper is represented as a combination of look-up tables, block diagrams, and C programs. Accurately describing the complex behavior of the system using an ODE is highly challenging. For systems with unknown dynamics, previous works have attempted to first use system identification, (\eg, learn the ODE model with NNs), then find a controller with a CLF~\citep{dai2021lyapunov,zhou2022neural}. However, using a single NN to fit the dynamics of the entire state space of a high-dimensional system requires a vast number of training samples to cover the entire state space, and these surrogate NN models can exhibit substantial prediction errors in sparsely sampled regions of the state-action space.

Our work is motivated by the fact that for high-dimensional systems, collecting data across the entire state space is both infeasible and unnecessary, as only a small subset of the state space is reachable for the agent starting from a set of initial conditions. Therefore, obtaining a model for the entire state space is excessive \citep{kamalapurkar2016efficient}. Instead, we learn a model valid only in the reachable subset of the state space and update the controller according to guidance, e.g., the learned CLF, to continuously expand the subset towards the goal. Constructing such a subset is non-trivial, so we assume access to some imperfect and potentially unstable demonstrations as initial guidance for reachable states. Starting from these demonstrations, our objective is to update the controller, guide exploration of necessary regions in the state space, and ultimately stabilize the system at the goal.

To achieve this, we propose a novel framework, \textbf{LY}apunov-\textbf{G}uided \textbf{E}xploration (\algo), to jointly learn the system dynamics in the reachable subset, a controller, and a CLF to guide the exploration of the controller in high-dimensional unknown systems. We iteratively learn the dynamics in the reachable states using past experience, update the CLF and the controller, and perform exploration to expand the subset toward the goal. Upon convergence, we obtain a stabilizing controller for the high-dimensional unknown system. The main \textbf{contributions} of the paper are: 
1) We propose a novel framework, \algo, to learn stabilizing controllers for high-dimensional unknown systems. Guided by a learned CLF, \algo\ explores only the \emph{useful} subset, thus addressing the scalability and model transparency problems; 
2) We show that the proposed algorithm learns a stabilizing controller; 
3) We conduct experiments on benchmarks including Inverted Pendulum, Cart Pole, Cart II Pole~\citep{gym}, Neural Lander~\citep{shi2019neural}, and the F-$16$ model~\citep{heidlauf2018verification} with two tasks. Our results show that our learned controller outperforms other reinforcement learning (RL), imitation learning (IL), and neural-certificate-based algorithms in terms of stabilizing the systems while reducing the number of samples by $68\%$ to $95\%$.

\section{Related Work}

\paragraph{Control Lyapunov Functions.}
Our work builds on the widely used Lyapunov theory for designing stabilizing controllers. Classical CLF-based controllers primarily rely on hand-crafted CLFs~\citep{choi2020reinforcement,castaneda2021gaussian} or Sum-of-Squares (SoS)-based SDPs~\citep{parrilo2000structured,majumdar2013control,ahmadi2016some,long2023distributionally}. However, these approaches require known dynamics and struggle to generalize to high-dimensional systems due to the exponential growth of decision variables and numerical issues~\citep{lofberg2009pre,permenter2018partial}. To alleviate these limitations, recent work uses NN to learn Lyapunov functions~\citep{richards2018lyapunov,abate2020formal,abate2021fossil,gaby2021lyapunov} and stabilizing controllers~\citep{chang2019neural,mehrjou2021neural,dawson2022safe,farsi2022piecewise,zhang2023compositional,wang2023physics,min2023data}. Most of these works sample states in the entire state space and apply supervised learning to enforce the CLF conditions. They either assume knowledge of the dynamics or attempt to fit the entire state space's dynamics~\citep{dai2021lyapunov,zhou2022neural}, making it difficult to generalize to high-dimensional real-world scenarios where the number of required samples grows exponentially with the dimensions. In contrast, our algorithm can handle unknown dynamics and does not suffer from the curse of dimensionality caused by randomly sampling states in the entire state space.

\paragraph{Reinforcement Learning (RL) and Optimal Control.}
RL and optimal control have demonstrated strong capabilities on problems without knowledge of the dynamics, particularly in hybrid systems~\citep{schulman2015trust,schulman2017proximal,rosolia2019learning,so2023solving}. However, they struggle to provide results of the closed-loop system's stability. Additionally, hand-crafted reward functions and sample inefficiency impede the generalization of RL algorithms to complex environments. Recent works in the learning for control domain aim to solve this problem by incorporating certificate functions into the RL process~\citep{berkenkamp2017safe,chow2018lyapunov,cheng2019end,han2020actor,chang2021stabilizing,zhao2021model,qin2021density}. Nevertheless, they suffer from limitations such as handcrafted certificates~\citep{berkenkamp2017safe} and balancing CLF-related losses with RL losses \citep{han2020actor,chang2021stabilizing}. Unlike these approaches, our algorithm learns the CLF from scratch without prior knowledge of the dynamics or CLF candidates and provides a structured way to design loss functions rather than relying on reward functions. Furthermore, we can demonstrate the stability of the closed-loop system using the learned CLF.

\paragraph{Imitation Learning (IL).}
IL is another common tool for such problems. However, classical IL algorithms like behavioral cloning (BC)~\citep{pomerleau1991efficient,bain1995framework,schaal1999imitation,ross2011reduction}, inverse reinforcement learning (IRL)~\citep{abbeel2004apprenticeship,ramachandran2007bayesian,ziebart2008maximum}, and adversarial learning~\citep{ho2016generative,finn2016connection,fu2018learning} primarily focus on recovering the exact policy of the demonstrations, which may result in poor performance when given imperfect demonstrations.
A recent line of work on learning from suboptimal demonstrations offers a possible route to learn a policy that outperforms the demonstrations. However, they either require various types of manual supervision, such as rankings~\citep{brown2019extrapolating,zhang2021confidence}, weights of demonstrations~\citep{wu2019imitation,cao2021learning}, or have additional requirements on the environments~\citep{brown2020better,chen2021learning}, demonstrations~\citep{tangkaratt2020variational,tangkaratt2021robust}, or the training process~\citep{novoseller2020dueling}. Moreover, none of them can provide results about the stability of the learned policy. Another line of work learns certificates from demonstrations~\citep{ravanbakhsh2019learning,robey2020learning,chou2020uncertainty,boffi2021learning}, but they need additional assumptions such as known dynamics, perfect demonstrations, or the ability to query the demonstrator. In contrast, we leverage the CLF as natural guidance for the exploration process and do not require additional supervision or assumptions about the demonstrations or the environment to learn a stabilizing policy.

\section{Problem Setting and Preliminaries}\label{sec:preliminary}

We consider a discrete-time unknown dynamical system 
\begin{equation}\label{eq:dynamics}
    x(t+1)=h(x(t),u(t)),
\end{equation}
where $x(t)\in\mathcal{X}\subseteq\mathbb{R}^{n_x}$ represents the state at time step $t$, $u(t)\in\mathcal{U}\subseteq\mathbb{R}^{n_u}$ denotes the control input at time step $t$, and $h: \mathcal{X}\times\mathcal{U}\rightarrow\mathcal{X}$ is the \emph{unknown} dynamics. We assume the state space $\mathcal{X}$ to be compact and $h$ is Lipschitz continuous in both $(x, u)$ with constant $L_h>0$ following \cite{berkenkamp2017safe}. Our objective is to find a control policy $u(\cdot)=\pi(x(\cdot))$, where $\pi: \mathcal{X}\rightarrow\mathcal{U}$, such that from initial states $x(0)\in\mathcal{X}_0$, under the policy $\pi$, the closed-loop system asymptotically stabilizes at a goal $\xg \in \mathcal X$. In other words, $\forall x(t)$ starting from $x(0)\in\mathcal{X}_0$ and satisfying \Cref{eq:dynamics} with $u(t)=\pi(x(t))$, we have $\lim\limits_{t\rightarrow\infty}\left\|x(t)-\xg\right\|=0$.


We assume that we are given a set of $N_{D^0}$ demonstration transitions $\mathcal{D}^0=\{(x_i(t),u_i(t),x_i(t+1))\}_{i=1}^{N_{D^0}}$ generated by a demonstrator policy $\pi_d$ starting from states $x_i(0)\in\mathcal{X}_0$. In contrast to the assumption of stabilizing demonstrators in classical IL works~\citep{ho2016generative}, our demonstrations may not be generated by a stabilizing controller. 


Lyapunov theory is widely used to prove the stability of control systems, and CLFs offer further guidance for controller synthesis by defining a set of stabilizing control inputs at a given point in the state space. Following \cite{grune2017nonlinear}, we provide the definition of CLF for discrete system \eqref{eq:dynamics}.

\begin{definition}
    Consider the system \eqref{eq:dynamics} and a goal point $\xg$, and let $\mathcal{G}\subseteq\mathcal{X}$ be a subset of the state space such that $\xg\in\mathcal{G}$. A function $V:\mathcal{G}\rightarrow\mathbb{R}$ is called a CLF on $\mathcal{G}$ if there exists functions $\underline\alpha,\bar\alpha\in\mathcal{K}_\infty$\footnote{A function $\alpha: [0,+\infty)\rightarrow[0,+\infty)$ is said to be class-$\mathcal{K}_\infty$ if $\alpha$ is continuous, strictly increasing with $\alpha(0)=0$, and $\lim\limits_{s\rightarrow+\infty}\alpha(s)=+\infty$.} and a constant $\lambda\in(0,1)$ such that the following hold:
    \begin{subequations}\label{eq:clf-cond}
        \begin{align}
            & \underline\alpha(\|x-\xg\|)\leq V(x)\leq\bar\alpha(\|x-\xg\|) \label{eq:clf-pos-cond}\\
            & \inf_{u\in\mathcal U} V\left(h(x,u)\right)\leq\lambda V(x)\label{eq:clf-decrease-cond}
        \end{align}
    \end{subequations}
\end{definition}

The set of input $\mathcal{K}(x)=\left\{u\in\mathcal U\; |\; V\left(h(x,u)\right)\leq\lambda V(x)\right\}$ is called \textit{stablizing control inputs}. It is a standard result that if $\mathcal{G}$ is forward invariant\footnote{A set $\mathcal{G}$ is forward invariant if $x(0)\in\mathcal{G}\implies x(t)\in\mathcal{G}$ for all $t>0$.} and the goal point $\xg\in\mathcal{G}\subseteq\mathcal{X}$, then starting from initial set $\mathcal{X}_0\subseteq\mathcal{G}$, any control input $u \in \mathcal{K}$ will make the closed-loop system asymptotically stable at $\xg$ \citep{grune2017nonlinear}. The details are provided at \Cref{sec:app-clf}.

\section{{\algo} Algorithm}\label{sec:algo}

\noindent\textbf{Notation: }
Let $\tau$ be the current iteration step in our algorithm. A \emph{dataset} $\mathcal{D}^\tau$ is a set of $N_{D^\tau}$ transitions collected by the current iteration step. Recall that $\mathcal{D}^0$ is the set of given demonstrations. Let $\mathcal{D}_x^\tau\subset\mathcal{X}$ be the projection of $\mathcal{D}^\tau$ on the first state component of $\mathcal D^\tau$ defined as $\mathcal D_x^\tau = \{x \;|\; (x, \cdot, \cdot) \in \mathcal D^\tau\}$. A \emph{trusted tunnel} $\mathcal{H}^\tau$ is defined as the set of states at most $\gamma > 0$ distance away from the dataset $\mathcal{D}^\tau_x$, \ie, $\mathcal{H}^\tau =\{x\; |\; \exists x_i\in\mathcal{D}_x^\tau,\|x-x_i\|\leq\gamma\}$. 



We first outline our algorithm \textbf{LY}apunov-\textbf{G}uided \textbf{E}xploration (\algo), which learns to stabilize high-dimensional unknown systems, then provide a step-by-step explanation (see \Cref{sec:app-analysis} for detailed theoretical analysis).
Given a dataset of imperfect demonstrations $\mathcal{D}^0$ (which may not contain $\xg$), we firstly employ IL to learn an initial controller $\pi_\mathrm{init}$ (which may be an unstable controller). Then, during each iteration $\tau$,  we learn a CLF $V^\tau_\theta$ and the corresponding controller $\pi^\tau_\phi$ using samples from $\mathcal D^\tau_x$, and use the learned controller $\pi^\tau_\phi$ to generate closed-loop trajectories as additional data added to $\mathcal{D}^\tau$ to obtain $\mathcal D^{\tau+1}$. 
At each iteration, the learned CLF $V^\tau_\theta$ ensures that the newly collected trajectories get closer to $\xg$, as each $V^\tau_\theta$ is designed to reach the global minimum at $\xg$.
In this way, the trusted tunnel $\mathcal{H}^\tau$ grows and includes states closer and closer to $\xg$ with more iterations. 
Upon convergence, \algo\ returns a stabilizing controller $\pi^*$ that can be trusted within the converged trusted tunnel $\mathcal{H}^*$, where $\mathcal{H}^*$ contains all trajectories starting from $\mathcal{X}_0$.

\noindent\textbf{Learning from Demonstrations: } 
At iteration $0$, our approach starts with learning an initial policy from imperfect and potentially unstable demonstrations $\mathcal{D}^0$. We use existing IL methods (\eg, BC~\citep{bain1995framework}) to learn the initial policy $\pi_\mathrm{init}(\cdot)$. Since the IL algorithms directly recover the behavior of the demonstrations, the initial policy could lead to unstable behaviors. 

\noindent\textbf{Learning Local Dynamics: }
At iteration $\tau$, we learn a discrete NN approximation of the dynamics $\hat h^\tau_{\psi}(x,u): \mathcal{X}\times\mathcal{U}\rightarrow\mathcal{X}$ parameterized by $\psi$ using the transition dataset $\mathcal{D}^\tau$.
We train $\hat{h}^\tau_\psi$ by minimizing the mean square error loss from transitions sampled from the dataset $\mathcal{D}^\tau$ using Adam~\citep{kingma2014adam}. Let $\omega>0$ be the maximum error of the learned dynamics on the training data: $\|\hat{h}_\psi^\tau(x_i(t),u_i(t))-x_i(t+1)\|\leq\omega$ for all $(x_i(t),u_i(t),x_i(t+1))\in\mathcal{D}^\tau$. 


\noindent\textbf{Learning CLF and Controller: } 
After learning the local dynamics, we jointly learn the CLF and the control policy. Let the learned CLF in $\tau$-th iteration be 
$V^\tau_\theta(x)=x^\top S^\top S x+p_\mathrm{NN}(x)^\top p_\mathrm{NN}(x)$, 
where $S\in\mathbb{R}^{n_x\times n_x}$ is a matrix of parameters, $p_\mathrm{NN}:\mathbb{R}^{n_x}\rightarrow\mathbb{R}^{n_x}$ denotes an NN, and $\theta$ encompasses all parameters including $S$ and the ones in $p_\mathrm{NN}$. 
Clearly, $V^\tau_\theta(x)$ is positive by construction. The first term in $V^\tau_\theta$ models a quadratic function, which is commonly used to construct Lyapunov functions for linear systems. Here we use it to introduce a quadratic prior to the learned CLF. The second term models the CLF's non-quadratic residue.
We also parameterize the controller with an NN, \ie, $\pi^\tau_\phi:\mathcal{X}\rightarrow\mathcal{U}$ with parameters $\phi$. 
The learned controller aims to direct system trajectories toward the goal, guided by the learned CLF. Additionally, since the learned dynamics $\hat h^\tau_\psi$ may be invalid outside the trusted tunnel $\mathcal{H}^\tau$, $\pi^\tau_\phi$ should not drive the system too far from $\mathcal{H}^\tau$ within the simulation horizon. This can be achieved by penalizing the distance of the control inputs in consecutive iterations.
Overall, $V^\tau_\theta$ and $\pi^\tau_\phi$ can be synthesized by solving the following optimization problem:
\begin{subequations}\label{eq:clf-opt}
    \begin{align}
    \min_\phi \qquad & \mathbb{E}_{x\in\mathcal{H}^\tau}\|\pi^\tau_\phi(x)-\pi^{\tau-1}_\phi(x)\|\label{eq:ctrl-obj}\\
    \mathrm{s.t.}\qquad & V^\tau_\theta(\xg)=0, \qquad V^\tau_\theta(x)>0,\quad \forall x\in\mathcal{X}\setminus\xg \\
    & V^\tau_\theta\left(\hat{h}^\tau_\psi(x,\pi^\tau_\phi(x))\right) \leq\lambda V^\tau_\theta(x),\quad \forall x\in\mathcal{H}^\tau\label{eq:clf-nn-decrease-cond}
    \end{align}
\end{subequations}
where $\lambda\in (0, 1)$. 
We approximate the solution of problem~\eqref{eq:clf-opt} by self-supervised learning with loss $\mathcal{L}^\tau=\mathcal{L}_\mathrm{CLF}^\tau+\eta_\mathrm{ctrl}\mathcal{L}_\mathrm{ctrl}^\tau$, where $\mathcal{L}^\tau_\mathrm{CLF}$ and $\mathcal{L}^\tau_\mathrm{ctrl}$ correspond to the constraints and the objective of Problem \eqref{eq:clf-opt}, respectively, $\eta_\mathrm{ctrl}>0$ is a hyper-parameter that balances the weights of two losses, and
{\small\begin{align}
    \mathcal{L}^\tau_\mathrm{CLF}=&V^\tau_\theta(\xg)^2+\frac{1}{N}\sum_{x\in\mathcal{X}\setminus\xg}\left[\nu-V^\tau_\theta(x)\right]^+
    + \frac{\eta_\mathrm{pos}}{N}\sum_{x\in\mathcal{D}^\tau_x}\left[\epsilon+V^\tau_\theta\left(\hat{h}^\tau_\psi(x,\pi^\tau_\phi(x))\right)-\lambda V^\tau_\theta(x)\right]^+\label{eq:loss-clf}\\
    \mathcal{L}^\tau_\mathrm{ctrl}=&\frac{1}{N}\sum_{x\in\mathcal{D}^\tau_x}\|\pi^\tau_\phi(x)-\pi^{\tau-1}_\phi(x)\|^2,\label{eq:loss-ctrl}
\end{align}}\normalsize
where $[\cdot]^+=\max(\cdot, 0)$, $\eta_\mathrm{pos}>0$ is a hyper-parameter that balances each term in the loss, and $\epsilon>0$ is a hyper-parameter that ensures that the learned CLF $V^\tau_\theta$ satisfies condition \eqref{eq:clf-decrease-cond} even if the learned dynamics $\hat{h}^\tau_\psi$ has errors (Studied in \Cref{sec:ablation}).
In practice, it is hard to enforce $V^\tau_\theta(\xg)$ to be exactly $0$ using gradient-based methods. Therefore, in loss \eqref{eq:loss-clf} and \eqref{eq:loss-ctrl}, we instead train the NNs to make $V^\tau_\theta(\xg)<\nu$, and $V^\tau_\theta(x)\geq\nu$, for all $x\in\mathcal{X}\setminus\xg$, where $\nu$ is a small positive number.

\begin{figure*}[t]
    \centering
    \subfigure[Iteration 0]{\includegraphics[width=.245\textwidth]{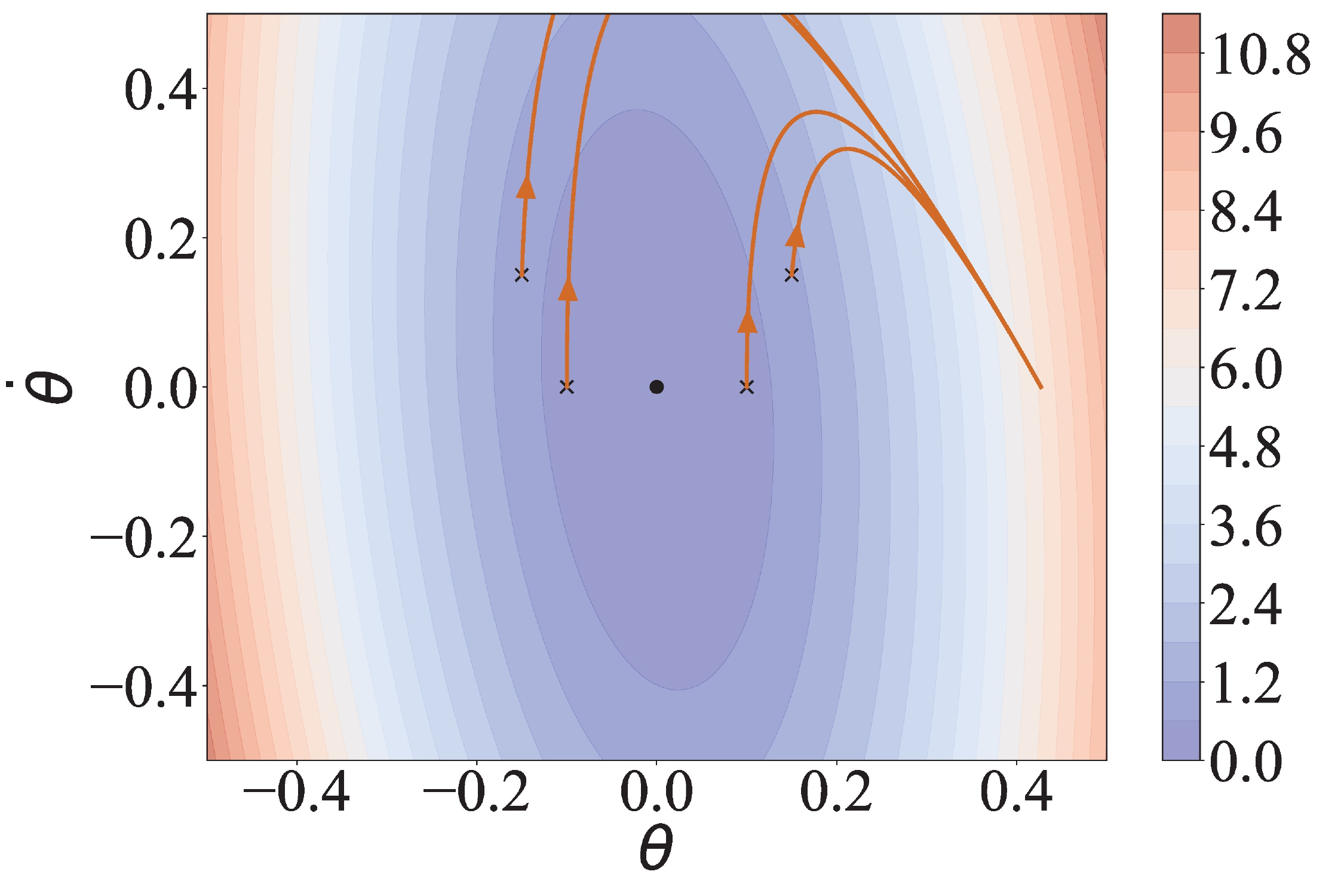}\label{fig:exploration-0}}
    \subfigure[Iteration 2]{\includegraphics[width=.245\textwidth]{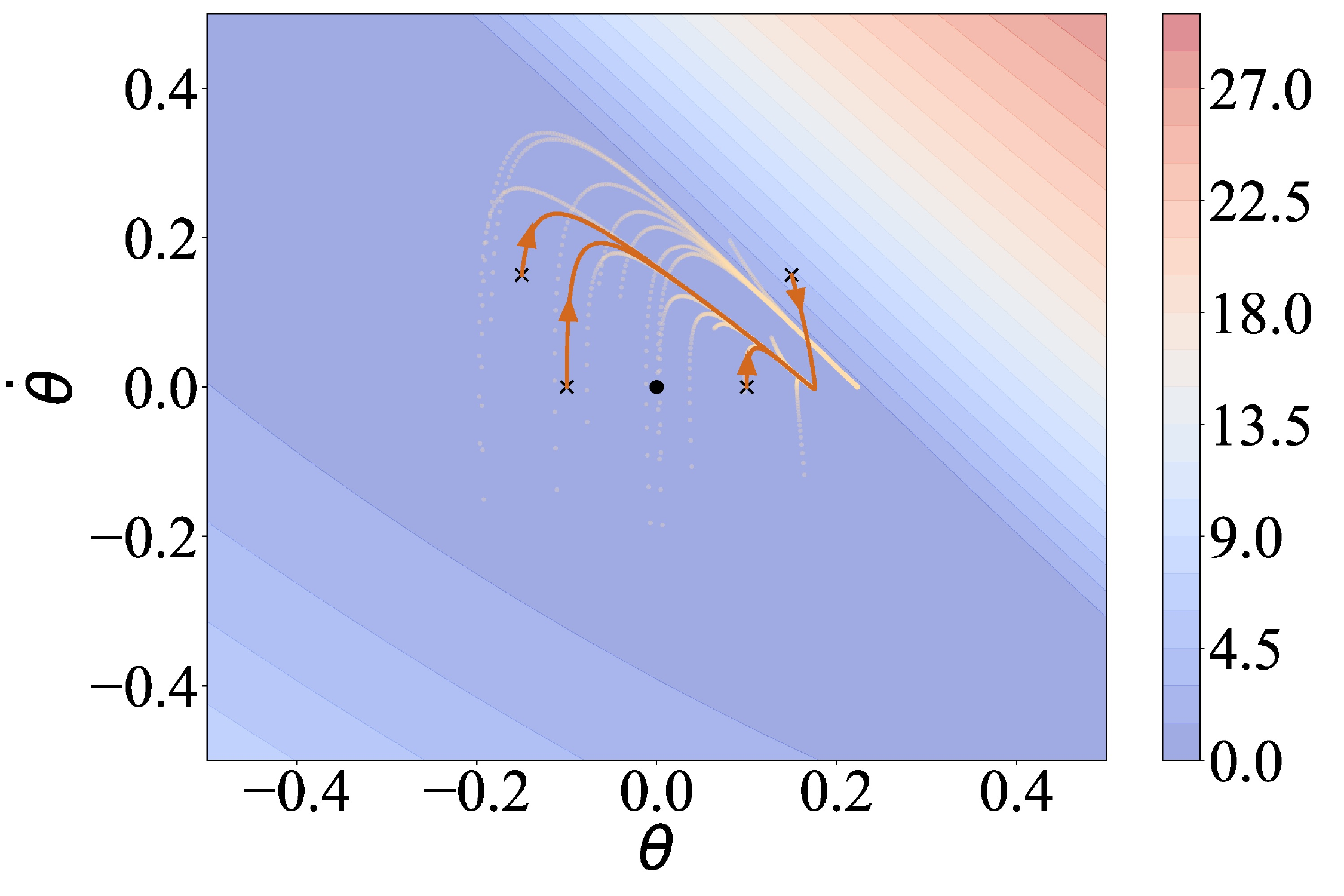}\label{fig:exploration-2}}
    \subfigure[Iteration 6]{\includegraphics[width=.245\textwidth]{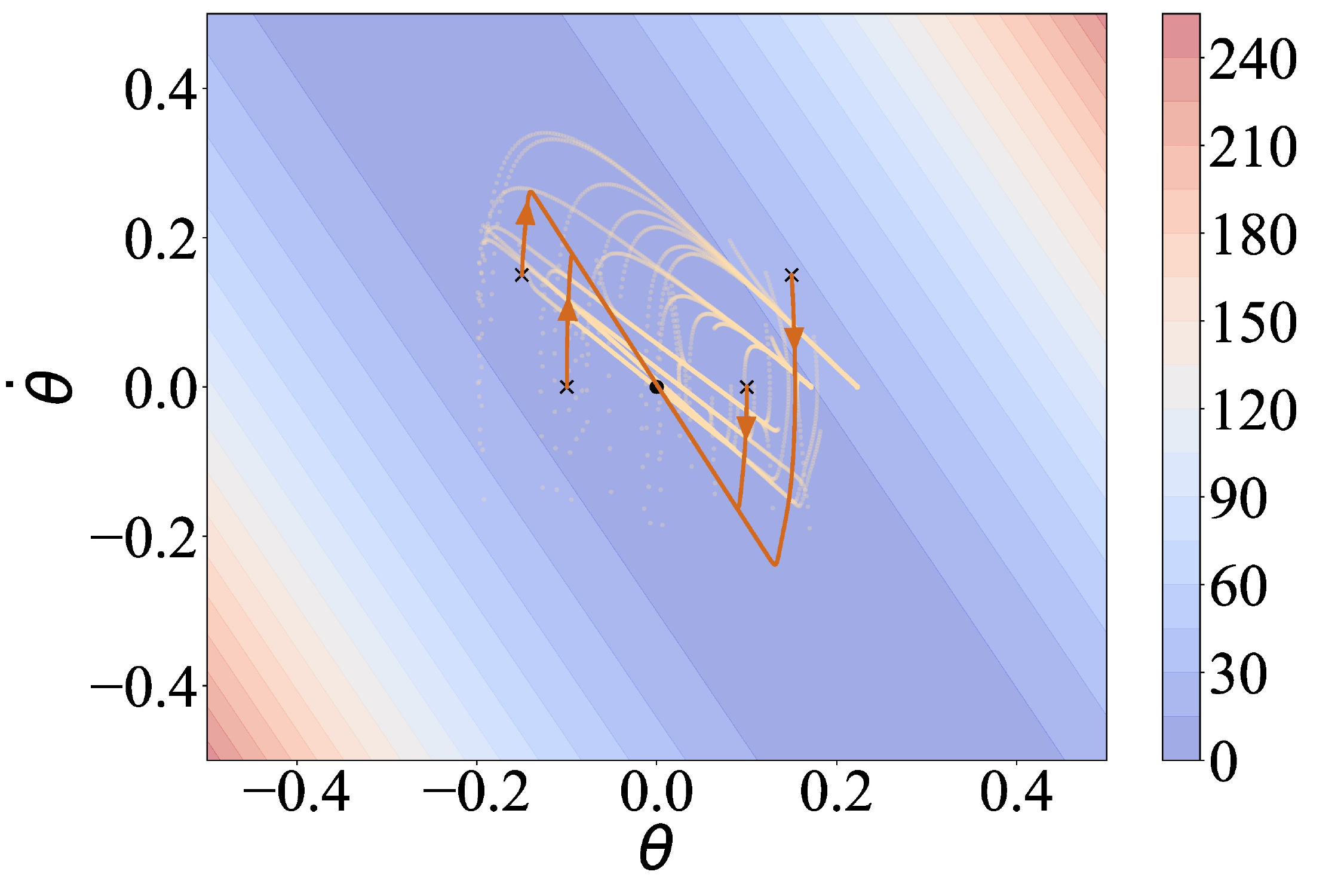}\label{fig:exploration-6}}
    \subfigure[Iteration 12]{\includegraphics[width=.245\textwidth]{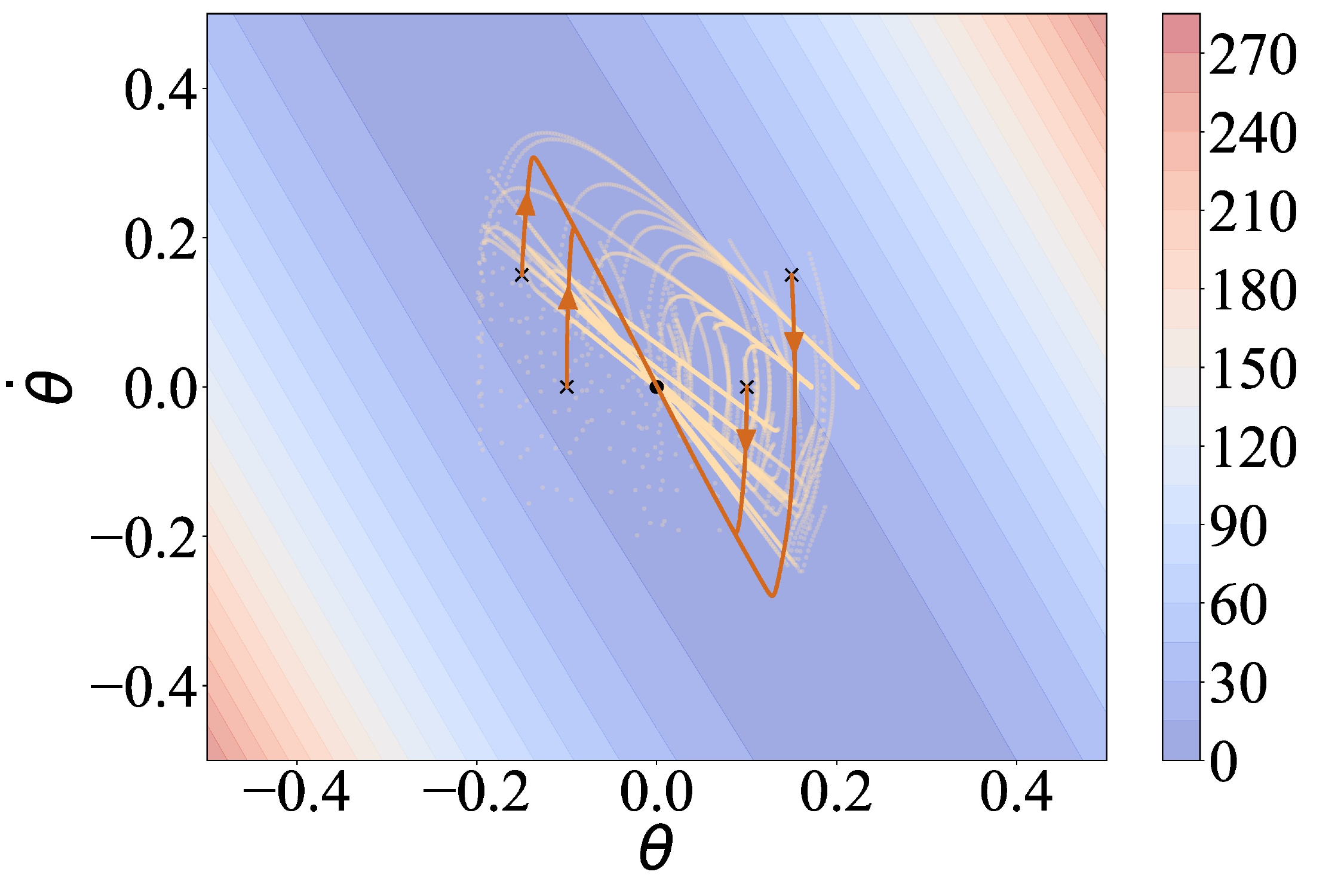}\label{fig:exploration-12}}
    \caption{Trajectories generated by \algo\ in different iterations in inverted pendulum environment. The counters show the learned CLF. The orange trajectories are generated by the learned controller in the current iteration. The light orange dots are the demonstrations generated in previous iterations, which also indicate the trusted tunnel $\mathcal{H}$. The black dot is the goal.}
    \label{fig:exploration}
\end{figure*}

\noindent\textbf{Exploration: }
After learning $\hat h^\tau_\psi$, $V^\tau_\theta$, and $\pi^\tau_\phi$ in the $\tau$-th iteration, we use the current controller $\pi_\phi^\tau$ starting from $x_0\in\mathcal{H}^\tau$ to collect $N_D$ more transitions $\Delta \mathcal D^\tau = \{(x_i(t),\pi^\tau_\phi(x_i(t)),x_i(t+1))\}_{i=1}^{N_D}$ and augment the dataset $\mathcal{D}^{\tau+1}=\mathcal{D}^\tau\cup\Delta \mathcal D^\tau$. During the exploration, the controller drives the system to states with lower CLF values. Consequently, by collecting more trajectories with the learned policy $\pi^\tau_\phi$, we expand the trusted tunnel $\mathcal{H}^\tau$ to states with lower CLF values. As the global minimum of the CLF is at $\xg$, the system trajectories used to construct the trusted tunnel keep getting closer to the goal in each iteration. 
The detailed algorithm and the convergence result are provided in \Cref{sec:app-analysis}.

We illustrate the \algo\ process in an inverted pendulum environment in \Cref{fig:exploration}. In \Cref{fig:exploration-0}, since the demonstrations are imperfect, our initial controller $\pi_\mathrm{init}$ cannot reach the goal. \Cref{fig:exploration-2} and \Cref{fig:exploration-6} demonstrate that after several iterations, the trusted tunnel $\mathcal{H}^\tau$ (the region around the light orange dots $\mathcal{D}^\tau_x$) is expanded towards the goal, and the closed-loop system progressively approaches the goal. Upon convergence (\Cref{fig:exploration-12}), our controller stabilizes the system at the goal. 

\section{Experiments}\label{sec:experiments}

We conduct experiments in six environments including Inverted Pendulum, Cart Pole, Cart II Pole, Neural Lander~\citep{shi2019neural}, and the F-16 jet~\citep{heidlauf2018verification} with two tasks: ground collision avoidance (GCA) and tracking. 
To simulate the imperfect demonstrations, We collect imperfect and potentially unstable demonstrations for each environment using nominal controllers such as LQR~\citep{kwakernaak1969linear} for Inverted Pendulum, PID~\citep{bennett1996brief} for Neural Lander and the F-16, and RL controllers for Cart Pole and Cart II Pole. In the first four environments, we collect $20$ trajectories as demonstrations, and in the two F-16 environments, we collect $40$ trajectories. We aim to answer the following questions in the experiments: 
1) How does \algo\ compare with other algorithms for the case of stabilizing the system at goal? 2) What is the sampling efficiency of LYGE as compared to other baseline methods? 3) Can \algo\ be used for systems with high dimensions?
We provide additional implementation details and more results in \Cref{sec:app-experiments}. 

\subsection{Baselines}
We compare \algo\ with the most relevant works in our problem setting including RL algorithm PPO~\citep{schulman2017proximal}, standard IL algorithm AIRL~\citep{fu2018learning}, and algorithms of IL from suboptimal demonstrations D-REX~\citep{brown2020better} and SSRR~\citep{chen2021learning}. For PPO, we hand-craft reward functions following standard practices in reward function design~\citep{gym} for stabilization problems. For AIRL, D-REX, and SSRR, we let them learn directly from the demonstrations. 
For a fair comparison, we initialize all the algorithms with the BC policy. 
Compared with these baselines, our algorithm has one additional assumption that we know the desired goal point. However, we believe that the comparison is still fair because we do not need the reward function or optimal demonstrations. While \algo\ needs more information than D-REX and SSRR, the performance increment from \algo\ is large enough that it is worth the additional information. 

We also design two other baselines, namely, CLF-sparse and CLF-dense, to show the efficacy of the Lyapunov-guided exploration compared with learning the dynamics model from random samples \citep{dai2021lyapunov,zhou2022neural}. 
These methods require the stronger assumption of being able to sample from arbitrary states in the state space, which is unrealistic when performing experiments outside of simulations. 
For both algorithms, we follow the same training process as \algo, but instead of collecting samples of transitions by applying Lyapunov-guided exploration, we directly sample states and actions from the entire state-action space to obtain the training set of the dynamics. We note that CLF-sparse uses the same number of samples as \algo, while CLF-dense uses the same number of samples as the RL and IL algorithms, which is much more than \algo\ (see \Cref{tab:n-sample}). 

\subsection{Environments}

\noindent\textbf{Inverted Pendulum: }
Inverted Pendulum (Inv Pendulum) is a standard benchmark for testing control algorithms. 
To simulate imperfect demonstrations, the demonstration data is collected by an LQR controller with noise that leads to oscillation of the pendulum around a point away from the goal. 

\noindent\textbf{Cart Pole and Cart II Pole: }
Both environments are standard RL benchmarks introduced in Open AI Gym~\citep{gym}\footnote{Original names are InvertedPendulum and InvertedDoublePendulum in Mujoco environments}. 
We collect demonstrations using an RL policy that has not fully converged, which makes the cart pole oscillate at a location away from the goal. 

\noindent\textbf{Neural Lander: }
Neural lander~\citep{shi2019neural} is a widely used benchmark for systems with unknown disturbances. The state space has 6 dimensions including the 3-dimensional position and the 3-dimensional velocity, with 3-dimensional linear acceleration as the control input. The goal is to stabilize the neural lander at a user-defined point near the ground. The dynamics are modeled by a neural network trained to approximate the aerodynamics ground effect, which is highly nonlinear and unknown. We use a PID controller to collect demonstrations, which makes the neural lander oscillate and cannot reach the goal point because of the strong ground effect. 

\noindent\textbf{F-16: }
The F-16 model \citep{heidlauf2018verification,djeumou2021fly} is a high-fidelity fixed-wing fighter model, with 16D state space and 4D control inputs. The dynamics are complex and cannot be described as ODEs. Instead, the authors of the F-16 model provide many lookup tables to describe the dynamics.  We solve the two tasks discussed in the original papers: ground collision avoidance (GCA) and waypoint tracking. In GCA, the F-16 starts at an initial condition with the head pointing at the ground. The goal is to pull up the aircraft as soon as possible, avoid colliding with the ground, and fly at a height between $800$ ft and $1200$ ft. In the tracking task, the goal of the aircraft is to reach a user-defined waypoint.  The original model provides PID controllers. However, the original PID controller cannot pull up the aircraft early enough or cannot track the waypoint precisely. 

\begin{figure*}[t]
    \centering
    \subfigure[]{\includegraphics[width=.15\textwidth]{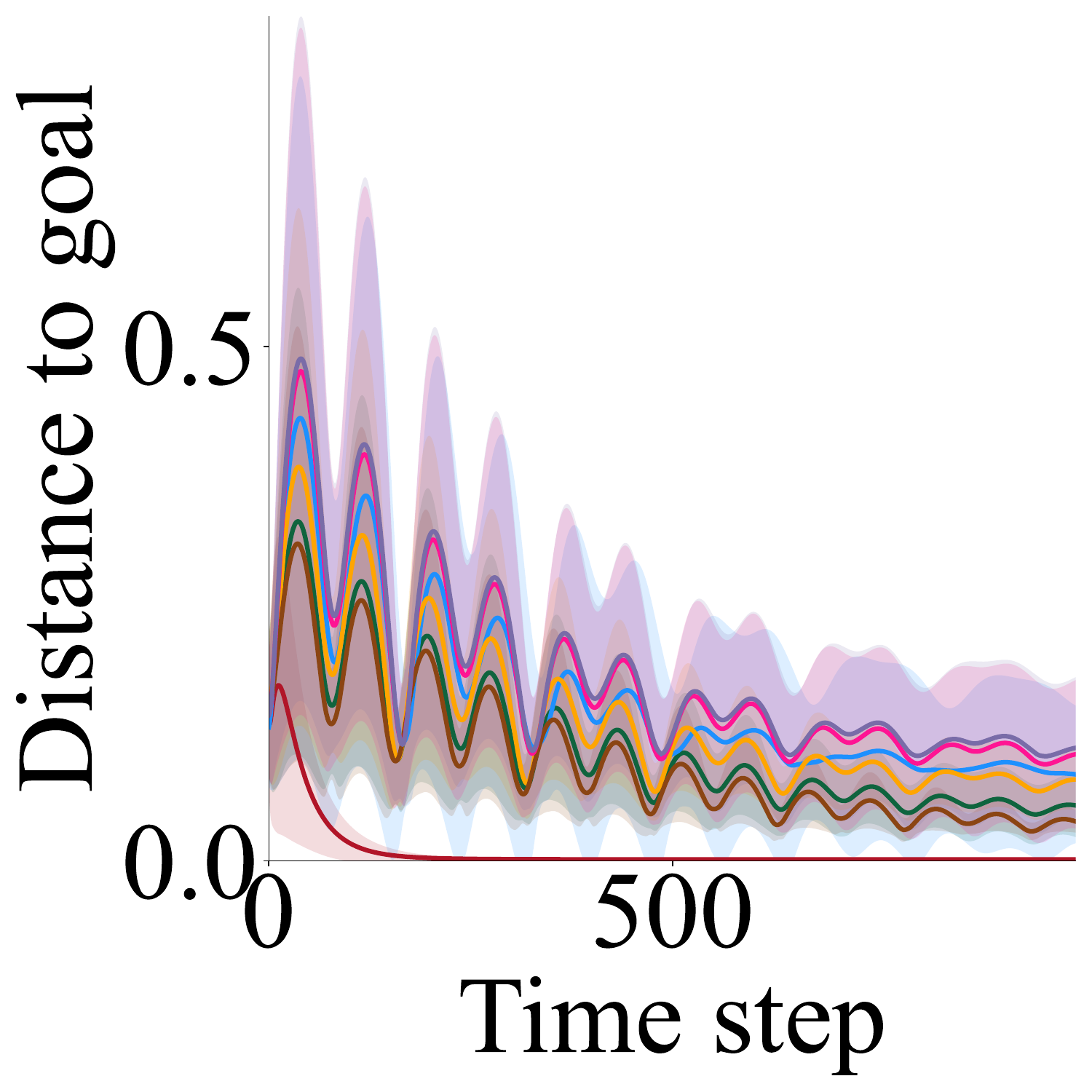}
    \label{fig:distance-invertedpendulum}}
    \subfigure[]{\includegraphics[width=.15\textwidth]{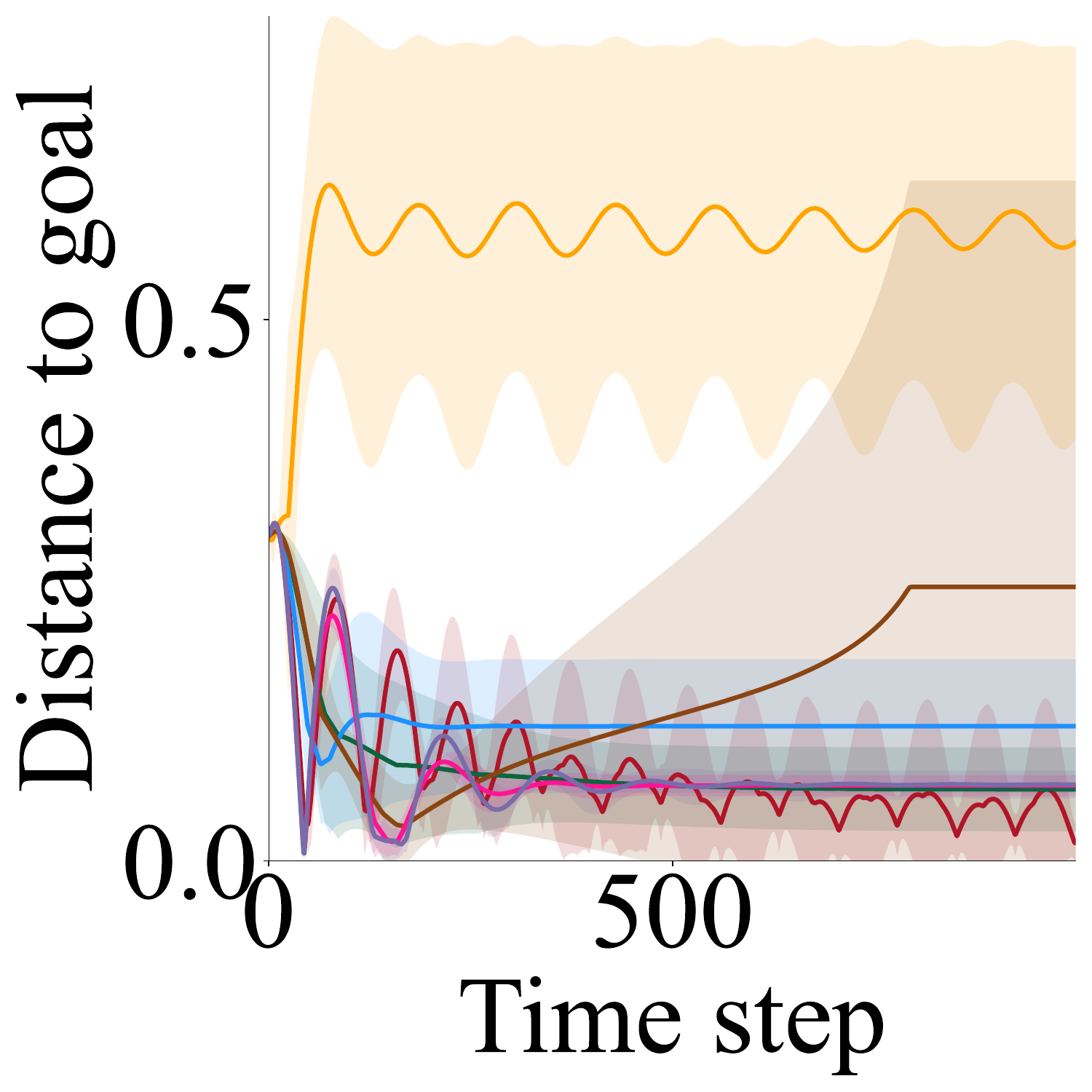}}
    \subfigure[]{\includegraphics[width=.15\textwidth]{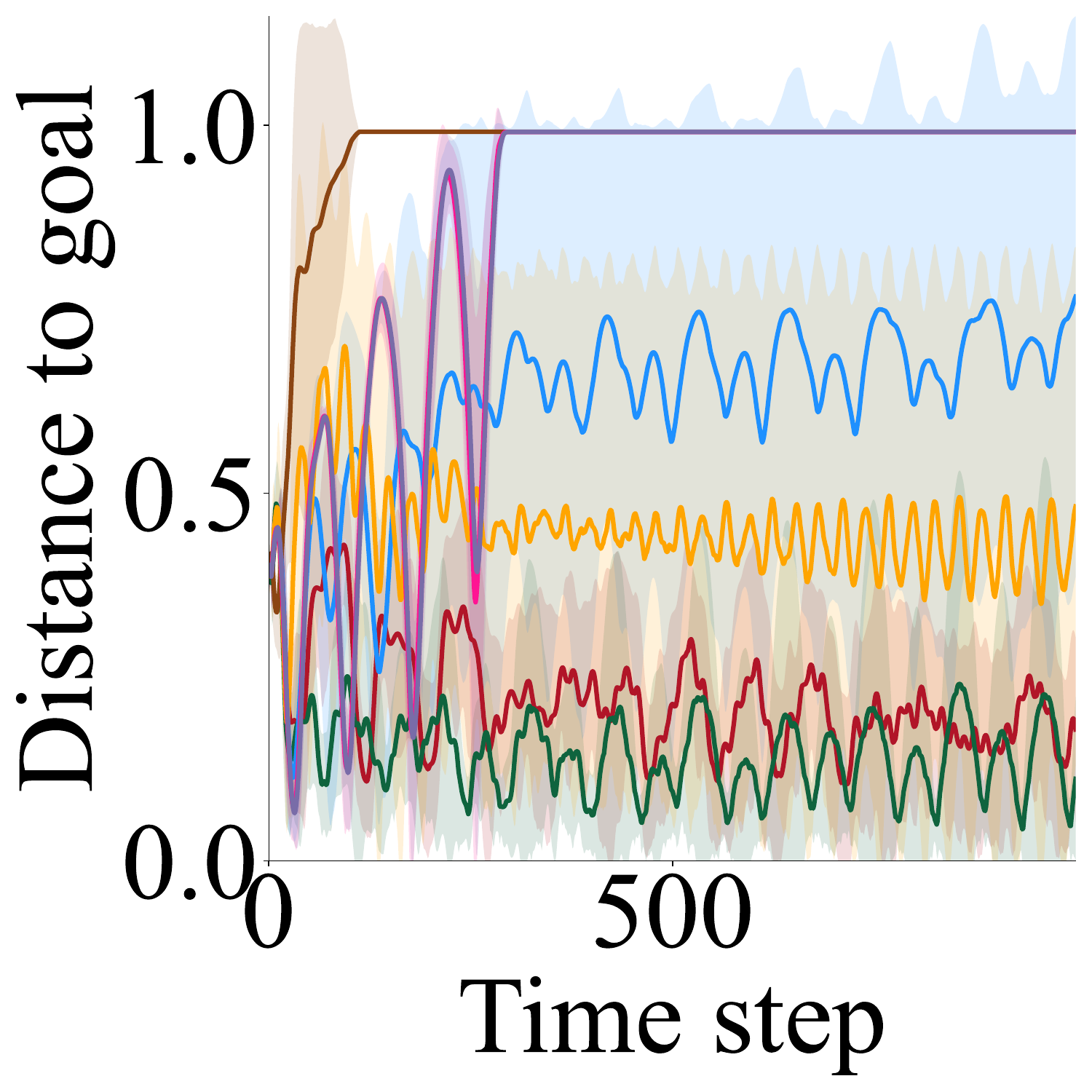}}
    \subfigure[]{\includegraphics[width=.15\textwidth]{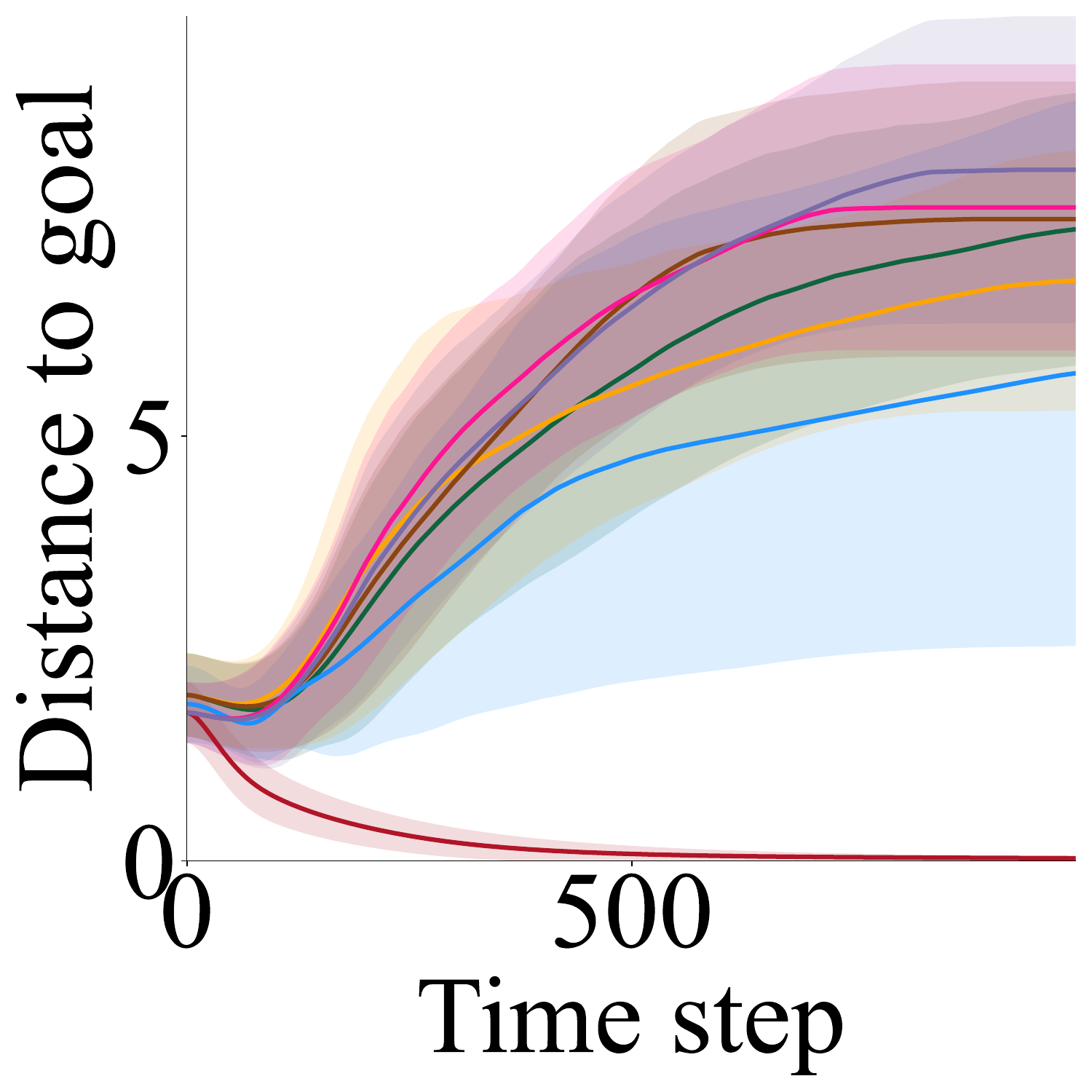}\label{fig:distance-neurallander}}
    \subfigure[]{\includegraphics[width=.15\textwidth]{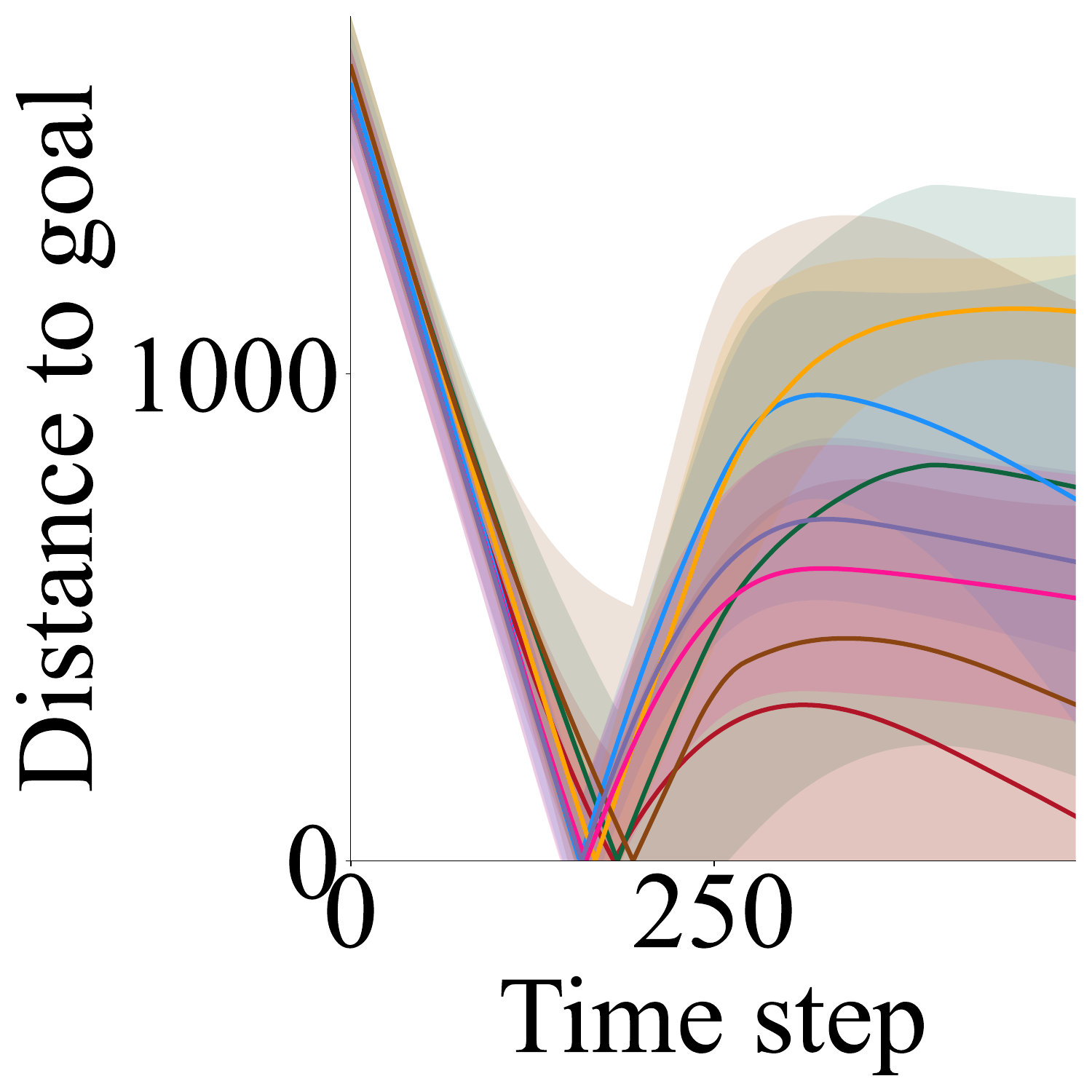}\label{fig:distance-f16gcas}}
    \subfigure[]{\includegraphics[width=.15\textwidth]{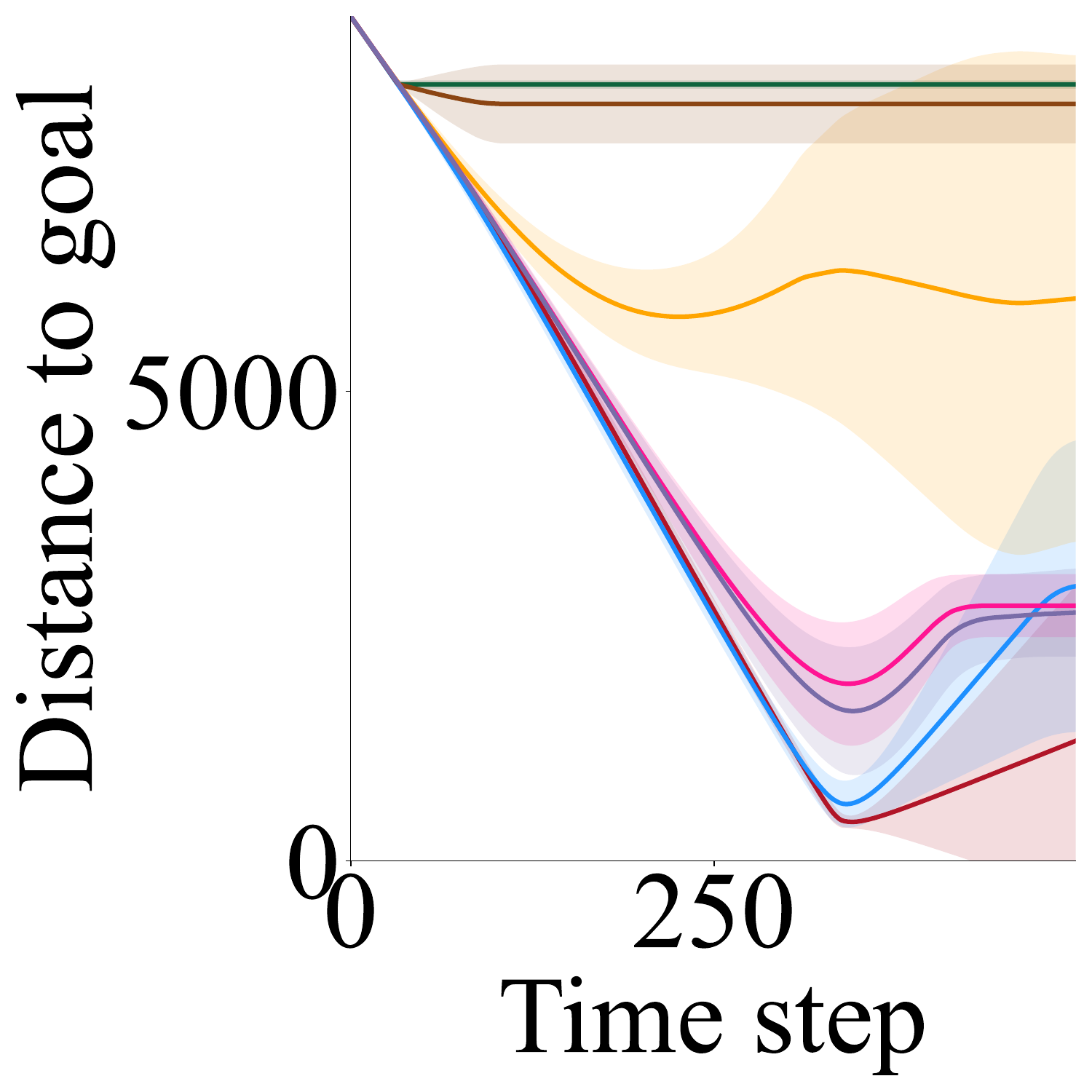}\label{fig:distance-f16tracking}}
    \includegraphics[width=.8\textwidth]{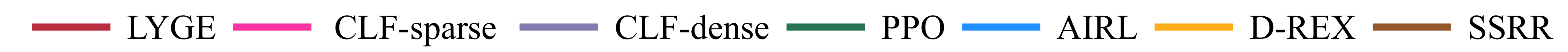}
    \caption{The distance to the goal w.r.t. time step of \algo\ and the baselines: (a) Inv Pendulum; (b) Cart Pole; (c) Cart II Pole; (d) Neural Lander; (e) F-16 GCA; (f) F-16 Tracking. The solid lines show the mean distance while the shaded regions show the standard deviation. Note that the curve corresponding to CLF-sparse almost overlaps with the curve corresponding to CLF-dense and so, the curve for CLF-sparse might not be visible in some of the plots.}
    \label{fig:distance-t}
\end{figure*}

\subsection{Results and Discussions}
\label{sec:experiment-results}

We train each algorithm in each environment $5$ times with different random seeds and test the converged controllers $20$ times each. In \Cref{fig:distance-t} we show the distance to the goal w.r.t. the simulation time steps. Note that we consider the goal height in the F-16 GCA environment and the goal position in the F-16 Tracking environment. We can observe that \algo\ achieves comparable or better results in terms of stabilizing the systems in all environments, especially in high-dimensional complex systems like Neural Lander and F-16. In Neural Lander, none of the baselines can reach the goal as they prioritize flying at a higher altitude to avoid collisions. In F-16 environments, the baselines either pull up the aircraft too late or have large tracking errors. 
\algo\, however, can finish these tasks perfectly. 

Specifically, compared with PPO, \algo\ achieves comparable results in simpler environments like Cart Pole and Cart II Pole, and behaves much better in complex environments like Neural Lander and F-16. 
This is due to the fact that PPO is a policy gradient method that approximates the solution of the Bellman equation, but getting an accurate approximation is hard for high-dimensional systems. 
In PPO the reward function can only describe ``where the goal is", but our learned CLF can explicitly tell the system ``how to reach the goal". 
Compared with AIRL, which learns the same policy as the demonstrations and cannot make improvements, \algo\ learns a policy that is much better than the demonstrations. Compared with ranking-guided algorithms D-REX and SSRR, \algo\ behaves better because ranking guidances provide less information than our CLF guidance, and are not designed to explicitly encode the objective of reaching the goal.
Compared with CLF-sparse and CLF-dense, \algo\ outperforms them because Lyapunov-guided exploration provides a more effective way to sample in the state space to learn the dynamics rather than random sampling. Although CLF-dense uses many more samples than CLF-sparse and \algo\, its performance does not significantly improve. This shows that na\"ively increasing the number of samples without guidance does little to improve the accuracy of the learned dynamics in high-dimensional spaces. 

In \Cref{tab:n-sample}, we show the number of samples used in the training. It is shown that \algo\ needs $68\%$ to $95\%$ fewer samples than other algorithms. This indicates that our Lyapunov-guided exploration explores only the necessary regions in the state space and thus improving the sample efficiency. 

\begin{table}[t]
    \centering
    \footnotesize
    \caption{Number of samples (k) used for training in different environments.}
    \begin{tabular}{c|cccccc}
        \toprule
        Algorithm & Inv Pendulum & Cart Pole & Cart II Pole & Neural Lander & F-16 GCA & F-16 Tracking \\
        \midrule
        \algo & 160 & 160 & 480 & 240 & 480 & 560 \\
        Baselines & 2000 & 500 & 5000 & 5000 & 2000 & 10000 \\
        \bottomrule
    \end{tabular}
    \label{tab:n-sample}
\end{table}

\begin{figure*}[t!]
    \centering
    \subfigure[]{\includegraphics[width=.19\textwidth]{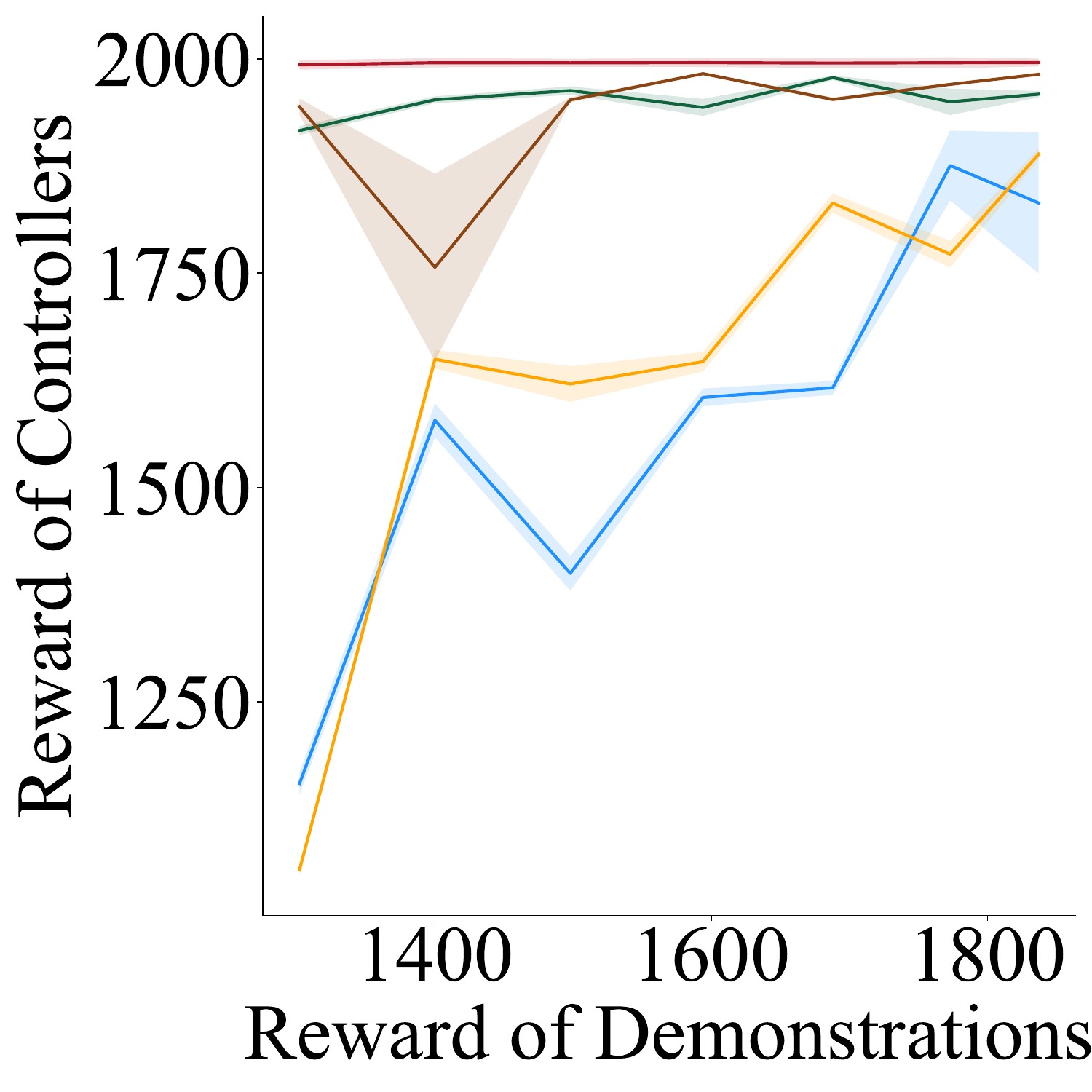}
    \label{fig:vary-opt}}
    \subfigure[]{\includegraphics[width=.19\textwidth]{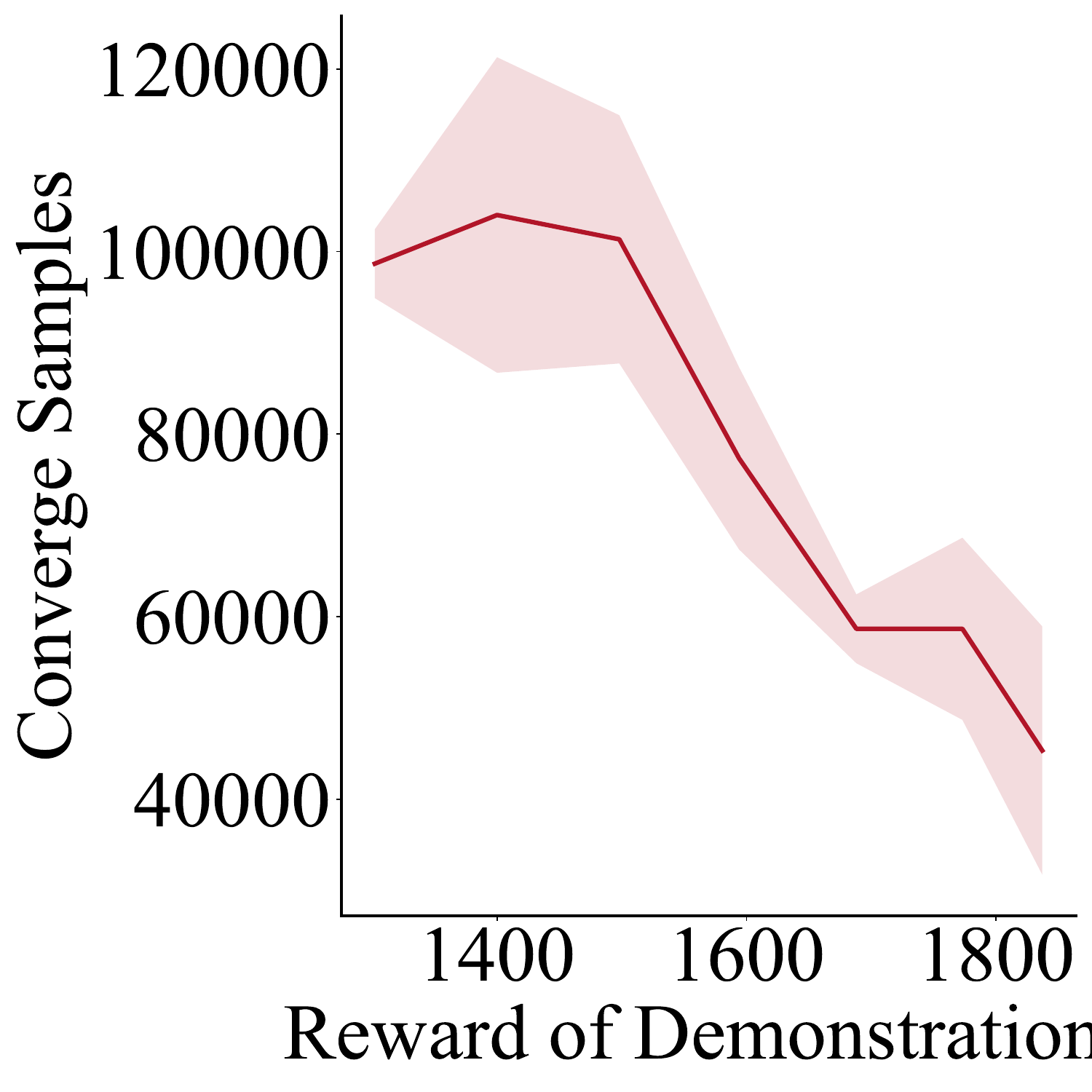}\label{fig:conv-time}}
    \subfigure[]{\includegraphics[width=.19\textwidth]{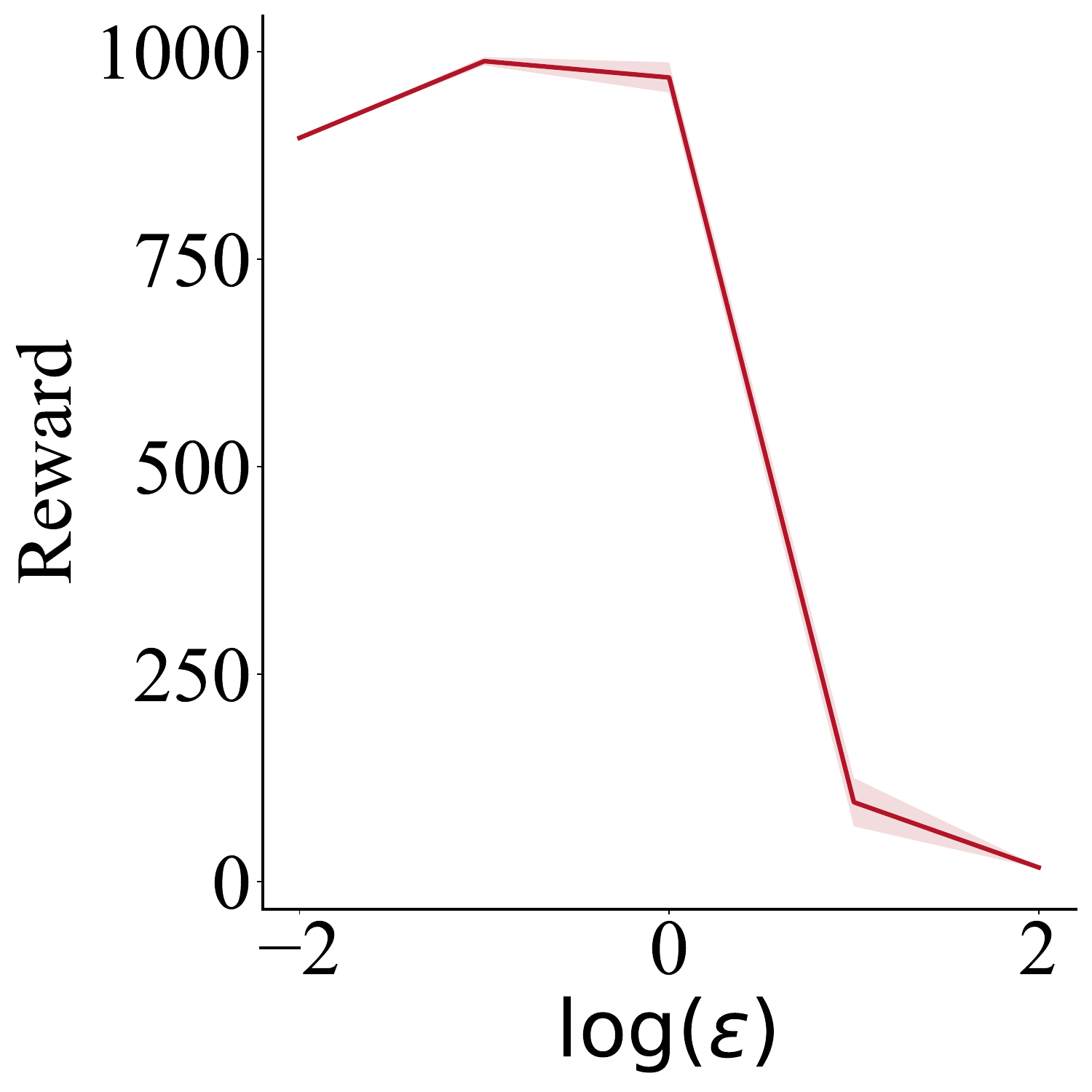}\label{fig:eps}}
    \subfigure[]{\includegraphics[width=.19\textwidth]{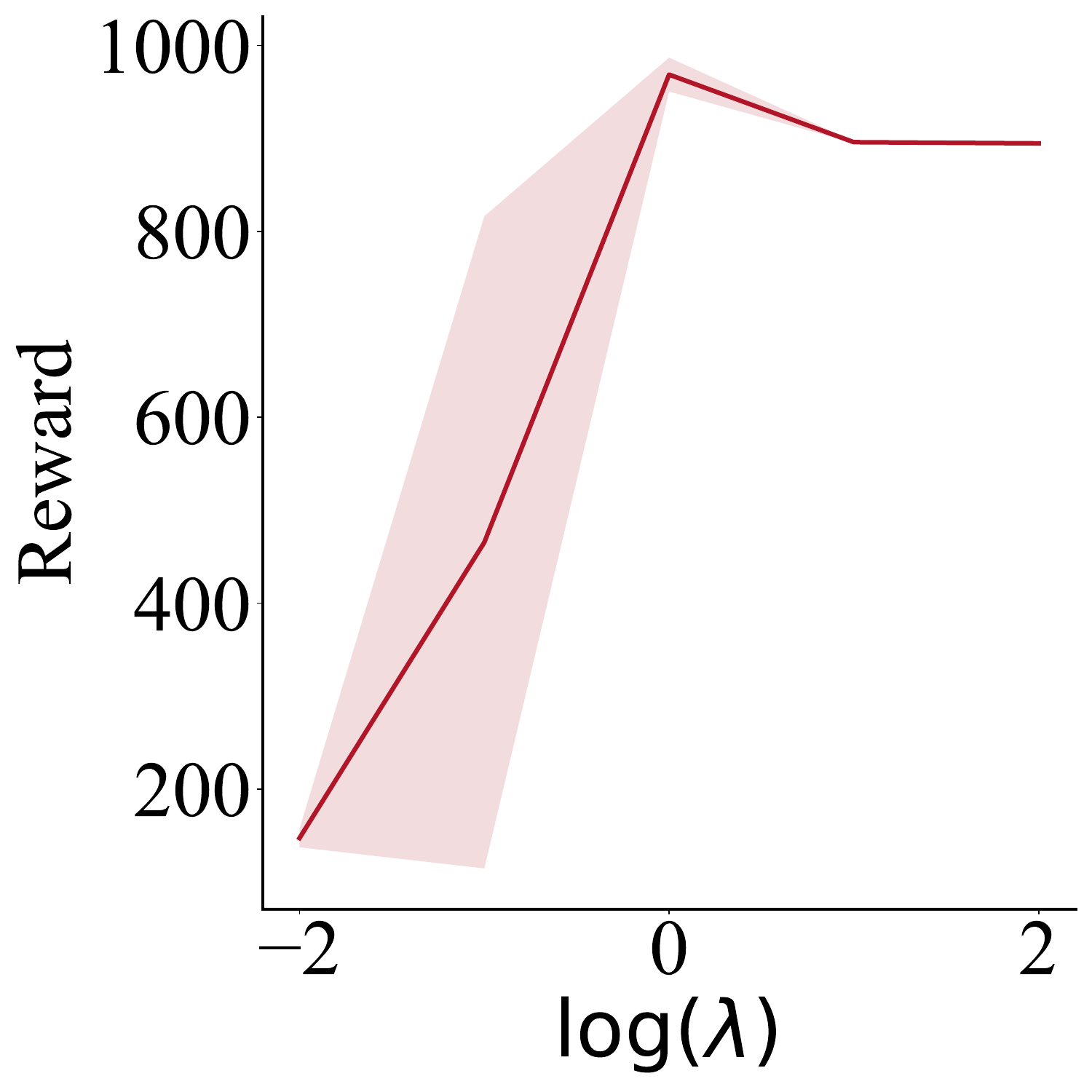}\label{fig:lambda}}
    \subfigure[]{\includegraphics[width=.19\textwidth]{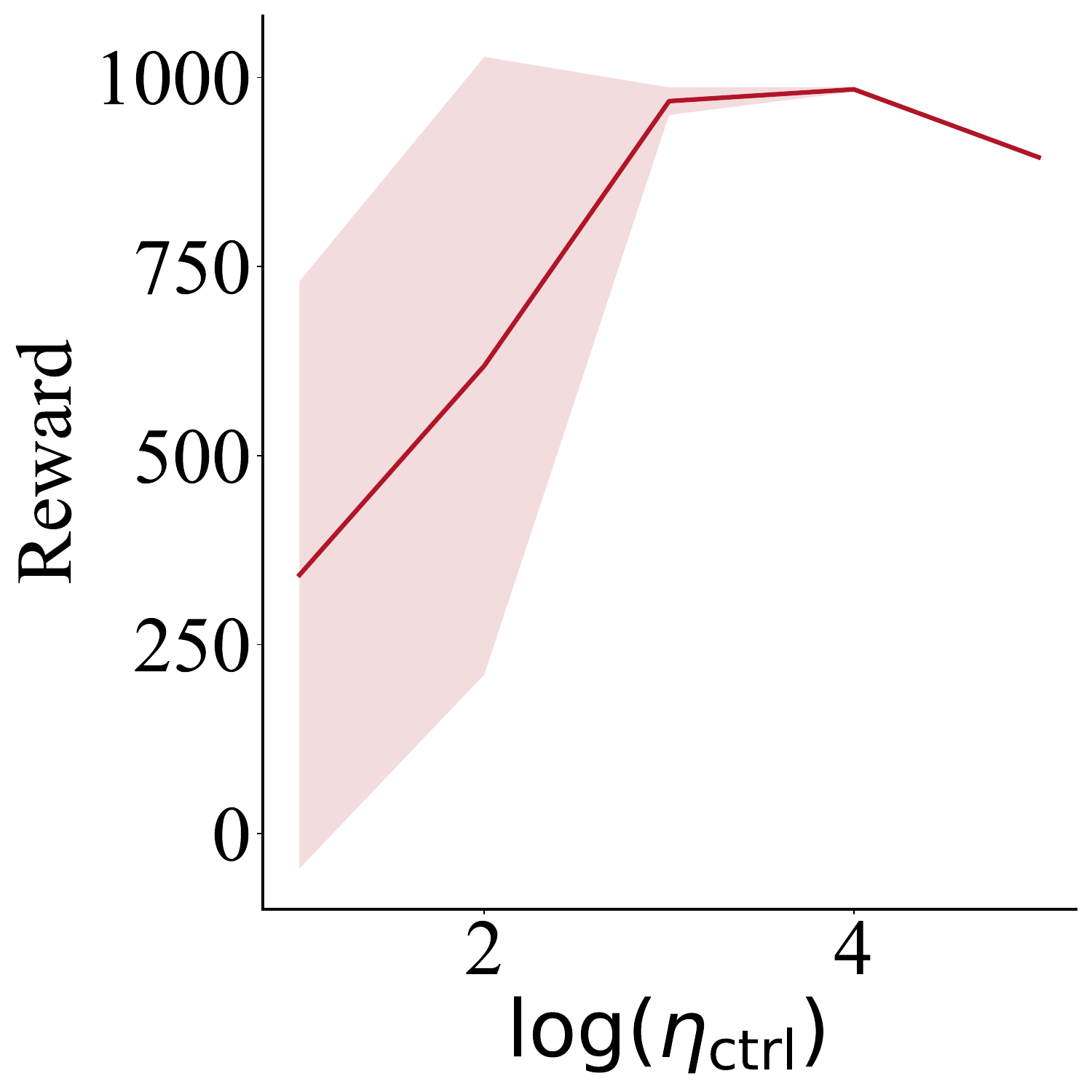}\label{fig:ctrl-weight}}
    \includegraphics[width=.8\textwidth]{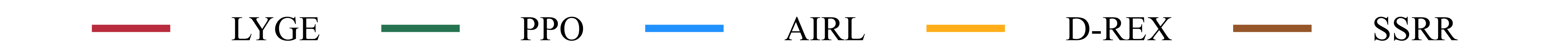}
    \caption{Ablation studies. (a) The converged reward w.r.t. demonstration rewards. (b) The number of samples used before \algo\ converges w.r.t.  demonstration rewards. (c) The converged reward w.r.t. $\epsilon$. (d) The converged reward w.r.t. $\lambda$. (e) The converged reward w.r.t. $\eta_\mathrm{ctrl}$. } 
    \label{fig:ablations}
\end{figure*}

\subsection{Ablation Studies}\label{sec:ablation}

We first show the influence of the optimality of the demonstrations. We use the Inverted Pendulum environment and collect demonstrations with different levels of optimality by varying the distance between the actual goal state and the target state of the demonstrator controller. For each optimality level, we train each algorithm $3$ times with different random seeds and test each converged controller $20$ times. We omit the experiments on CLF-sparse and CLF-dense since their performances are not related to the demonstrations. The results are shown in \Cref{fig:vary-opt}. We observe that \algo\ outperforms other algorithms with demonstrations at different levels of optimality. In addition, we do not observe a significant performance drop of \algo\ as the demonstrations become worse. This is because the quality of the demonstrations only influences the convergence speed of \algo, instead of the controller. PPO's behavior is also consistent since the reward function remains unchanged, but it consistently performs worse than \algo. IL algorithms, however, have a significant performance drop as the demonstrations get worse because they all depend on the quality of the demonstrations.

The quality of the demonstrations can also influence the convergence speed of \algo. In the Inv Pendulum environment, we define the algorithm converges when the reward is larger than $1980$. We plot the number of samples used for convergence w.r.t. the reward of demonstrations in \Cref{fig:conv-time}. It is shown that the better the given demonstrations, the fewer samples \algo\ needs for convergence. 

We also do ablations to investigate the influence of the hyperparameter $\epsilon$. We test \algo\ in the Cart Pole environment and change $\epsilon$ from $0.01$ to $100$. The results are shown in \Cref{fig:eps}, which demonstrates that \algo\ works well when $\epsilon$ is large enough to satisfy the condition introduced in \Cref{thm:clf-converge} in \Cref{sec:app-analysis}, and small enough that it does not make the training very hard. 

The $\lambda$ in \Cref{eq:loss-clf} is another hyperparameter that controls the convergence rate of the learned policy. We test \algo\ in the Cart Pole environment and change $\lambda$ from $0.01$ to $100$. The results are shown in \Cref{fig:lambda}, demonstrating that \algo\ works well with $\lambda$ in some range. If $\lambda$ is too small, the convergence rate is too small and the system cannot be stabilized within the simulation time steps. 
If $\lambda$ is too large, loss \eqref{eq:loss-clf} becomes too hard to converge, so the controller cannot stabilize the system. 

During training, $\eta_\mathrm{ctrl}$ controls the expansion of the trusted tunnel. We do ablations for $\eta_\mathrm{ctrl}$ in the Cart Pole environment and change $\eta_\mathrm{ctrl}$ from $10$ to $10^5$. The results are shown in \Cref{fig:ctrl-weight}, which suggests that \algo\ can work well when the value of $\eta_\mathrm{ctrl}$ is within a certain range. If $\eta_\mathrm{ctrl}$ is too small, the system leaves the trusted tunnel too early instead of smoothly expanding the trusted tunnel. If $\eta_\mathrm{ctrl}$ is too large, exploration is strongly discouraged and the trusted tunnel expands too slowly.


\begin{wrapfigure}{r}{0.49\textwidth}
    \centering
    \subfigure{\includegraphics[width=.2\textwidth]{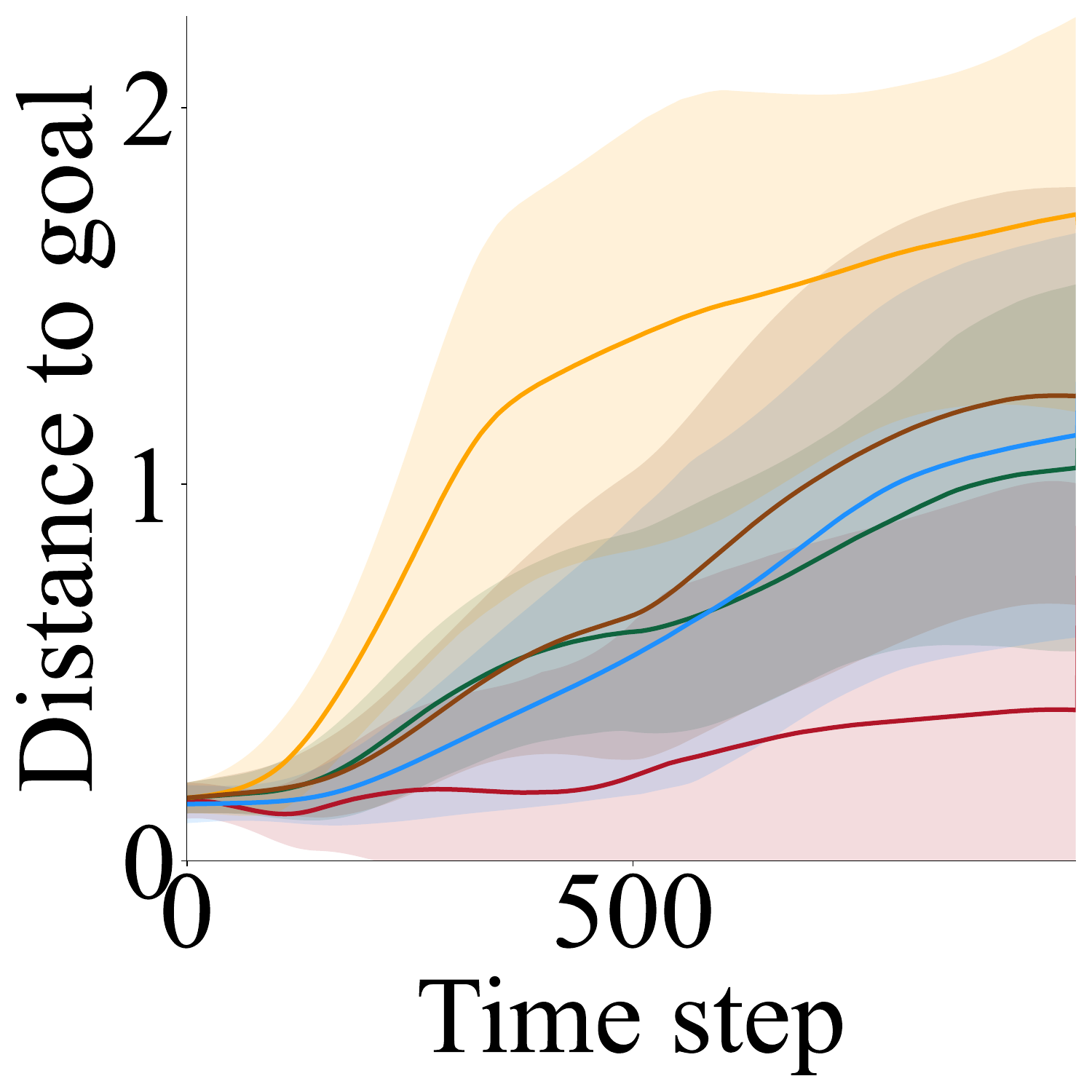}}
    \subfigure{\includegraphics[width=.15\textwidth]{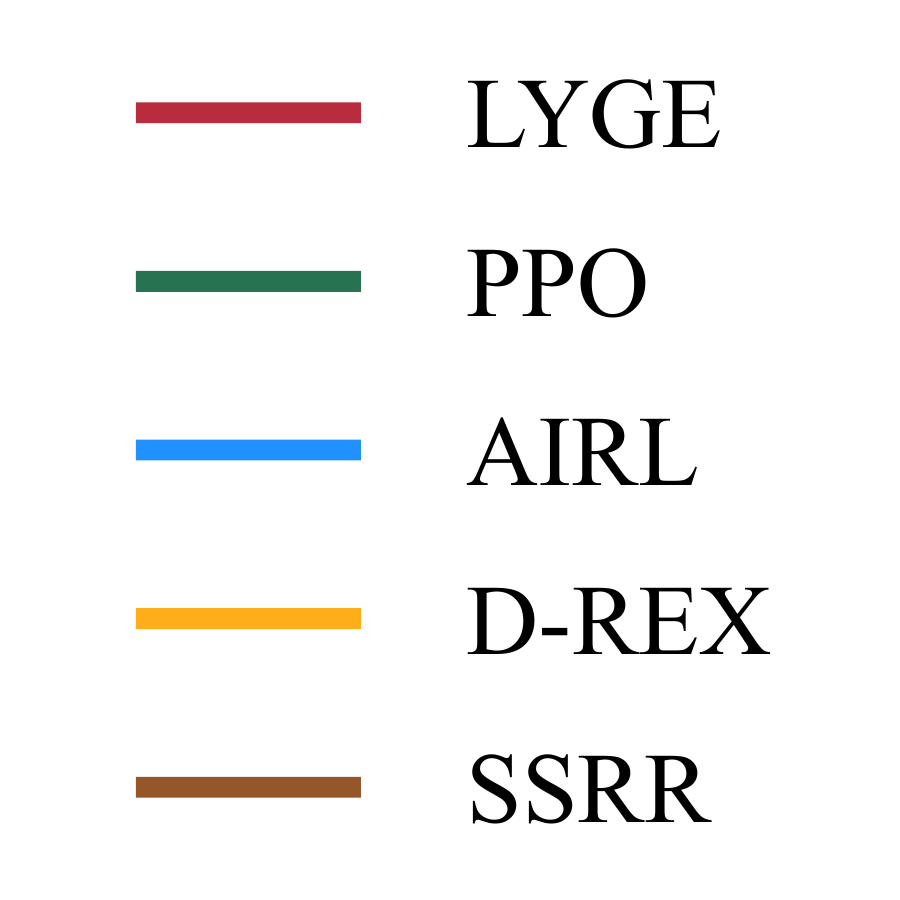}}
    \caption{The tracking error w.r.t. time step in Dubins car path tracking environment.}
    \label{fig:dubins-car}
\end{wrapfigure}

\section{Extensions}\label{sec:extensions}

Our framework is general since it can be directly applied to learn controllers guided by other certificates in environments with unknown dynamics. For example, Control Contraction Metrics (CCMs) are differential analogs of Lyapunov functions (proving stability in the tangent state space). A metric is called CCM if it satisfies a list of conditions, and a valid CCM can guarantee the convergence of tracking controllers. The similarity between CCM and CLF suggests that tracking controllers can also be learned with a similar framework. We change the learning CLF part to learning CCM algorithms~\citep{sun2021learning,chou2021model}, and use the same framework to learn the local dynamics, a tracking controller, and a CCM to guide the exploration of the tracking controller. We test this modification in a Dubins car path tracking environment. As shown in \Cref{fig:dubins-car}, our algorithm outperforms the baselines. We explain the details in \Cref{sec:extension-detail}.


\section{Conclusion}\label{sec:conclusion}

We propose a general learning framework, \algo, for learning stabilizing controllers in high-dimensional environments with unknown dynamics. \algo\ iteratively fits the local dynamics to form a trusted tunnel, and learns a control policy with a CLF to guide the exploration and expand the trusted tunnel toward the goal. Upon convergence, the controller stabilizes the closed-loop system at the goal point. We provide experimental results to demonstrate that \algo\ performs comparably or better than the baseline RL, IL, and neural certificate methods. We also demonstrate that the same framework can be applied to learn other certificates in environments with unknown dynamics. 

Our framework has a few \textbf{limitations}: we require a set of demonstrations for initialization in high-dimensional systems, although they can be potentially imperfect. Without them, \algo\ may take a long time to expand the trusted tunnel to the goal. In addition, we need Lipschitz assumptions for the dynamics to derive the theoretical results. If the dynamics do not satisfy the Lipschitz assumptions, the learned CLF might be invalid even inside the trusted tunnel. 
Moreover, although we observe the convergence of the loss terms in training on all our case studies, it is hard to guarantee that the loss always converges on any system.
Finally, if we desire a fully validated Lyapunov function, we need to employ formal verification tools, and we provide a detailed discussion in \Cref{sec:app-discussions}.

\acks{This work was partly supported by the National Science Foundation (NSF) CAREER Award \#CCF-2238030 and the MIT-DSTA program. Any opinions, findings, and conclusions or recommendations expressed in this publication are those of the authors and don’t necessarily reflect the views of the sponsors.}

\bibliography{main}

\newpage
\appendix

\section{Control Lyapunov Functions}\label{sec:app-clf}

In this section, we review the stability results using control Lyapunov functions. We provide the definition of CLF and the stability results in \Cref{sec:preliminary}. Here we provide the results formally. 

\begin{proposition}\label{thm:clf-stable}
    Given a set $\mathcal{G}\subset\mathcal{X}$ such that $\xg\in\mathcal{G}$. Suppose there exists a CLF $V$ on $\mathcal{G}\subseteq\mathcal{X}$ with a constant $\lambda\in (0, 1)$. If $\mathcal{G}$ is forward invariant\footnote{A set $\mathcal{G}$ is forward invariant for \eqref{eq:dynamics} if $x(0)\in\mathcal{G}\implies x(t)\in\mathcal{G}$ for all $t>0$.}, then $\xg$ is asymptotically stable for the closed-loop system under $u\in\mathcal{K}(x)=\left\{u\; |\; V\left(h(x,u)\right)\leq\lambda V(x)\right\}$ starting from initial set $\mathcal{X}_0\subseteq\mathcal{G}$.
\end{proposition}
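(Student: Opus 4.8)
The plan is to run the standard discrete-time Lyapunov descent argument, using the two CLF inequalities \eqref{eq:clf-pos-cond}--\eqref{eq:clf-decrease-cond} together with forward invariance to keep the trajectory inside the region where $V$ is defined and the decrease condition applies. First I would fix any $x(0)\in\mathcal{X}_0\subseteq\mathcal{G}$ and consider the closed-loop trajectory generated by selecting, at each step, some $u(t)\in\mathcal{K}(x(t))$. By forward invariance of $\mathcal{G}$, every $x(t)$ stays in $\mathcal{G}$, so $V(x(t))$ is well-defined and the decrease inequality $V(h(x(t),u(t)))\leq\lambda V(x(t))$ holds at every step. The core estimate then follows by induction: $V(x(t))\leq\lambda^{t}V(x(0))$ for all $t\geq 0$, and since $\lambda\in(0,1)$ the right-hand side tends to $0$.

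For attractivity, I would combine this geometric decay with the lower bound in \eqref{eq:clf-pos-cond}, namely $\underline\alpha(\|x(t)-\xg\|)\leq V(x(t))\leq\lambda^{t}V(x(0))$. Because $\underline\alpha\in\mathcal{K}_\infty$ is continuous, strictly increasing, and $\underline\alpha(0)=0$, it is invertible on its range, so $\|x(t)-\xg\|\leq\underline\alpha^{-1}(\lambda^{t}V(x(0)))\to 0$, giving $\lim_{t\to\infty}\|x(t)-\xg\|=0$.

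For stability in the sense of Lyapunov, I would use both bounds of \eqref{eq:clf-pos-cond} together with the monotonicity $V(x(t))\leq V(x(0))$ established above: $\underline\alpha(\|x(t)-\xg\|)\leq V(x(t))\leq V(x(0))\leq\bar\alpha(\|x(0)-\xg\|)$, hence $\|x(t)-\xg\|\leq\underline\alpha^{-1}(\bar\alpha(\|x(0)-\xg\|))$. Given $\varepsilon>0$, choosing $\delta=\bar\alpha^{-1}(\underline\alpha(\varepsilon))$ (well-defined since both maps are class-$\mathcal{K}_\infty$) ensures $\|x(0)-\xg\|<\delta\Rightarrow\|x(t)-\xg\|<\varepsilon$ for all $t$. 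Combining Lyapunov stability with attractivity yields asymptotic stability at $\xg$.

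The main obstacle I anticipate is not the descent computation but ensuring the argument is well-posed along the entire trajectory: one must guarantee (i) that a stabilizing input $u(t)\in\mathcal{K}(x(t))$ actually exists at each visited state, and (ii) that the trajectory never escapes $\mathcal{G}$, since the CLF conditions are only assumed on $\mathcal{G}$. Point (ii) is precisely where forward invariance is indispensable, and point (i) requires the infimum in \eqref{eq:clf-decrease-cond} to be attained, which I would discharge using compactness of $\mathcal{U}$ together with continuity of $h$ and $V$ so that $\mathcal{K}(x)$ is nonempty. Everything else reduces to the routine telescoping of the geometric decrease and the class-$\mathcal{K}_\infty$ inversions.
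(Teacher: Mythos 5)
Your proof is correct, but it takes a genuinely different route from the paper's. The paper does not argue stability from scratch: it verifies that the CLF conditions imply the hypotheses of a cited result (Theorem 2.19 of Gr\"une--Pannek), the only technical step being the conversion of the multiplicative decrease $V(h(x,u^*))\leq\lambda V(x)$ into the additive form $V(h(x,u^*))\leq V(x)-\alpha_V(\|x-\xg\|)$ with $\alpha_V=(1-\lambda)\underline\alpha\in\mathcal{K}_\infty$, after which the stability conclusion is outsourced to the reference. You instead give a self-contained elementary argument: telescoping the decrease along the closed-loop trajectory to get $V(x(t))\leq\lambda^t V(x(0))$, then inverting the class-$\mathcal{K}_\infty$ bounds of \eqref{eq:clf-pos-cond} to obtain both attractivity ($\|x(t)-\xg\|\leq\underline\alpha^{-1}(\lambda^t V(x(0)))\to 0$) and Lyapunov stability (via $\delta=\bar\alpha^{-1}(\underline\alpha(\varepsilon))$). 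Your version buys an explicit convergence estimate and makes transparent exactly where forward invariance and each comparison bound enter; the paper's version is shorter and inherits whatever generality the cited theorem provides. Two remarks. First, your argument is actually cleaner than the paper's on one point: since the proposition assumes $u(t)\in\mathcal{K}(x(t))$, the decrease $V(x(t+1))\leq\lambda V(x(t))$ holds by definition of $\mathcal{K}$, so you never need the minimizer $u^*$; the paper's proof writes $u^*(x)=\inf_u V(h(x,u))$, conflating the infimal value with a minimizing input, which your route avoids entirely. Second, your worry (i) about nonemptiness of $\mathcal{K}(x)$ is legitimate (condition \eqref{eq:clf-decrease-cond} alone does not guarantee the infimum is attained), but note that your proposed fix assumes compactness of $\mathcal{U}$, which the paper never states; strictly speaking, the proposition as phrased presupposes that inputs in $\mathcal{K}(x)$ are available along the trajectory, so this caveat can be left outside the proof, as both you and the paper effectively do.
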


\begin{proof}
    The proof follows from Definition 2.18 and Theorem 2.19 in \cite{grune2017nonlinear}. Condition~\eqref{eq:clf-pos-cond} is the same as condition (i) in \cite{grune2017nonlinear}, and for condition 2, let $u^*(x)=\inf_u V(h(x,u))$, from Condition~\eqref{eq:clf-decrease-cond}, we have:
    \begin{equation}
        V(h(x,u^*(x)))\leq\lambda V(x),\quad \forall x\in\mathcal{G}.
    \end{equation}
    Therefore,
    \begin{equation}
        \begin{aligned}
            V(h(x,u^*(x)))\leq V(x)-(1-\lambda)V(x)\leq V(x)-(1-\lambda)\underline\alpha(\|x-\xg\|),\quad \forall x\in\mathcal{G},
        \end{aligned}
    \end{equation}
    where the second equation follows condition~\eqref{eq:clf-pos-cond}. Define $g(x)=(1-\lambda)x$. Since $\lambda\in(0,1)$, it follows that $g(x)\in\mathcal{K}_\infty$. Let $\alpha_V(x-\xg)=g\circ\underline{\alpha}(\|x-\xg\|)$, we have $\alpha_V\in\mathcal{K}_\infty$ and
    \begin{equation}
        V(h(x,u^*(x)))\leq V(x)-\alpha_V(\|x-\xg\|),\quad\forall x\in\mathcal{G},
    \end{equation}
    which aligns with condition (ii) in \cite{grune2017nonlinear}. Note that $\mathcal{G}$ is assumed forward invariant. Therefore, following Theorem 2.19 in \cite{grune2017nonlinear}, the system is asymptotically stable starting from $\mathcal{X}_0\in\mathcal{G}$.
\end{proof}



\section{Analysis of \algo}\label{sec:app-analysis}

We first provide the workflow of \algo\ in \Cref{alg:lyge}.


\begin{algorithm}[h]
   \caption{\algo}
   \label{alg:lyge}
\begin{algorithmic}
   \STATE {\bfseries Input:} Demonstrations $\mathcal{D}^0$, set of initial states $\mathcal{X}_0$
   \STATE Learn initial controller $\pi_\mathrm{init}$ using IL
   \STATE Initialize $\hat{h}^0_\psi$ with an NN
   \STATE Initialize $V^0_\theta$ with the structure provided in \Cref{sec:algo}
   \STATE Initialize the controller with $\pi^0_\phi=\pi_\mathrm{init}$
   \FOR{$\tau$ in $0,1,...$}
   \STATE Sample states from $\mathcal{D}^\tau_x$
   \STATE Update $\hat{h}^\tau_\psi$ using MSE loss
   \STATE Update $V^\tau_\theta$ and $\pi^\tau_\phi$ using \eqref{eq:loss-clf} and \eqref{eq:loss-ctrl}
   \STATE Collect more trajectories $\Delta D^\tau$ by starting from $\mathcal{D}^\tau_x$ and following $\pi^\tau_\phi$
   \STATE Construct $\mathcal{D}^{\tau+1}=\mathcal{D}^\tau\cup\Delta D^\tau$ 
   \ENDFOR
\end{algorithmic}
\end{algorithm}

Next, we provide several results to discuss the efficacy of \algo. We start by reviewing some definitions and results on Lipschitz continuous and Lipschitz smoothness. 

\begin{definition}
A function $f: \mathbb{R}^n\rightarrow\mathbb{R}^m$ is Lipschitz-continuous with constant $L$ if 
\begin{equation}
    ||f(x)-f(y)||\leq L\|x-y\|, \quad \forall x,y\in \mathbb{R}^n.
\end{equation}
\end{definition}

\begin{definition}
A continuously differentiable function $f: \mathbb{R}^n\rightarrow\mathbb{R}^m$ is Lipschitz-smooth with constant $L$ if 
\begin{equation}
    ||\nabla f(x)-\nabla f(y)||\leq L\|x-y\|, \quad \forall x,y\in \mathbb{R}^n.
\end{equation}
\end{definition}

\begin{lemma}\label{lem:lipschitz-cauchy}
If a continuously differentiable function $f:\mathbb{R}^n\rightarrow\mathbb{R}^m$ is Lipschitz-smooth with constant $L$, then the following inequality holds for all $x,y\in\mathbb{R}^n$:
\begin{equation}
    \left(\nabla f(x)-\nabla f(y)\right)^T(x-y)\leq L\|x-y\|^2.
\end{equation}
\end{lemma}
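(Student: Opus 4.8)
The plan is to derive the inequality as an immediate consequence of the Cauchy--Schwarz inequality together with the definition of Lipschitz-smoothness; no integration, mean-value theorem, or convexity argument is needed. First I would observe that the left-hand side $(\nabla f(x)-\nabla f(y))^T(x-y)$ is simply the Euclidean inner product of the two vectors $\nabla f(x)-\nabla f(y)$ and $x-y$ (here reading $f$ as scalar-valued, which is the relevant case when the lemma is later applied to the CLF $V^\tau_\theta$, whose gradient is a vector in $\mathbb{R}^{n_x}$). By the Cauchy--Schwarz inequality, this inner product is bounded above by the product of the norms of its two factors, namely $\|\nabla f(x)-\nabla f(y)\|\,\|x-y\|$.

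Next I would invoke the Lipschitz-smoothness hypothesis directly. By the definition of Lipschitz-smoothness with constant $L$, we have $\|\nabla f(x)-\nabla f(y)\|\leq L\|x-y\|$ for all $x,y\in\mathbb{R}^n$. Substituting this bound into the Cauchy--Schwarz estimate from the previous step yields
\begin{equation}
    \left(\nabla f(x)-\nabla f(y)\right)^T(x-y)\leq\|\nabla f(x)-\nabla f(y)\|\,\|x-y\|\leq L\|x-y\|\cdot\|x-y\|=L\|x-y\|^2,
\end{equation}
which is precisely the claimed inequality. Since both the Cauchy--Schwarz step and the smoothness bound hold for arbitrary $x,y\in\mathbb{R}^n$, chaining them establishes the result in full generality.

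There is no genuine obstacle here: the proof is a two-line combination of Cauchy--Schwarz and the definition. The only point meriting mild care is the interpretation of the expression when $m>1$, where $\nabla f$ is a Jacobian matrix rather than a vector; in that case one would read the product through the operator norm and note $\|(\nabla f(x)-\nabla f(y))(x-y)\|$-type bounds, but for the scalar-valued CLF to which this lemma is actually applied the plain inner-product Cauchy--Schwarz step is exactly what is required, so I would present the proof in that setting.
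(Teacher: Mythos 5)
Your proof is correct and is exactly the argument the paper gives: Cauchy--Schwarz on the inner product followed by the Lipschitz-smoothness bound $\|\nabla f(x)-\nabla f(y)\|\leq L\|x-y\|$. Your extra remark about the vector-valued case ($m>1$) is a sensible clarification that the paper glosses over, but the core proof is the same two-line chain.
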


\begin{proof}
The proof is straightforward that
\begin{equation}
    \begin{aligned}
    \left(\nabla f(x)-\nabla f(y)\right)^T(x-y)\leq \|\nabla f(x)-\nabla f(y)\|\cdot \|x-y\|\leq L\|x-y\|^2,
    \end{aligned}
\end{equation}
where the first equation follows the Cauchy-Schwarz inequality, and the second inequality comes from the definition of Lipschitz-smoothness. 
\end{proof}

\begin{lemma}\label{lem:lipschitz-inequality}
If function $f:\mathbb{R}^n\rightarrow\mathbb{R}^m$ is Lipschitz-smooth with constant $L$, then the following inequality holds for all $x,y\in\mathbb{R}^n$: 
\begin{equation}
    f(y)\leq f(x)+\nabla f(x)^T(y-x)+\frac{L}{2}\|y-x\|^2.
\end{equation}
\end{lemma}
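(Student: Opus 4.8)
The plan is to reduce the inequality to an integral along the line segment joining $x$ and $y$, and then bound the integrand using the Lipschitz-smoothness hypothesis. Note first that the stated inequality is only meaningful for scalar-valued $f$, since both $f(y)$ and $\nabla f(x)^T(y-x)$ must be real numbers; I therefore treat the case $m=1$, which is exactly what is needed for the intended application to the learned CLF $V^\tau_\theta$.

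First I would introduce the auxiliary function $g(t)=f(x+t(y-x))$ for $t\in[0,1]$. Since $f$ is continuously differentiable, so is $g$, with $g'(t)=\nabla f(x+t(y-x))^T(y-x)$ by the chain rule. The fundamental theorem of calculus then gives $f(y)-f(x)=g(1)-g(0)=\int_0^1 \nabla f(x+t(y-x))^T(y-x)\,dt$. Writing the linear term as $\nabla f(x)^T(y-x)=\int_0^1 \nabla f(x)^T(y-x)\,dt$ and subtracting, I obtain $f(y)-f(x)-\nabla f(x)^T(y-x)=\int_0^1 \bigl(\nabla f(x+t(y-x))-\nabla f(x)\bigr)^T(y-x)\,dt$.

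Next I would bound the integrand. By Cauchy-Schwarz and the definition of Lipschitz-smoothness, $\bigl(\nabla f(x+t(y-x))-\nabla f(x)\bigr)^T(y-x)\leq \|\nabla f(x+t(y-x))-\nabla f(x)\|\,\|y-x\|\leq L\,\|t(y-x)\|\,\|y-x\|=Lt\|y-x\|^2$. Integrating over $t\in[0,1]$ yields $\int_0^1 Lt\|y-x\|^2\,dt=\frac{L}{2}\|y-x\|^2$, where the factor $\tfrac12$ arises from $\int_0^1 t\,dt$. Rearranging produces the claimed bound. (Alternatively, I could apply the already-established \Cref{lem:lipschitz-cauchy} to the pair $x+t(y-x)$ and $x$ to bound the integrand, but the direct Cauchy-Schwarz route is the shortest.)

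There is no substantive obstacle: this is the classical descent lemma and the argument is routine. The only points that warrant minor care are recognizing that the inequality forces the scalar-valued reading ($m=1$); verifying that $g$ is differentiable so the fundamental theorem of calculus applies, which holds because $f$ is continuously differentiable; and correctly tracking the parametrization factor $t$ inside the integral, since it is precisely this factor that produces the constant $\frac12$ rather than $1$.
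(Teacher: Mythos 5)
Your proof is correct and follows essentially the same route as the paper's: both introduce $g(t)=f(x+t(y-x))$, bound the gradient-difference term via Cauchy--Schwarz and Lipschitz-smoothness (which is exactly what \Cref{lem:lipschitz-cauchy} packages), and integrate over $t\in[0,1]$ to obtain the factor $\tfrac{L}{2}$. Your additional remark that the statement should be read with $m=1$ is a fair observation about the paper's typo-level imprecision, but the argument itself is the same.
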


\begin{proof}
Define $g(t)=f(x+t(y-x))$. If $f$ is Lipschitz-smooth with constant $L$, then from Lemma~\ref{lem:lipschitz-cauchy}, we have
\begin{equation}
    \begin{aligned}
    &g'(t)-g'(0)\\
    &=\left(\nabla f(x+t(y-x))-\nabla f(x)\right)^T(y-x)\\
    &=\frac{1}{t} \left(\nabla f(x+t(y-x))-\nabla f(x)\right)^T((x+t(y-x))-x)\\
    &\leq \frac{L}{t}\|t(y-x)\|^2=tL\|y-x\|^2.
    \end{aligned}
\end{equation}
We then integrate this equation from $t=0$ to $t=1$:
\begin{equation}
    \begin{aligned}
    f(y)&=g(1)=g(0)+\int_0^1 g'(t)dt\\
    &\leq g(0)+\int_0^1g'(0)dt+\int_0^1tL\|y-x\|^2dt\\
    &=g(0)+g'(0)+\frac{L}{2}\|y-x\|^2\\
    &=f(x)+\nabla f(x)^T(y-x)+\frac{L}{2}\|y-x\|^2.
    \end{aligned}
\end{equation}
\end{proof}

Now, we provide the following theorem to show the convergence of \algo. We say that $\mathcal{L}^\tau_\mathrm{CLF}$ $\epsilon'$-\textit{robustly converges} if (1) $V^\tau_\theta(\xg)<\nu$; (2) $V^\tau_\theta(x)>\nu$ for all $x\in\mathcal{X}\setminus\xg$; (3) $\epsilon+V^\tau_\theta(\hat{h}^\tau_\psi(x,\pi^\tau_\phi(x)))-\lambda V^\tau_\theta(x)\leq\epsilon'$ for all $x\in \mathcal D^\tau_x$. Here, $\epsilon'>0$ is an arbitrarily small number. 

\begin{theorem}\label{thm:clf-converge}
    Let $L_\pi$, $L_V$ be the Lipschitz constants of the learned controller $\pi^\tau_\phi$ and the gradient of the learned CLF $V^\tau_\theta$, respectively. Furthermore, let $\sigma\geq\left\|\nabla V_\theta^\tau\right\|$ be the upper bound of the gradient of $V_\theta^\tau$. Choose $\epsilon\geq \omega\sigma+\frac{L_V}{2}\omega^2+\gamma\lambda(\sigma+\frac{L_V}{2}\gamma)+(1+L_\pi)\gamma L_h(\sigma+\frac{L_V}{2}(1+L_\pi)\gamma L_h)+\epsilon'$. If $\mathcal{L}^\tau_\mathrm{CLF}$ $\epsilon'$-robustly converges in each iteration, then \algo\ converges and returns a stabilizing controller $\pi^*$ that can be trusted within the converged trusted tunnel $\mathcal{H}^*$, where $\mathcal{H}^*$ contains all closed-loop trajectories starting from $\mathcal{X}_0$.
\end{theorem}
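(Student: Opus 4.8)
The plan is to split the argument into a per-iteration \emph{soundness} lemma and a global \emph{convergence-plus-stability} step. The crux is that robust convergence of $\mathcal{L}^\tau_\mathrm{CLF}$ only certifies the decrease condition with respect to the \emph{learned} dynamics $\hat h^\tau_\psi$ and only at the finitely many sampled states $x\in\mathcal{D}^\tau_x$, whereas the stability theorem needs the decrease condition \eqref{eq:clf-decrease-cond} to hold for the \emph{true} dynamics $h$ at \emph{every} point of the trusted tunnel $\mathcal{H}^\tau$. Accordingly, the central step is the lemma: if $\mathcal{L}^\tau_\mathrm{CLF}$ $\epsilon'$-robustly converges and $\epsilon$ is chosen as in the statement, then $V^\tau_\theta\big(h(x,\pi^\tau_\phi(x))\big)\le\lambda V^\tau_\theta(x)$ for all $x\in\mathcal{H}^\tau$, so that $(V^\tau_\theta,\pi^\tau_\phi)$ is a genuine CLF--controller pair on $\mathcal{H}^\tau$ for the true system.

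To prove the lemma I would fix an arbitrary $x\in\mathcal{H}^\tau$, choose a witness $x_i\in\mathcal{D}^\tau_x$ with $\|x-x_i\|\le\gamma$ (one exists by definition of $\mathcal{H}^\tau$), and telescope
\begin{align*}
V^\tau_\theta\big(h(x,\pi^\tau_\phi(x))\big)-\lambda V^\tau_\theta(x)
&=\big[V^\tau_\theta(h(x,\pi^\tau_\phi(x)))-V^\tau_\theta(\hat h^\tau_\psi(x_i,\pi^\tau_\phi(x_i)))\big] \\
&\quad+\big[V^\tau_\theta(\hat h^\tau_\psi(x_i,\pi^\tau_\phi(x_i)))-\lambda V^\tau_\theta(x_i)\big] \\
&\quad+\lambda\big[V^\tau_\theta(x_i)-V^\tau_\theta(x)\big],
\end{align*}
calling the three brackets $(A)$, $(B)$, $(C)$ in order. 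Term $(B)$ is at most $\epsilon'-\epsilon$ by robust-convergence condition (3). For $(A)$ and $(C)$ I would use the second-order estimate $V^\tau_\theta(a)-V^\tau_\theta(b)\le\sigma\|a-b\|+\tfrac{L_V}{2}\|a-b\|^2$, valid since $\|\nabla V^\tau_\theta\|\le\sigma$ and $\nabla V^\tau_\theta$ is $L_V$-Lipschitz. Then $(C)$ contributes $\gamma\lambda(\sigma+\tfrac{L_V}{2}\gamma)$ from $\|x-x_i\|\le\gamma$; and splitting $(A)$ through the intermediate point $V^\tau_\theta(h(x_i,\pi^\tau_\phi(x_i)))$, the learned-versus-true dynamics gap at the sample is bounded by $\omega$ and contributes $\omega\sigma+\tfrac{L_V}{2}\omega^2$, while the state-shift gap uses Lipschitzness of $h$ and $\pi^\tau_\phi$, namely $\|h(x,\pi^\tau_\phi(x))-h(x_i,\pi^\tau_\phi(x_i))\|\le(1+L_\pi)L_h\gamma$, contributing $(1+L_\pi)\gamma L_h(\sigma+\tfrac{L_V}{2}(1+L_\pi)\gamma L_h)$. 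Summing the four pieces and subtracting $\epsilon$ reproduces exactly the stated lower bound on $\epsilon$, so the left-hand side is $\le 0$.

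Next I would lift the per-iteration lemma to global convergence. Because $\mathcal{D}^\tau\subseteq\mathcal{D}^{\tau+1}$, the tunnels are nested, $\mathcal{H}^\tau\subseteq\mathcal{H}^{\tau+1}$, and all lie in the compact $\mathcal{X}$; moreover the lemma shows each exploration rollout strictly decreases $V^\tau_\theta$ along the true trajectory (as $\lambda<1$), pushing newly collected states toward the global minimizer $\xg$. I would argue this forces termination: once exploration from $\mathcal{X}_0\subseteq\mathcal{H}^0$ produces no state outside the current tunnel, set $\mathcal{H}^*=\mathcal{H}^\tau$ and $\pi^*=\pi^\tau_\phi$. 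By the stopping condition every closed-loop trajectory started in $\mathcal{X}_0$ then stays in $\mathcal{H}^*$, so $\mathcal{H}^*$ is forward invariant under $\pi^*$ and contains $\xg$. Finally, on $\mathcal{H}^*$ the certificate $V^*$ satisfies \eqref{eq:clf-pos-cond} (positive definite with the $\mathcal{K}_\infty$ envelopes and minimum $0$ at $\xg$) and, by the lemma at the converged iterate, the true decrease condition \eqref{eq:clf-decrease-cond}; combined with forward invariance and $\xg\in\mathcal{H}^*$, the standard discrete-time CLF stability result recalled in \Cref{sec:app-clf} yields asymptotic stability at $\xg$, i.e.\ $\pi^*$ is stabilizing.

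I expect the soundness lemma, not the stability conclusion, to be the main obstacle. The delicate point is that $\omega$ is defined only for the \emph{recorded} transitions $(x_i,u_i,\cdot)\in\mathcal{D}^\tau$, yet the loss and the telescoping invoke the model error at $(x_i,\pi^\tau_\phi(x_i))$; I would reconcile this by noting that at the converged iterate the rollouts are generated precisely by $\pi^*$ and hence enter the dataset, so the $\omega$ bound applies to the transitions used in $(A)$ — and by fixing a product-space norm on $\mathcal{X}\times\mathcal{U}$ so the $(1+L_\pi)L_h$ factor comes out exactly. A secondary subtlety is establishing finite termination in a metric rather than merely set-theoretic sense, for which I would track the monotonically decreasing sequence of minimal CLF values attained over the tunnel, bounded below by $0$ in the compact domain.
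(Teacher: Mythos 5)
Your proposal takes essentially the same route as the paper's own proof: your per-iteration soundness lemma is exactly the paper's \Cref{thm:correct-clf}, established by the same telescoping through the nearest dataset point with the same three Lipschitz-smoothness error terms ($\omega\sigma+\tfrac{L_V}{2}\omega^2$, $\gamma\lambda(\sigma+\tfrac{L_V}{2}\gamma)$, and $(1+L_\pi)\gamma L_h(\sigma+\tfrac{L_V}{2}(1+L_\pi)\gamma L_h)$) summing against the stated $\epsilon$, and your global step (nested tunnels in a compact space, geometric decrease of $V^\tau_\theta$ forcing the tunnel to grow until the goal is inside, forward invariance of the limit $\mathcal{H}^*$, then the standard discrete-time CLF stability result) mirrors the paper's \Cref{thm:h-grow} and its concluding argument. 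As a side note, you correctly flag a subtlety the paper's proof silently glosses over, namely that $\omega$ is only guaranteed on recorded transitions while the bound is invoked at $(x_i,\pi^\tau_\phi(x_i))$.
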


\Cref{thm:clf-converge} shows that with smooth dynamics and smooth NNs, if the margin $\epsilon$ is chosen to be large enough and the training loss $\mathcal{L}^\tau_\mathrm{CLF}$ is small, we can conclude that the algorithm converges and the system is asymptotically stable at $\xg$. In our implementation, we use spectral normalization to limit the Lipschitz constants of the NNs. We also increase the amount of data collected in the exploration phase and use large NNs to decrease $\gamma$ and $\omega$. In this way, we can make $\epsilon$ a reasonably small value. 

Now, we provide the proof of \Cref{thm:clf-converge}. We start by introducing several lemmas. First, we show that the learned CLF satisfies the CLF conditions~\eqref{eq:clf-cond} within the trusted tunnel $\mathcal{H}^\tau$.


\begin{lemma}\label{thm:correct-clf}
    Under the assumptions of \Cref{thm:clf-converge}, we have in iteration $\tau$, 
    \begin{equation}
        V^\tau_\theta(x+h(x,\pi^\tau_\phi(x)))\leq\lambda V^\tau_\theta(x),\forall x\in\mathcal{H}^\tau.
    \end{equation}
\end{lemma}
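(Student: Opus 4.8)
The plan is to reduce the tunnel-wide decrease condition to the pointwise guarantee supplied by $\epsilon'$-robust convergence on the finite dataset $\mathcal{D}^\tau_x$, and then to pay for that reduction with three error-propagation terms, each controlled by Lipschitz smoothness of $V^\tau_\theta$. Concretely, fix any $x\in\mathcal{H}^\tau$; by definition of the trusted tunnel there is a sample $x_i\in\mathcal{D}^\tau_x$ with $\|x-x_i\|\leq\gamma$. The $\epsilon'$-robust convergence of $\mathcal{L}^\tau_\mathrm{CLF}$ gives, at this sample, $V^\tau_\theta(\hat{h}^\tau_\psi(x_i,\pi^\tau_\phi(x_i)))\leq\lambda V^\tau_\theta(x_i)+\epsilon'-\epsilon$, which serves as the base inequality. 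Everything else bridges the gap between $(x,h)$ — the point and the true successor appearing in the claim — and $(x_i,\hat{h}^\tau_\psi)$ — the sample and the learned dynamics appearing in the loss.

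I would carry out three bridging steps, each an application of \Cref{lem:lipschitz-inequality} to $V^\tau_\theta$ followed by Cauchy–Schwarz with the gradient bound $\sigma$, so that a displacement of size $\delta$ in the argument costs at most $\sigma\delta+\tfrac{L_V}{2}\delta^2$. First, replace the true successor at $x$ by the true successor at $x_i$: using joint Lipschitz continuity of $h$ with constant $L_h$ and of $\pi^\tau_\phi$ with constant $L_\pi$, the displacement is $\|h(x,\pi^\tau_\phi(x))-h(x_i,\pi^\tau_\phi(x_i))\|\leq(1+L_\pi)\gamma L_h$, producing exactly the term $(1+L_\pi)\gamma L_h\bigl(\sigma+\tfrac{L_V}{2}(1+L_\pi)\gamma L_h\bigr)$. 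Second, swap the true dynamics for the learned dynamics at the sample, where the displacement is the model error $\omega$, producing $\omega\sigma+\tfrac{L_V}{2}\omega^2$. Third, after applying the base inequality, move the right-hand side argument from $x_i$ back to $x$; since this displacement $\gamma$ is scaled by $\lambda$ it yields $\gamma\lambda\bigl(\sigma+\tfrac{L_V}{2}\gamma\bigr)$. Summing, $V^\tau_\theta(h(x,\pi^\tau_\phi(x)))\leq\lambda V^\tau_\theta(x)+\Sigma-\epsilon$, where $\Sigma$ is precisely the sum of the three bracketed groups plus $\epsilon'$ that appears in the lower bound on $\epsilon$ in \Cref{thm:clf-converge}. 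The hypothesis $\epsilon\geq\Sigma$ makes the surplus nonpositive, which gives the claim.

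The step I expect to be the main obstacle is the second one, matching the model error $\omega$ correctly. The definition of $\omega$ bounds $\|\hat{h}^\tau_\psi(x_i(t),u_i(t))-x_i(t+1)\|$ on the recorded transitions, i.e. for the action $u_i(t)$ that actually generated the sample, whereas the loss (and hence the base inequality) evaluates the learned dynamics at the current policy action $\pi^\tau_\phi(x_i)$. I would reconcile these by restricting attention to the trajectories collected under $\pi^\tau_\phi$ during exploration — for which $u_i(t)=\pi^\tau_\phi(x_i(t))$ by construction — or, more generally, by reading $\omega$ as the generalization error of $\hat{h}^\tau_\psi$ at the pairs $(x_i,\pi^\tau_\phi(x_i))$ seen in training. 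A secondary point to state carefully is that the joint Lipschitz bound on $h$ must use the product-space norm under which $\|(x,u_1)-(x_i,u_2)\|\leq\|x-x_i\|+\|u_1-u_2\|$, so that the controller's Lipschitz constant enters additively as $(1+L_\pi)$ rather than multiplicatively; the remaining computations are the routine quadratic-in-displacement bounds from \Cref{lem:lipschitz-inequality}.
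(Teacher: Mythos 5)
Your proposal is correct and follows essentially the same route as the paper's proof: anchor at the nearest dataset point $\bar x$, invoke the $\epsilon'$-robust-convergence inequality there, and pay for the three substitutions (true successor at $x$ to true successor at $\bar x$, true to learned dynamics at $\bar x$, and $\lambda V^\tau_\theta(\bar x)$ back to $\lambda V^\tau_\theta(x)$) via \Cref{lem:lipschitz-inequality} together with the gradient bound $\sigma$, recovering exactly the error terms in the threshold for $\epsilon$ in \Cref{thm:clf-converge}. The subtlety you flag about $\omega$ is real but shared by the paper, whose proof likewise applies the training-error bound at the policy action $\pi^\tau_\phi(\bar x)$ rather than at the recorded action, so your reading of $\omega$ coincides with the paper's implicit one.
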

\begin{proof}
    For arbitrary $x\in\mathcal{H}^\tau$, let $\bar x\in\mathcal{D}^\tau_x$ be the closest point to $x$ in the dataset $\mathcal{D}^\tau_x$, i.e., $\bar x = \textrm{arg}\min_{y\in \mathcal D^\tau_x}\|y- x\|$. Using Lemma~\ref{lem:lipschitz-inequality} and the definition of the trusted tunnel, we have:
    \begin{equation}\label{eq:v-error}
        \begin{aligned}
            V^\tau_\theta(\bar x)-V^\tau_\theta(x)&=V^\tau_\theta(x+\bar x-x)-V^\tau_\theta(x)\leq (\bar x-x)^\top\nabla V^\tau_\theta(x)+\frac{L_V}{2}\|\bar x-x\|^2\\
            &\leq \|\bar x-x\|\|\nabla V^\tau_\theta(x)\|+\frac{L_V}{2}\|\bar x-x\|^2\leq\gamma\sigma+\frac{L_V}{2}\gamma^2.
        \end{aligned}
    \end{equation}
    Using the Lipschitz continuity of the dynamics $h$ and the controller $\pi^\tau_\phi$, we have
    \begin{equation}\label{eq:h-error}
        \begin{aligned}
            \|h(x,\pi^\tau_\phi(x))-h(\bar x,\pi^\tau_\phi(\bar x)\|&\leq L_h\|(x,\pi^\tau_\phi(x))-(\bar x,\pi^\tau_\phi(\bar x))\|=L_h\|(x-\bar x,\pi^\tau_\phi(x)-\pi^\tau_\phi(\bar x))\| \\
            &\leq L_h(\|x-\bar x\|+\|\pi^\tau_\phi(x)-\pi^\tau_\phi(\bar x)\|)\leq(1+L_\pi)\gamma L_h.
        \end{aligned}
    \end{equation}
    Using Lemma~\ref{lem:lipschitz-inequality} and the bounded gradient of $V$, and applying \eqref{eq:v-error} and \eqref{eq:h-error}, we obtain that for any $x\in\mathcal{H}^\tau$,
    \begin{equation}
        \begin{aligned}
            &V^\tau_\theta\left(h(x,\pi^\tau_\phi(x))\right)-\lambda V^\tau_\theta(x) \\
            =&V^\tau_\theta\left(h(x,\pi^\tau_\phi(x))-h(\bar x,\pi^\tau_\phi(\bar x))+h(\bar x,\pi^\tau_\phi(\bar x))\right)-\lambda V^\tau_\theta(x) \\
            \leq& V^\tau_\theta\left(h(\bar x,\pi^\tau_\phi(\bar x))\right)+\|h(x,\pi^\tau_\phi(x))-h(\bar x,\pi^\tau_\phi(\bar x))\|\sigma \\
            &+\frac{L_V}{2}\|h(x,\pi^\tau_\phi(x))-h(\bar x,\pi^\tau_\phi(\bar x))\|^2-\lambda V^\tau_\theta(\bar x)+\lambda(V^\tau_\theta(\bar x)-V^\tau_\theta(x))\\
            \leq&V^\tau_\theta\left(h(\bar x,\pi^\tau_\phi(\bar x))\right)-\lambda V^\tau_\theta(\bar x)+\gamma\lambda(\sigma+\frac{L_V}{2}\gamma)+(1+L_\pi)\gamma L_h(\sigma+\frac{L_V}{2}(1+L_\pi)\gamma L_h).
        \end{aligned}
    \end{equation}
    Then, we take the error of the learned dynamics into consideration. Using the error bound of the learned dynamics, we have:
    \begin{equation}
        \begin{aligned}
            &V^\tau_\theta\left(\hat{h}^\tau_\psi(\bar x,\pi^\tau_\phi(\bar x))\right)-\lambda V^\tau_\theta(\bar x)\\
            =&V^\tau_\theta\left(h^\tau_\psi(\bar x,\pi^\tau_\phi(\bar x))-h^\tau_\psi(\bar x,\pi^\tau_\phi(\bar x))+\hat{h}^\tau_\psi(\bar x,\pi^\tau_\phi(\bar x))\right)-\lambda V^\tau_\theta(\bar x)\\
            \geq&V^\tau_\theta\left(h^\tau_\psi(\bar x,\pi^\tau_\phi(\bar x)))\right)-\|\hat{h}^\tau_\psi(\bar x,\pi^\tau_\phi(\bar x))-h^\tau_\psi(\bar x,\pi^\tau_\phi(\bar x))\|\sigma \\
            &-\frac{L_V}{2}\|\hat{h}^\tau_\psi(\bar x,\pi^\tau_\phi(\bar x))-h^\tau_\psi(\bar x,\pi^\tau_\phi(\bar x))\|^2-\lambda V^\tau_\theta(\bar x)\\
            \geq&V^\tau_\theta\left(h^\tau_\psi(\bar x,\pi^\tau_\phi(\bar x)))\right)-\lambda V^\tau_\theta(\bar x)-\omega\sigma-\frac{L_V}
            {2}\omega^2.
        \end{aligned}
    \end{equation}
    Using the assumption that
    \begin{equation}
        V^\tau_\theta(\hat{h}^\tau_\psi(\bar x,\pi^\tau_\phi(\bar x)))-\lambda V^\tau_\theta(\bar x)+\epsilon\leq\epsilon',\quad \forall \bar x\in \mathcal D^\tau_x,
    \end{equation}
    we have
    \begin{equation}
        \begin{aligned}
            &V^\tau_\theta\left(h^\tau_\psi(\bar x,\pi^\tau_\phi(\bar x)))\right)-\lambda V^\tau_\theta(\bar x)\\
            \leq&V^\tau_\theta\left(\hat{h}^\tau_\psi(\bar x,\pi^\tau_\phi(\bar x))\right)-\lambda V^\tau_\theta(\bar x)+\omega\sigma+\frac{L_V}{2}\omega^2\\
            \leq&\epsilon'-\epsilon+\omega\sigma+\frac{L_V}{2}\omega^2.\\
        \end{aligned}
    \end{equation}
    Therefore,
    \begin{equation}
        \begin{aligned}
            &V^\tau_\theta\left(x+h(x,\pi^\tau_\phi(x))\right)-\lambda V^\tau_\theta(x)\\
            \leq&\epsilon'-\epsilon+\omega\sigma+\frac{L_V}{2}\omega^2+\gamma\lambda(\sigma+\frac{L_V}{2}\gamma)+(1+L_\pi)\gamma L_h(\sigma+\frac{L_V}{2}(1+L_\pi)\gamma L_h)\\
            \leq& 0.
        \end{aligned}
    \end{equation}
\end{proof}

Lemma~\ref{thm:correct-clf} suggests that under the assumptions of \Cref{thm:clf-converge}, the learned CLF satisfies the CLF condition \eqref{eq:clf-decrease-cond} in the trusted tunnel $\mathcal{H}^\tau$.


Next, we discuss the \emph{growth} of the trusted tunnel $\mathcal{H}^\tau$ in each iteration.

\begin{lemma}\label{thm:h-grow}
    Let $T$ be the simulation horizon, and $x^\tau\in\mathcal{H}^\tau$ be the state that has the minimum value of the CLF at iteration $\tau$, \ie,
    \begin{equation}
        x^\tau=\arg\min_{x\in\mathcal{H}^\tau} V^\tau_\theta(x).
    \end{equation}
    If $\xg\not\in\mathcal{H}^\tau$ and at least one of the sampled state $x_d^\tau$ satisfies
    \begin{equation}\label{eq:sample-xd}
        \|x_d^\tau-x^\tau\|<\frac{-\sigma+\sqrt{\sigma^2+2L_V\left(\frac{1}{\lambda^T}-1\right)V^\tau_\theta(x^\tau)}}{L_V},
    \end{equation}
    then during the exploration period, we have $\mathcal{H}^\tau\subsetneqq\mathcal{H}^{\tau+1}$. 
\end{lemma}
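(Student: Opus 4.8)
The plan is to show that launching a closed-loop rollout from the sampled state $x_d^\tau$ under $\pi^\tau_\phi$ produces, within the horizon $T$, a state whose CLF value is strictly below $V^\tau_\theta(x^\tau)=\min_{x\in\mathcal{H}^\tau}V^\tau_\theta(x)$. Since every state of $\mathcal{H}^\tau$ has CLF value at least $V^\tau_\theta(x^\tau)$, such a state must lie outside $\mathcal{H}^\tau$; as it is appended to $\mathcal{D}^{\tau+1}_x$ during exploration, it enlarges the tunnel. The hypothesis $\xg\notin\mathcal{H}^\tau$ enters here only to guarantee $V^\tau_\theta(x^\tau)>0$ (the global minimum of $V^\tau_\theta$ sits at $\xg$, and $V^\tau_\theta(x)>\nu>0$ off the goal), which is what makes the radicand in \eqref{eq:sample-xd} positive and the threshold strictly positive.

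First I would quantify how much larger $V^\tau_\theta$ can be at the launch point $x_d^\tau$ than at the minimizer $x^\tau$. Applying \Cref{lem:lipschitz-inequality} to the $L_V$-smooth function $V^\tau_\theta$ and bounding $\|\nabla V^\tau_\theta(x^\tau)\|\le\sigma$ with Cauchy--Schwarz gives
\[
V^\tau_\theta(x_d^\tau)\le V^\tau_\theta(x^\tau)+\sigma\,\|x_d^\tau-x^\tau\|+\tfrac{L_V}{2}\|x_d^\tau-x^\tau\|^2 .
\]
Next, invoking the one-step decrease $V^\tau_\theta(\text{next state})\le\lambda V^\tau_\theta(\text{current state})$ from \Cref{thm:correct-clf}, which is valid at every state of $\mathcal{H}^\tau$, and chaining it over $T$ steps yields $V^\tau_\theta(x_d^\tau(T))\le\lambda^T V^\tau_\theta(x_d^\tau)$. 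Demanding that this be strictly smaller than $V^\tau_\theta(x^\tau)$ and substituting the previous bound reduces, after writing $d=\|x_d^\tau-x^\tau\|$ and $V^*=V^\tau_\theta(x^\tau)$, to the quadratic inequality $\tfrac{L_V}{2}d^2+\sigma d-V^*\!\left(\lambda^{-T}-1\right)<0$. Its unique positive root is exactly the right-hand side of \eqref{eq:sample-xd}, so the stated sampling condition is precisely what forces $V^\tau_\theta(x_d^\tau(T))<V^\tau_\theta(x^\tau)$.

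To finish, I would convert this value drop into strict set growth. Because $\mathcal{D}^{\tau+1}_x\supseteq\mathcal{D}^\tau_x$, we always have $\mathcal{H}^\tau\subseteq\mathcal{H}^{\tau+1}$, so it remains only to exhibit one new point. The rollout state witnessing $V^\tau_\theta<V^*$ lies outside $\mathcal{H}^\tau$ and is collected into $\mathcal{D}^{\tau+1}_x$, hence belongs to $\mathcal{H}^{\tau+1}$, giving $\mathcal{H}^\tau\subsetneqq\mathcal{H}^{\tau+1}$.

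The main obstacle I anticipate is that the geometric decrease of \Cref{thm:correct-clf} is only guaranteed while the trajectory remains inside $\mathcal{H}^\tau$, whereas the chaining above implicitly assumes the whole rollout stays in the tunnel. I would dispatch this with a first-exit argument: since $x_d^\tau\in\mathcal{D}^\tau_x\subseteq\mathcal{H}^\tau$, the rollout starts inside, so let $t^\star$ be the first time it leaves $\mathcal{H}^\tau$. If $t^\star\le T$, then $x_d^\tau(t^\star)\notin\mathcal{H}^\tau$ is already the required new point. Otherwise the rollout stays in $\mathcal{H}^\tau$ for all $t\le T$, the decrease applies at every step, and $V^\tau_\theta(x_d^\tau(T))\le\lambda^T V^\tau_\theta(x_d^\tau)<V^*$ contradicts $x_d^\tau(T)\in\mathcal{H}^\tau$ (which would force $V^\tau_\theta(x_d^\tau(T))\ge V^*$). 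Either branch produces a state strictly outside $\mathcal{H}^\tau$, completing the proof. A minor point to verify is that $\lambda\in(0,1)$ makes $\lambda^{-T}-1>0$, so that, together with $V^*>0$, the radicand and the threshold in \eqref{eq:sample-xd} are genuinely positive.
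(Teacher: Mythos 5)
Your proposal is correct and follows essentially the same route as the paper's proof: the same Lipschitz-smoothness bound relating $V^\tau_\theta(x_d^\tau)$ to $V^\tau_\theta(x^\tau)$, the same chaining of the one-step decrease over $T$ steps, and the same quadratic inequality whose positive root is the threshold in \eqref{eq:sample-xd}. Your ``first-exit'' case split is precisely the paper's dichotomy (``if the trajectory leaves the tunnel, the claim is true; otherwise\dots derive a contradiction''), merely phrased in the forward direction rather than as a contrapositive, and your remarks on the role of $\xg\notin\mathcal{H}^\tau$ and on converting the exit point into strict set growth make explicit what the paper leaves implicit.
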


\begin{proof} 
Using the definition of the trusted tunnel $\mathcal{H}^\tau$ and the fact that $\mathcal{D}^{\tau}\subset\mathcal{D}^{\tau+1}$, we have $\mathcal{H}^\tau\subseteq\mathcal{H}^{\tau+1}$. Let $x^\tau\in\mathcal{H}^\tau$ be the state that has the minimum value of the CLF, \ie,
\begin{equation}
    x^\tau=\arg\min_{x\in\mathcal{H}^\tau} V^\tau_\theta(x).
\end{equation}
During the exploration process, let $x^\tau_d$ be a sampled initial state. Then, it follows from Lemma~\ref{thm:correct-clf} that
\begin{equation}\label{eq:v_x_d-bound}
    \begin{aligned}
        V^\tau_\theta(x_d^\tau)&\leq V^\tau_\theta(x^\tau)+(x_d^\tau-x^\tau)^\top\nabla V^\tau_\theta(x^\tau)+\frac{L_V}{2}\|x_d^\tau-x^\tau\|^2,
    \end{aligned}
\end{equation}
If the trajectory leaves the trusted tunnel $\mathcal{H}^\tau$, then the claim is true. Otherwise, the trajectory stays in $\mathcal{H}^\tau$ in the simulation horizon $T$. Then, using Lemma~\ref{thm:correct-clf}, we have
\begin{equation}\label{eq:v-trajectory}
    V^\tau_\theta(x_d^\tau(T))\leq\lambda V^\tau_\theta(x_d^\tau(T-1))\leq\cdots\leq\lambda^TV^\tau_\theta(x_d^\tau(0))=\lambda^TV^\tau_\theta(x_d^\tau).
\end{equation}
Additionally, since the trajectory stays in $\mathcal{H}^\tau$, we have 
\begin{equation}\label{eq:v-T}
    V^\tau_\theta(x_d^\tau(T))\geq V^\tau_\theta(x^\tau).
\end{equation}
Using \eqref{eq:v_x_d-bound}, \eqref{eq:v-trajectory}, and \eqref{eq:v-T}, we obtain
\begin{equation}
    \lambda^T\left(V^\tau_\theta(x^\tau)+(x_d^\tau-x^\tau)^\top\nabla V^\tau_\theta(x^\tau)+\frac{L_V}{2}\|x_d^\tau-x^\tau\|^2\right)\geq V^\tau_\theta(x^\tau).
\end{equation}
Note that $V^\tau_\theta$ has bounded gradients $\sigma$. Therefore,
\begin{equation}
    \frac{L_V}{2}{\|x_d^\tau-x^\tau\|}^2+\sigma \|x_d^\tau-x^\tau\|-\left(\frac{1}{\lambda^T}-1\right)V^\tau_\theta(x^\tau)\geq0,
\end{equation}
which implies
\begin{equation}
    \|x_d^\tau-x^\tau\|\geq\frac{-\sigma+\sqrt{\sigma^2+2L_V\left(\frac{1}{\lambda^T}-1\right)V^\tau_\theta(x^\tau)}}{L_V}.
\end{equation}
This is the necessary condition for the trajectory to stay in $\mathcal{H}$. Otherwise, we have $V^\tau_\theta(x^\tau_d(T))<V^\tau_\theta(x^\tau)$, which violates the assumption that $x^\tau$ is the minima of $V^\tau_\theta$ in $\mathcal{H}^\tau$. Therefore, if we sample an initial state $x_d^\tau$ with
\begin{equation}
    \|x_d^\tau-x^\tau\|<\frac{-\sigma+\sqrt{\sigma^2+2L_V\left(\frac{1}{\lambda^T}-1\right)V^\tau_\theta(x^\tau)}}{L_V},
\end{equation}
the trajectory will leave $\mathcal{H}^\tau$, which implies $\mathcal{H}^\tau\subsetneqq\mathcal{H}^{\tau+1}$.

\end{proof}

Note that the RHS of inequality \eqref{eq:sample-xd} grows to infinity as $T$ grows. Lemma~\ref{thm:h-grow} shows that the trusted tunnel $\mathcal{H}^\tau$ continues to grow when $\xg$ is not inside. Also, it shows that the trusted tunnel $\mathcal{H}^\tau$ cannot converge to some $\mathcal{H}^*$ such that $\xg\not\in\mathcal{H}^*$. 

Now, we are ready to provide the proof of \Cref{thm:clf-converge}.


\begin{proof}
    First, using Lemma~\ref{thm:h-grow}, we know that the size of $\mathcal{H}^\tau$ increases monotonically. Since the size of $\mathcal{H}^\tau$ is upper-bounded by the compact state space, using Monotone Convergence Theorem, we know that $\mathcal{H}^\tau$ will converge to some set $\mathcal{H}^*$. Then, using Lemma~\ref{thm:correct-clf}, we know that the CLF conditions~\eqref{eq:clf-cond} are satisfied in $\mathcal{H}^*$\footnote{Note that although $V^\tau_\theta(\xg)$ is not exactly zero, $\xg$ is still the global minimum of $V^\tau_\theta$, and therefore, the closed-loop system will still converge to $\xg$.}. Using Lemma~\ref{thm:h-grow}, we know that $\xg\in\mathcal{H}^*$. In addition, as $\mathcal{H}^\tau$ converges to $\mathcal{H}^*$, we know that starting from any initial states in $\mathcal{H}^*$, the agent cannot leave $\mathcal{H}^*$. Therefore, using Proposition~\ref{thm:clf-stable}, the system is asymptotically stable at $\xg$.
\end{proof}

\section{Experiments}\label{sec:app-experiments}

In this section, we provide additional experimental details and results. We provide the code of our experiments in the file `lyge.zip' in the supplementary materials. 

\subsection{Experimental Details}

Here we introduce the details of the experiments, including the implementation details of \algo\ and the baselines, choice of hyper-parameters, and detailed introductions of environments. The experiments are run on a 64-core AMD 3990X CPU @ $3.60\mathrm{GHz}$ and four NVIDIA RTX A4000 GPUs (one GPU for each training job).

\subsubsection{Implementation details}\label{sec:app-implementation}
\paragraph{Implementation of \algo} 
Our framework contains three models: the dynamics model $\hat h^\tau_{\psi}(x,u)$, the CLF $V^\tau_\theta(x)=x^\top S^\top S x+p_\mathrm{NN}(x)^\top p_\mathrm{NN}(x)$, and the controller $\pi^\tau_\phi(x)$. $\hat h^\tau_{\psi}(x,u)$, $p_\mathrm{NN}(x)$, and $\pi^\tau_\phi(x)$ are all neural networks with two hidden layers with size $128$ and $\mathrm{Tanh}$ as the activation function. $S\in\mathbb{R}^{n_x\times n_x}$ is a matrix of parameters. To limit the Lipschitz constant of the learned models $L_V$ and $L_\pi$, we add spectral normalization~\citep{miyato2018spectral} to each layer in the neural networks. We implement our algorithm in the PyTorch framework~\citep{paszke2019pytorch} based on the rCLBF repository\footnote{https://github.com/MIT-REALM/neural\_clbf (BSD-3-Clause license)}~\citep{dawson2022safe}. During training, we use ADAM~\citep{kingma2014adam} as the optimizer to optimize the parameters of the neural networks. The loss function used in training the controller and the CLF is 
\begin{equation}
\mathcal{L}^\tau=\eta_\mathrm{goal}\mathcal{L}^\tau_\mathrm{goal}+\eta_\mathrm{pos}\mathcal{L}^\tau_\mathrm{pos}+\eta_\mathrm{ctrl}\mathcal{L}^\tau_\mathrm{ctrl},
\end{equation}
where $\eta_\mathrm{goal}$, $\eta_\mathrm{pos}$, $\eta_\mathrm{ctrl}$ are hyper-parameters, which we will further introduce in Appendix \ref{sec:choice-param}, and
\begin{equation}
    \begin{aligned}
    & \mathcal{L}^\tau_\mathrm{goal}=V^\tau_\theta(\xg)^2,\\
    & \mathcal{L}^\tau_\mathrm{pos}=\frac{1}{N}\sum_{x\in\mathcal{D}^\tau_x}\max\left[\epsilon+V^\tau_\theta\left(\hat{h}^\tau_\psi(x,\pi^\tau_\phi(x))\right)-\lambda V^\tau_\theta(x), 0\right],\\
    & \mathcal{L}^\tau_\mathrm{ctrl}=\frac{1}{N}\sum_{x\in\mathcal{D}^\tau_x}\left\|\pi^\tau_\phi(x)-\pi_\mathrm{init}(x)\right\|^2+\|\pi^\tau_\phi(x)-\pi^{\tau-1}_\phi(x)\|^2.
    \end{aligned}
\end{equation}
Note that there is another term in the loss: $\frac{1}{N}\sum_{x\in\mathcal{X}\setminus\xg}\max\left[\nu-V^\tau_\theta(x),0\right]$. However, this loss term is often $0$ in the training so we omit the discussion of this term here. In our implementation, we add the term $\|\pi^\tau_\phi(x)-\pi_\mathrm{init}(x)\|^2$ to further limit the exploration of $\pi^\tau_\phi$, and also make the training more stable, as there is a non-changing reference for $\pi^\tau_\phi$.

We provide more details about the exploration here. Starting from some states in the the trusted tunnel $x(0)\in\mathcal{H}^\tau$, following the newly updated controller in this iteration $\pi^\tau_\phi$, the trajectories collected are $\{x(0),\pi^\tau_\phi(x(0)),x(1),\pi^\tau_\phi(x(1)),x(2),\pi^\tau_\phi(x(2)),\dots\}$, where $x(t+1)=h(x(t),\pi^\tau_\phi(x(t)))$. The trajectory ends when the maximum simulation time step $T$ is reached or the states are no longer safe. In each iteration, \algo\ sample $8000$ environment steps. 

\paragraph{Implementation of the baselines}
We implement PPO based on the open-source python package stablebaselines3\footnote{https://github.com/DLR-RM/stable-baselines3 (MIT license)}~\citep{stable-baselines3}, AIRL based on the open-source python package Imitation\footnote{https://github.com/HumanCompatibleAI/imitation (MIT license)}~\citep{wang2020imitation}, and D-REX and SSRR based on their official implementations~\footnote{https://github.com/dsbrown1331/CoRL2019-DREX (MIT license)}\footnote{https://github.com/CORE-Robotics-Lab/SSRR}, with some adjustments based on the CAIL repository~\footnote{https://github.com/Stanford-ILIAD/Confidence-Aware-Imitation-Learning (MIT license)}~\citep{zhang2021confidence}. All the neural networks in the baselines, including the actor, the critic, the discriminator, and the reward module, have two hidden layers with size $128$ and $\mathrm{Tanh}$ as the activation function. We use ADAM~\citep{kingma2014adam} as the optimizer with a learning rate $3\times 10^{-4}$ to optimize the parameters of the neural networks. For CLF-sparse and CLF-dense, we use the same NN structures as \algo, but pre-train the dynamics $\hat{h}^\tau_\psi(x,u)$ from state-action-state transitions randomly sampled from the state-action space. Other implementation details are the same as \algo. 

\subsubsection{Choice of Hyper-parameters}\label{sec:choice-param}

In our framework, the hyper-parameters include the Lyapunov convergence rate $\lambda$, the robust parameter $\epsilon$, the weights of the losses $\eta_\mathrm{goal}$, $\eta_\mathrm{pos}$, $\eta_\mathrm{ctrl}$, and the parameters used in training including the learning rate.
$\lambda$ controls the convergence rate of the learned controller. Larger $\lambda$ enables the controller to reach the goal faster, but it also makes the training harder. In our implementation, we choose $\lambda=1-\delta t$, where $\delta t$ is the simulation time step. $\epsilon$ controls the robustness of the learned CLF w.r.t. the Lipschitz constant of the environment, the radius of the trusted tunnel, and the error of the learned dynamics. It should be large enough to satisfy \Cref{thm:clf-converge}, but large $\epsilon$ also makes the training harder. We choose $\epsilon=1.0$ for Inverted Pendulum, Cart Pole, Cart Double Pole, and Neural Lander, and $\epsilon=2.0$ for the F-16 environments. The weights of the losses control the importance of each loss term. Generally, in a simple environment, we tend to use large $\eta_\mathrm{goal}$ and $\eta_\mathrm{pos}$ with small $\eta_\mathrm{ctrl}$, so that the radius of the trusted tunnel can be large and the controller can explore more regions in each iteration, which makes the convergence of our algorithm faster. In a complex environment, however, we tend to use small $\eta_\mathrm{goal}$ and $\eta_\mathrm{pos}$ with large $\eta_\mathrm{ctrl}$. This will limit the divergence between the updated controller and the reference controllers (initial controller and the controller learned in the last iteration) so that the radius of the trusted tunnel won't be so large that the learned CLF is no longer valid. We will further introduce the choice of the weights in Appendix \ref{sec:environments}. The influences of $\lambda$, $\epsilon$, and $\eta_\mathrm{ctrl}$ have been studied in \Cref{sec:ablation}.  The learning rate controls the convergence rate of the training. Large learning rates can make the training faster, but it may also make the training unstable and miss the minimum. We let the learning rate be $3\times 10^{-4}$. 

\subsubsection{Environments}\label{sec:environments}

\paragraph{Inverted Pendulum} 
Inverted pendulum is a standard environment for testing control algorithms. The state of the inverted pendulum is $x=[\theta,\dot\theta]^\top$, where $\theta$ is the angle of the pendulum to the straight-up location, and the control input is the torque. The dynamics is given by $\dot x=f(x)+g(x)u$, with
\begin{equation}\label{eq:loss-clf-ctrl}
    \begin{aligned}
    & f(x)=\left[\begin{array}{c}
        \dot\theta \\
        \frac{g\theta}{L}-\frac{b\dot\theta}{mL^2}
    \end{array}\right]\\
    & g(x)=\left[\frac{1}{mL^2}\right]
    \end{aligned}
\end{equation}
where $g=9.80665$ is the gravitational acceleration, $m=1$ is the mass, $L=1$ is the length, and $b=0.01$ is the damping. We define the goal point at $\xg=[0,0]^\top$. We let the discrete-time dynamics be $x(t+1)=x(t)+\dot x(t)\delta t$, where the simulating time step $\delta t=0.01$. We use reward function $r(x)=2.0-|\theta|$ to train RL algorithms.

We set the initial state with $\theta\in[-0.2,0.2]$ and $\dot\theta\in[-0.2,0.2]$. For the demonstrations, we solve the LQR controller with $Q=I_2$ and $R=1$, where $I_n$ is the $n$-dimensional identity matrix, and add standard deviation $0.1$ and bias $4.0$ to the solution to make it unstable. We collect $20$ trajectories for the demonstrations, where each trajectory has $1000$ time steps. For hyper-parameters in the loss function~\eqref{eq:loss-clf-ctrl}, we use $\eta_\mathrm{goal}=10.0$, $\eta_\mathrm{pos}=10.0$, $\eta_\mathrm{ctrl}=1.0$. 

\paragraph{Cart Pole}
The Cart Pole environment we use is a modification of the InvertedPendulum environment introduced in OpenAI Gym~\citep{gym}. However, the original reward function is not suitable for the stabilization task because there is only one term: ``alive bonus" in the original reward function. Therefore, we change the reward function to be $r=1-\|x\|$ where $x$ is the current state. 

We collect demonstrations using an RL policy that has not fully converged. We collect $20$ trajectories for the demonstrations, where each trajectory has $1000$ time steps. For hyper-parameters in the loss function~\eqref{eq:loss-clf-ctrl}, we use $\eta_\mathrm{goal}=1000.0$, $\eta_\mathrm{pos}=10.0$, $\eta_\mathrm{ctrl}=1000.0$. Note that $\eta_\mathrm{ctrl}$ is large because the control actions in this environment are often tiny.

\paragraph{Cart II Pole}
The Cart II Pole environment we use is a modification of the InvertedDoublePendulum environment introduced in OpenAI Gym~\citep{gym}. To provide more signal for the stabilization task, we change the reward function to be $r=r_\mathrm{origin}-\|q\|$, where $r_\mathrm{origin}$ is the original reward function and $q$ is the current position of the cart. 

We collect demonstrations using an RL policy that has not fully converged. We collect $20$ trajectories for the demonstrations, where each trajectory has $1000$ time steps. For hyper-parameters in the loss function~\eqref{eq:loss-clf-ctrl}, we use $\eta_\mathrm{goal}=100000.0$, $\eta_\mathrm{pos}=5.0$, $\eta_\mathrm{ctrl}=3000.0$. Note that $\eta_\mathrm{ctrl}$ is large because the control actions in this environment are often very small.

\paragraph{Neural Lander}
Neural lander~\citep{shi2019neural} is a widely used benchmark for systems with unknown disturbance. The state of the Neural Lander is $x=[p_x,p_y,p_z,v_x,v_y,v_z]^\top$, with control input $u=[f_x,f_y,f_z]^\top$. $p_x,p_y,p_z$ are the 3D displacements, $v_x,v_y,v_z$ are the 3D velocities, and $f_x,f_y,f_z$ are the 3D forces. The dynamics is given by $\dot x=f(x)+g(x)u$, with 
\begin{equation}
    \begin{aligned}
    & f(x)=\left[v_x,v_y,v_z,\frac{F_{a1}}{m},\frac{F_{a2}}{m},\frac{F_{a3}}{m}-g'\right]^\top\\
    & g(x)=\left[\begin{array}{ccc}
        0 & 0 & 0 \\
        0 & 0 & 0 \\
        0 & 0 & 0 \\
        1/m & 0 & 0 \\
        0 & 1/m & 0 \\
        0 & 0 & 1/m
    \end{array}\right]
    \end{aligned}
\end{equation}
where $g'=9.81$ is the gravitational acceleration, $m=1.47$ is the mass, and $F_a$ is the learned dynamics of the ground effect, represented as a $4$-layer neural network. We define the goal point at $\xg=[0,0,0.5,0,0,0]^\top$.  We let the discrete-time dynamics be $x(t+1)=x(t)+\dot x(t)\delta t$, where the simulating time step $\delta t=0.01$. For the reward function, we use $r(x)=10-\|x\|$.

We set the initial state with $p_x,p_y\in[-2,2]$, $p_z\in[1,2]$, and $v_x,v_y,v_z=0$. For the demonstrations, we use a PD controller
\begin{equation}
    u=\left[\begin{array}{c}
        -8p_x-v_x \\
        -8p_y-v_y \\
        -8p_z-v_z+mg'
    \end{array}\right]
\end{equation}
We collect $20$ trajectories for the demonstrations, where each trajectory has $1000$ time steps. For hyper-parameters in the loss function~\eqref{eq:loss-clf-ctrl}, we use $\eta_\mathrm{goal}=100.0$, $\eta_\mathrm{pos}=50.0$, $\eta_\mathrm{ctrl}=1.0$. 

\paragraph{F-16 Ground Collision Avoidance (GCA)}
F-16~\citep{heidlauf2018verification}\footnote{https://github.com/stanleybak/AeroBenchVVPython (GPL-3.0 license)} is a fixed-wing fighter model. Its state space is 16D including air speed $v$, angle of attack $\alpha$, angle of sideslip $\beta$, roll angle $\phi$, pitch angle $\theta$, yaw angle $\psi$, roll rate $P$, pitch rate $Q$, yaw rate $R$, northward horizontal displacement $p_n$, eastward horizontal displacement $p_e$, altitude $h$, engine thrust dynamics lag $pow$, and three internal integrator states. The control input is 4D including acceleration at z direction, stability roll rate, side acceleration + raw rate, and the throttle command. The dynamics are complex and cannot be described as ODEs, so the authors of the F-16 model provide look-up tables to describe the aerodynamics. The lookup tables describe an approximation of the Lipschitz real dynamics, and also since we simulate the system in a discrete way in the experiments, the look-up table does not violate our assumptions about the real dynamics. We define the goal point at $h=1000$. The simulating time step is $0.02$.

We set the initial state with $v\in[520,560]$, $\alpha=0.037$, $\beta=0$, $\phi=0$, $\theta=-1.4\pi$, $\psi=0.8\pi$, $P\in[-5,5]$, $Q\in[-1,1]$, $R\in[-1,1]$, $p_n=0$, $p_e=0$, $h\in[2600,3000]$, $pow\in[4,5]$. For the demonstrations, we use the controller provided with the model. We collect $40$ trajectories for the demonstrations, where each trajectory has $500$ time steps. For hyper-parameters in the loss function~\eqref{eq:loss-clf-ctrl}, we use $\eta_\mathrm{goal}=100.0$, $\eta_\mathrm{pos}=50.0$, $\eta_\mathrm{ctrl}=50.0$.

\paragraph{F-16 Tracking}
The F-16 Tracking environment uses the same model as the F-16 GCA environment. We define the goal point at $[p_n,p_e,h]=[7500,5000,1500]$, and $\psi=\arctan\frac{p_n}{p_e}$. The simulating time step is $0.02$.

We set the initial state with $v\in[520,560]$, $\alpha=0.037$, $\beta=0$, $\phi\in[-0.1,0.1]$, $\theta\in[-0.1,0.1]$, $\psi\in[-0.1,0.1]$, $P\in[-0.5,0.5]$, $Q\in[-0.5,0.5]$, $R\in[-0.5,0.5]$, $p_n=0$, $p_e=0$, $h=1500$, $pow\in[4,5]$. For the demonstrations, we use the controller provided with the model. We collect $40$ trajectories for the demonstrations, where each trajectory has $500$ time steps. For hyper-parameters in the loss function~\eqref{eq:loss-clf-ctrl}, we use $\eta_\mathrm{goal}=100.0$, $\eta_\mathrm{pos}=50.0$, $\eta_\mathrm{ctrl}=1000.0$. 

\subsection{More Results}\label{sec:more-results}

\begin{figure}
    \centering
    \subfigure[Inverted Pendulum]{\includegraphics[width=.49\textwidth]{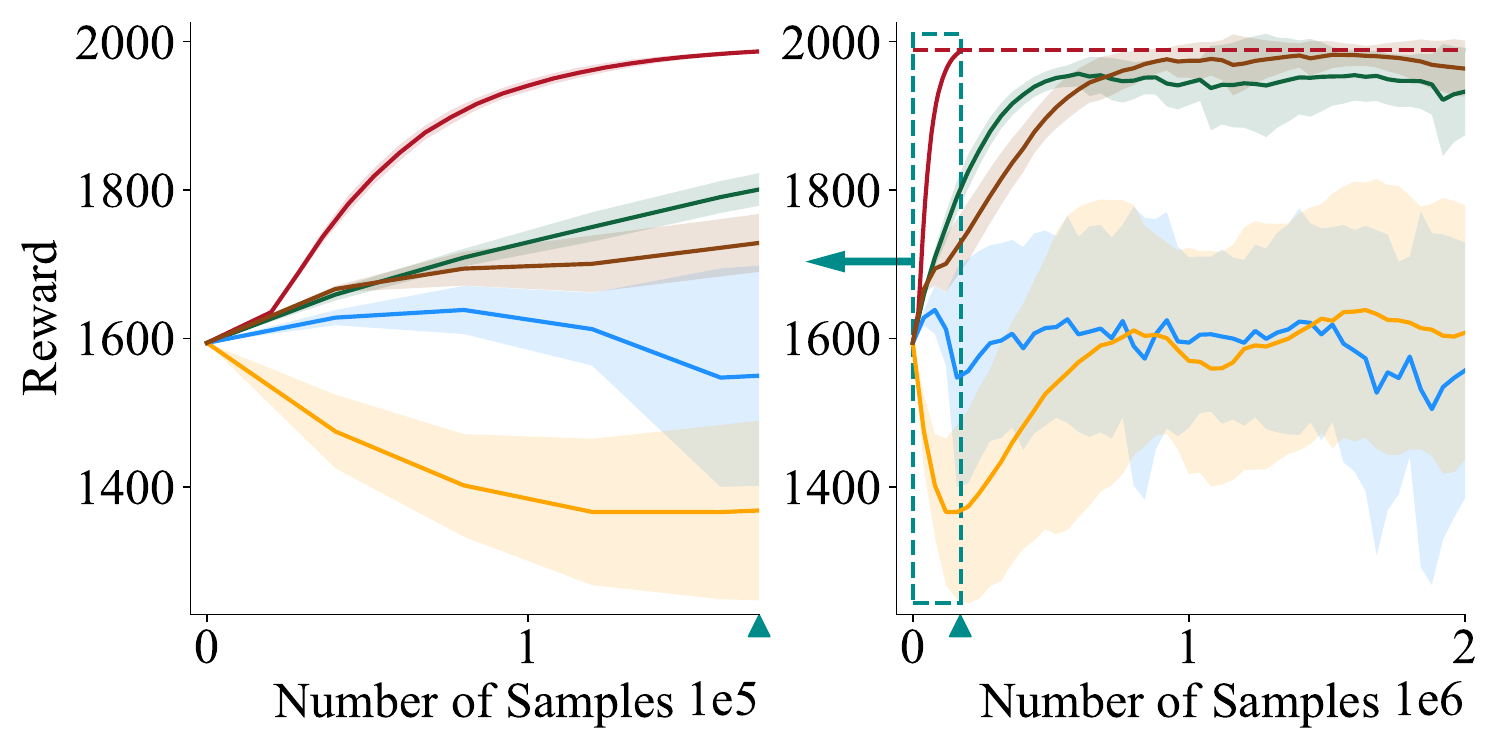}\label{fig:reward-invertedpendulum}}
    \subfigure[Cart Pole]{\includegraphics[width=.49\textwidth]{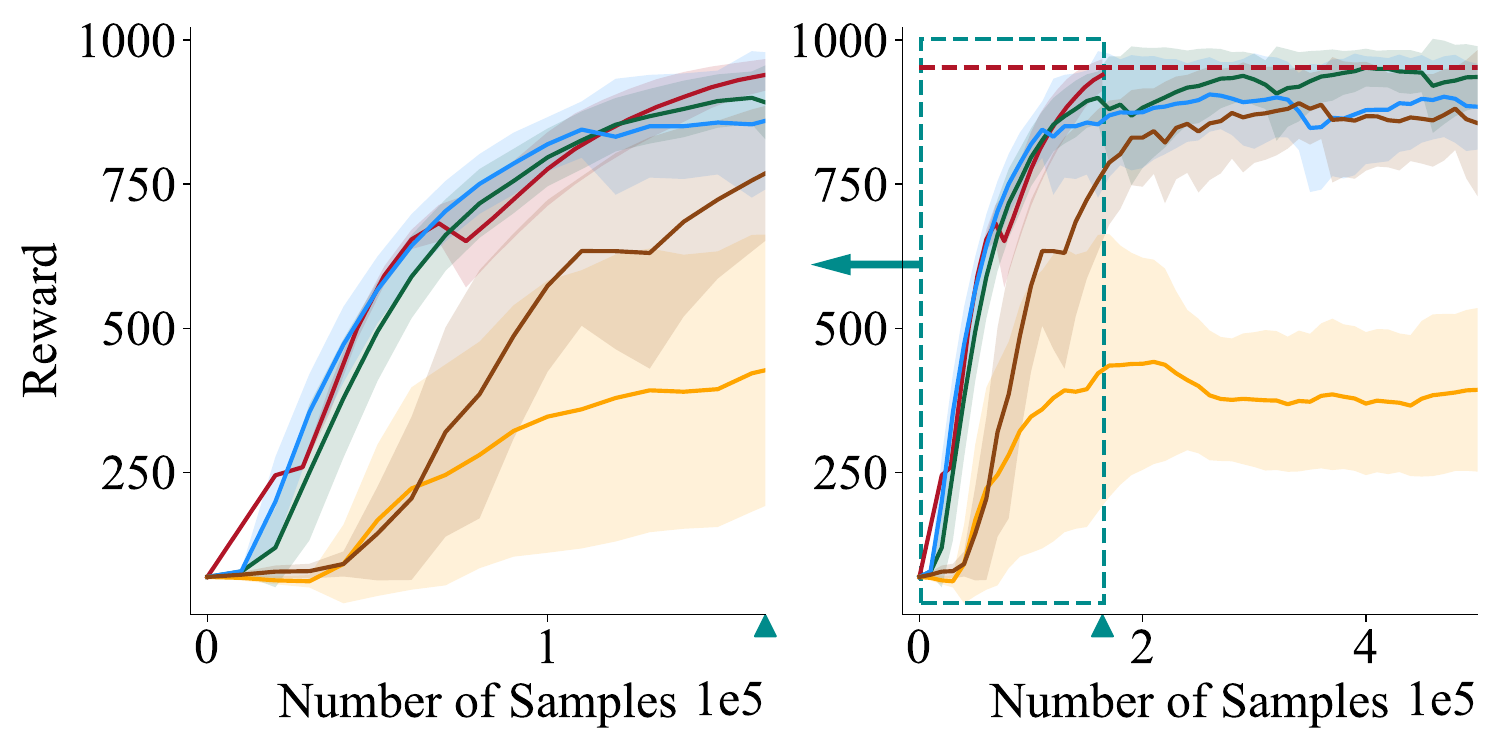}\label{fig:reward-cartpole}}
    \subfigure[Cart II Pole]{\includegraphics[width=.49\textwidth]{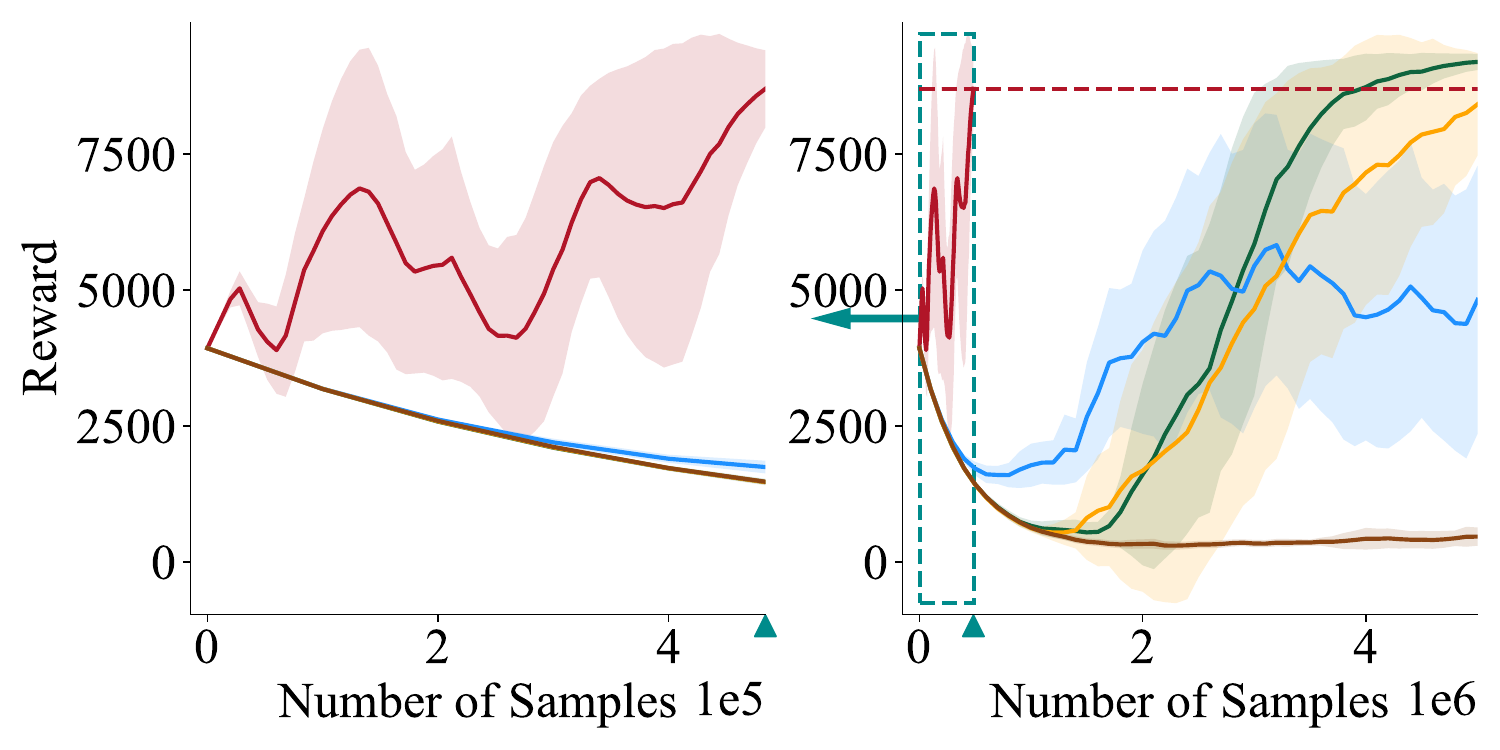}\label{fig:reward-cartdoublepole}}
    \subfigure[Neural Lander]{\includegraphics[width=.49\textwidth]{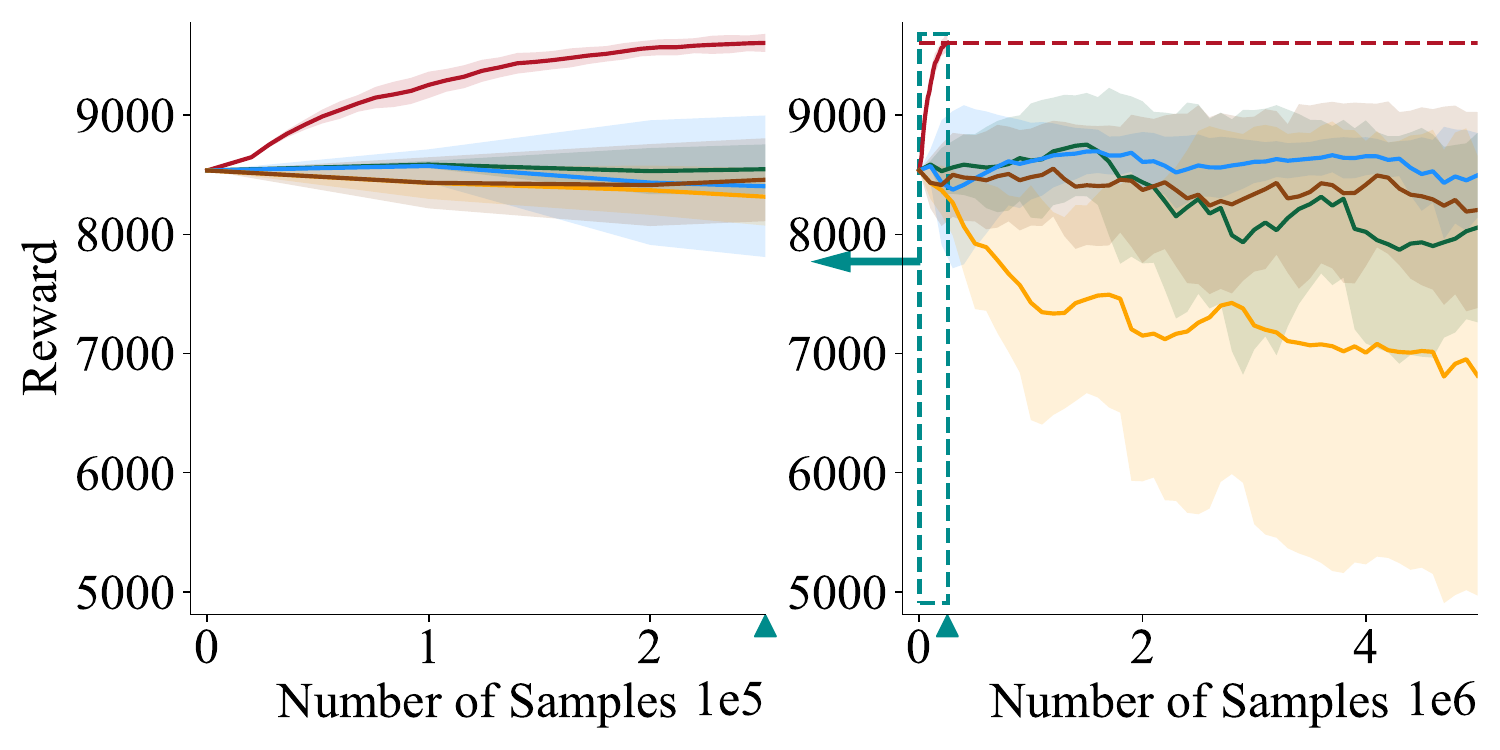}\label{fig:reward-neurallander}}
    \subfigure[F-16 GCA]{\includegraphics[width=.49\textwidth]{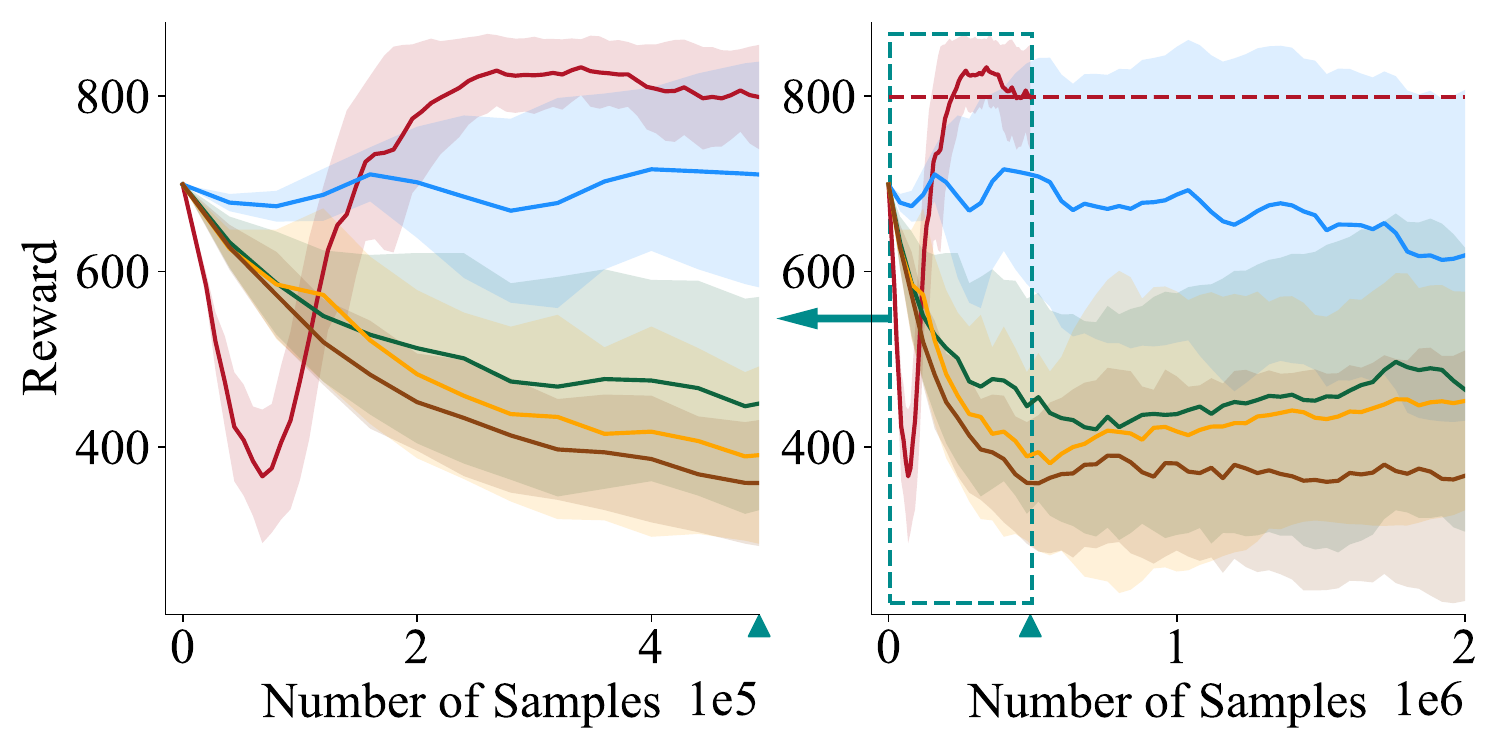}\label{fig:reward-f16gcas}}
    \subfigure[F-16 Tracking]{\includegraphics[width=.49\textwidth]{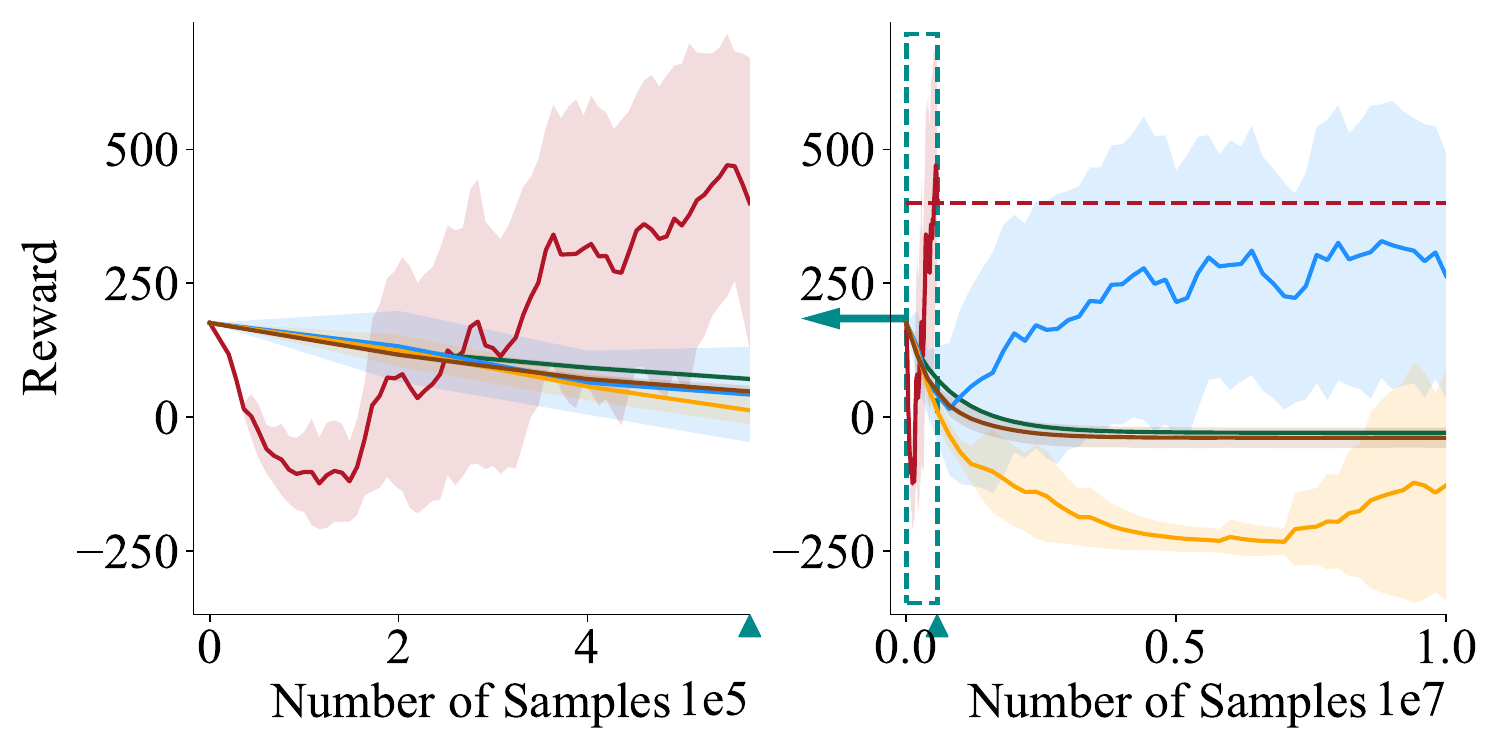}\label{fig:reward-f16tracking}}
    \includegraphics[width=0.98\textwidth]{figs/legend.pdf}
    \caption{The expected return of \algo\ and the baselines with respect to the number of samples. The dashed red line shows the converged reward of \algo. In the right of each subplot, we show the whole curve of the expected return w.r.t. the number of samples, and since our \algo\ converges too fast compared with the baselines, we zoom in the region inside the dashed rectangle and show this region in the left. The triangle on the x-axis shows the number of samples needed by \algo\ to converge. The reward at $0$ number of samples is the reward of demonstrations.}
    \label{fig:reward-curves}
\end{figure}

\paragraph{Training Process}
\Cref{fig:reward-curves} shows the expected rewards and standard deviations of different algorithms with respect to the number of samples used in the training process. We do not show the training processes of CLF-sparse and CLF-dense because they use all the samples from the beginning of the training. It is demonstrated that \algo\ achieves similar rewards in relatively low dimensional systems Inverted Pendulum, Cart Pole, and Cart II Pole, and significantly higher rewards in relatively high dimensional systems Neural Lander, F-16 GCA, and F-16 Tracking. We can also observe that \algo\ converges very fast, using about one order of magnitude fewer samples than the baselines. This supports our argument that the Lyapunov-guided exploration explores only the necessary regions in the state space and thus improving the sample efficiency. Note that the reward of \algo\ decreases a bit before rising up again in the F-16 environments. This is because the reward can only measure the cumulative error w.r.t. the goal. For \algo, before convergence, the goal is not inside the trusted tunnel $\mathcal{H}^\tau$. Therefore, during the exploration stage, the learned controller guides the agent to get out of the trusted tunnel $\mathcal{H}^\tau$. When this happens, there is no guarantee of the agent's behavior, so the cumulative error can increase. However, after convergence, LYGE is able to stabilize the system. The cumulative error is small and the reward is high.

\paragraph{Additional Ablation}
We provide an additional ablation to study the influence of the optimality of the demonstrations in the Neural Lander environment, as a supplement to \Cref{fig:vary-opt} in the main pages. The results are shown in \Cref{fig:vary-opt-lander}, which is consistent with the claims we made about this ablation in the main pages. 

\begin{figure*}[t]
    \centering
    \includegraphics[width=0.2\textwidth]{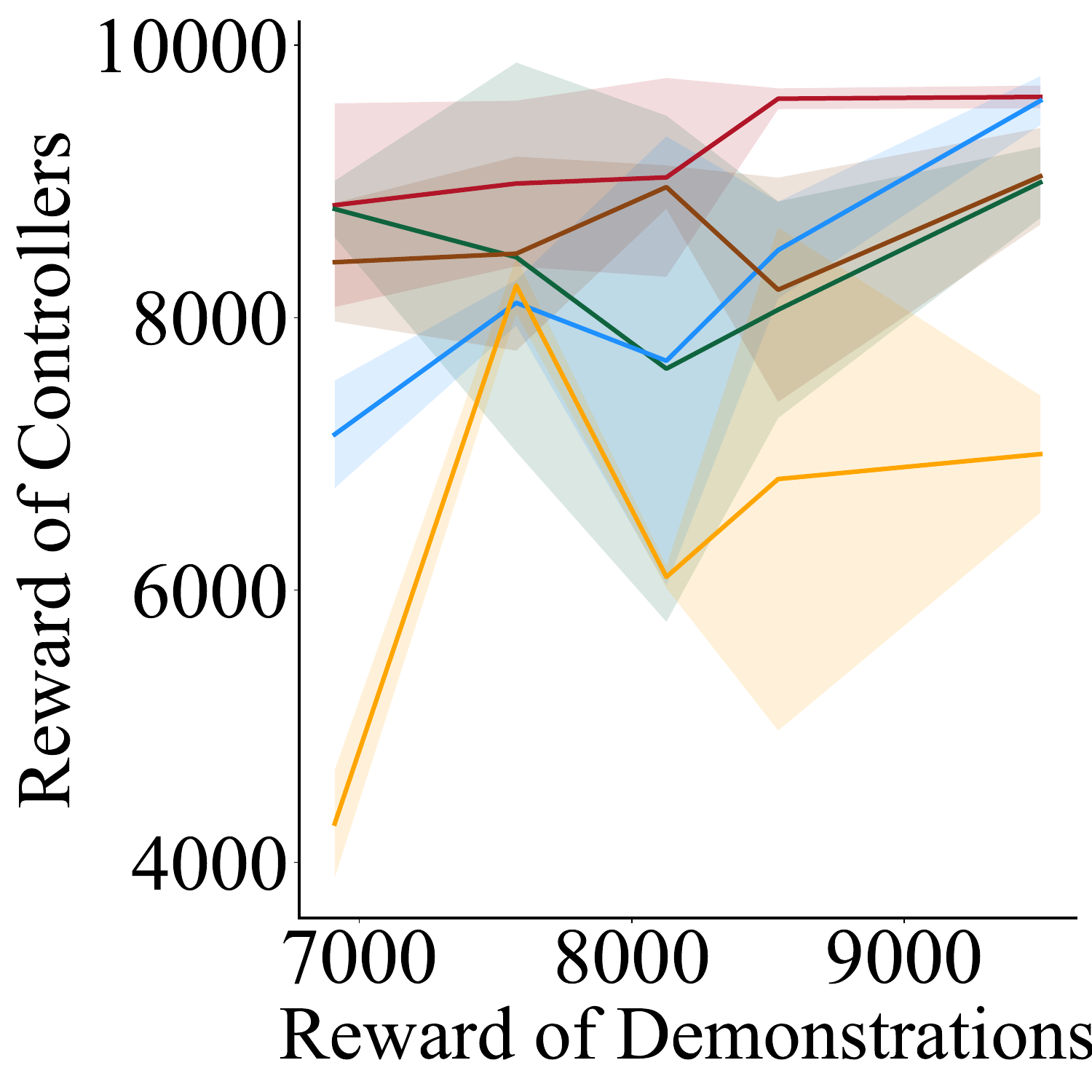}
    \includegraphics[width=.2\textwidth]{figs/legend3.pdf}
    \caption{The reward of the learned controllers w.r.t. the reward of the demonstrations in the Neural Lander environment.}
    \label{fig:vary-opt-lander}
\end{figure*}

\paragraph{Numerical Comparison}
We provide the numerical comparison of the converged rewards of \algo\ and the baselines in \Cref{tab:converge-reward}, corresponding to \Cref{fig:reward-curves}. We can observe that \algo\ performs similarly or outperforms all the baseline methods in all environments. We also provide the numerical results of the ablation studies in \Cref{tab:vary-opt} and \Cref{tab:ablation} corresponding to \Cref{fig:ablations} in the main text. 

\paragraph{Videos for Learned Controllers}
We show the videos of the learned policies of the experiments in the file ‘experiments.mp4’ in the supplementary materials.

\begin{table}[t]
    \caption{Converged rewards and standard deviations of \algo\ and the baselines in the six environments}
    \label{tab:converge-reward}
    \centering
    \small
    \begin{tabular}{c|cccccc}
        \toprule
        Method & Inv Pendulum & Cart Pole & Cart II Pole  & Neural Lander & F-16 GCA & F-16 Tracking \\
        \midrule
        \algo & $\mathbf{1987}\pm2$ & $\mathbf{939}\pm27$ & $8700\pm708$ & $\mathbf{9608}\pm76$ & $\mathbf{798}\pm59$ & $\mathbf{340}\pm272$ \\
        PPO & $1933\pm59$ & $936\pm53$ & $\mathbf{9194}\pm149$ & $8058\pm795$ & $466\pm162$ & $-29\pm1$ \\
        AIRL & $1557\pm173$ & $884\pm74$ & $4816\pm2466$ & $8496\pm355$ & $618\pm188$ & $264\pm229$ \\
        D-REX & $1608\pm172$ & $393\pm142$ & $8414\pm944$ & $6815\pm1847$ & $453\pm124$ & $-127\pm214$ \\
        SSRR & $1964\pm38$ & $856\pm127$ & $463\pm168$ & $8205\pm823$ & $368\pm143$ & $-39\pm19$ \\
        Demo & $1594\pm58$ & $69\pm64$ & $3929\pm1056$ & $8537\pm96$ & $699\pm40$ & $176\pm478$ \\
        \bottomrule
    \end{tabular}
\end{table}

\begin{table}[t]
    \centering
    \caption{Converged reward and number of samples used of \algo\ and the converged reward of baselines in the Inverted Pendulum environment with demonstrations with different optimality.}
    \begin{tabular}{c|ccccccc}
        \toprule
        Demonstrations & $1837$ & $1773$ & $1688$ & $1594$ \\
        \midrule
        \algo & $\mathbf{1995.63}\pm2.69$ & $\mathbf{1995.38}\pm3.71$ & $\mathbf{1995.00}\pm3.16$ & $\mathbf{1995.52}\pm3.05$ \\
        PPO & $1958.74\pm2.15$ & $1949.79\pm8.93$ & $1978.03\pm1.52$ & $1943.27\pm5.75$ \\
        AIRL & $1831.78\pm47.37$ & $1875.50\pm23.55$ & $1616.20\pm4.62$ & $1604.92\pm5.97$ \\
        D-REX & $1888.75\pm3.61$ & $1772.05\pm8.98$ & $1831.68\pm6.50$ & $1646.63\pm6.56$ \\
        SSRR & $1981.96\pm0.65$ & $1970.13\pm0.84$ & $1952.56\pm1.34$ & $1982.67\pm0.61$ \\
        Converged Samples  & $45333\pm16653$ & $58667\pm12220$ & $58667\pm4619$ & $77333\pm12220$\\
        \midrule
        Demonstrations & $1498$ & $1400$ & $1302$ \\
        \midrule
        \algo & $\mathbf{1995.28}\pm2.81$ & $\mathbf{1995.41}\pm3.12$ & $\mathbf{1992.91}\pm3.22$ \\
        PPO & $1962.92\pm2.80$ & $1952.21\pm2.34$ & $1916.23\pm2.95$ \\
        AIRL & $1399.69\pm11.65$ & $1578.24\pm11.39$ & $1154.61\pm7.45$ \\
        D-REX & $1620.61\pm12.01$ & $1649.49\pm6.22$ & $1053.79\pm3.22$ \\
        SSRR & $1952.05\pm1.73$ & $1756.77\pm62.98$ & $1944.05\pm5.48$ \\
        Converged Samples & $101333\pm16653$ & $104000\pm21166$ & $98667\pm4619$ \\
        \bottomrule
    \end{tabular}
    \label{tab:vary-opt}
\end{table}

\begin{table}[t]
    \centering
    \begin{tabular}{c|ccccc}
        \toprule
        $\epsilon$ & $0.01$ & $0.10$ & $1.00$ & $10.00$ & $100.00$ \\
        Reward & $895\pm1$ & $988\pm6$ & $969\pm22$ & $96\pm36$ & $17\pm1$ \\
        \midrule
        $\lambda$ & $0.01$ & $0.10$ & $1.00$ & $10.00$ & $100.00$ \\
        Reward & $147\pm12$ & $465\pm430$ & $969\pm22$ & $896\pm1$ & $895\pm1$ \\
        \midrule
        $\eta_\mathrm{ctrl}$ & $10$ & $100$ & $1000$ & $10000$ & $100000$ \\
        Reward & $342\pm475$ & $619\pm500$ & $969\pm22$ & $984\pm4$ & $894\pm1$\\
        \bottomrule
    \end{tabular}
    \caption{Numerical results of the ablations on $\epsilon$, $\lambda$, and $\eta_\mathrm{ctrl}$. }
    \label{tab:ablation}
\end{table}

\section{Discussions}\label{sec:app-discussions}

\subsection{Verification of the Learned CLFs}
The focus of this paper is to use the learned CLF to guide the exploration and synthesize feedback controllers for high-dimensional unknown systems. The experimental results show that the learned controllers are goal-reaching, and we find that the learned CLF satisfies the conditions in the majority of the trusted tunnel, and the learned CLF can successfully guide the controller to explore the useful subset of the state space. However, we do not claim that our learned CLF is formally verified. 
Although the theories we provide suggest that the learned CLF satisfies the CLF conditions \eqref{eq:clf-cond} under certain assumptions, these assumptions may not be satisfied in experiments. For example, we cannot theoretically guarantee that the loss always $\epsilon'$-robustly converges on any system. 
If we want to verify the learned CLF is valid in the whole trusted tunnel, additional verification tools are needed, including SMT solvers~\citep{gao2012delta,chang2019neural}, Lipschitz-informed sampling methods~\citep{bobiti2018automated}, etc. However, these verification tools are known to have poor scalability and thus generally used in systems with dimensions less than $6$~\citep{chang2019neural,zhou2022neural}. Scalable verification for the learned CLFs remains an open problem.



\subsection{Possible Future Directions}

Our algorithm can also benefit from the verification tools. For instance, 
SMT solvers allow us to find counterexamples to augment the training data to make our learned CLF converge faster. In addition, \emph{almost Lyapunov functions}~\citep{liu2020almost} show that the system can still be stable even if the Lyapunov conditions do not hold everywhere. We are excited to explore these possible improvements in our future work.

\section{Details about the Extensions}\label{sec:extension-detail}

\subsection{Learning Control Contraction Matrices with Unknown Dynamics}

In Section 7 in the main text, we introduced that our algorithm can also be directly applied to learn Control Contraction Matrices (CCMs) in environments with unknown dynamics. Here we provide more details. 

We consider the control-affine systems 
\begin{equation}
    \dot x=f(x)+g(x)u
\end{equation}
where $x\in\mathcal{X}\subseteq\mathbb{R}^{n_x}$ is the state, $u\in\mathcal{U}\subseteq\mathbb{R}^{n_u}$ is the control input. The tracking problem we consider is to design a controller $u=\pi(x,x^*,u^*)$, such that the controlled trajectory $x(t)$ can track \emph{any} target trajectory $x^*(t)$ generated by some reference control $u^*(t)$ when $x(0)$ is near $x^*(0)$. 

CCMs are widely used to provide contraction guarantees for tracking controllers. A fundamental theorem in CCM theory~\citep{manchester2017control} says that if there exists a metric $M(x)$ and a constant $\lambda>0$, such that
\begin{subequations}
    \begin{align}
    & g_\bot^\top\left(-\partial_fW(x)+\reallywidehat{\frac{\partial f(x)}{\partial x}W(x)}+2\lambda W(x)\right)g_\bot\prec0\label{eq:loss-c1}\\
    & g_\bot^\top\left(\partial_{g_j}W(x)-\reallywidehat{\frac{\partial g_j(x)}{\partial x}W(x)}\right)g_\bot=0,\quad j=1,...,n_u\label{eq:loss-c2}
    \end{align}
\end{subequations}
where $g_\bot(x)$ is an annihilator matrix of $g(x)$ satisfying $g_\bot^\top(x)g(x)=0$, $W(x)=M(x)^{-1}$ is the dual metric, $g_j$ is the $j$-th column of matrix $g$, and for a matrix $P$, $\widehat{P}=P+P^\top$, then there exists a controller $u=\pi(x,x^*,u^*)$, such that the controlled trajectory $x(t)$ will converge to the reference trajectory $x^*(t)$ exponentially. Such controller can be find by satisfying the following condition:
\begin{equation}\label{eq:loss-cu}
    \dot M+\reallywidehat{M(A+gK)}+2\lambda M\prec 0
\end{equation}
where $A=\frac{\partial f}{\partial x}+\sum_{i=1}^{n_u} u^i\frac{\partial g_i}{\partial x}$, $u^i$ is the $i$-th element of the vector $u$, and $K=\frac{\partial u}{\partial x}$. 

We use a similar framework as \algo\ to learn the CCM in unknown systems. Since CCM theories require the environment dynamics to be control-affine, we change the model of the dynamics to be $\hat h^\tau_{\alpha,\beta}(x,u)=f^\tau_\alpha(x)+g^\tau_\beta(x)u$, where $f^\tau_\alpha(x): \mathbb{R}^{n_x}\rightarrow\mathbb{R}^{n_x}$ and $g^\tau_\beta(x): \mathbb{R}^{n_x}\rightarrow\mathbb{R}^{n_x}\times\mathbb{R}^{n_u}$ are neural networks with parameters $\alpha$ and $\beta$. Given imperfect demonstrations, we still first use imitation learning to learn an initial controller $\pi_\mathrm{init}(x,x^*,u^*)$, and fit the local model $\hat h^0_{\alpha,\beta}(x,u)$. In order to find $g_\bot(x)$, we need the learned $g^\tau_\beta(x)$ to be sparse so that we can hand-craft $g_\bot(x)$ for it, so we add the Lasso regression term in the loss $\mathcal{L}_\mathrm{dyn}$:
\begin{equation}
    \begin{aligned}
    \mathcal{L}^\tau_\mathrm{dyn}(\alpha,\beta)&=\frac{1}{N}\sum_{x(t),u(t),x(t+1)\in\mathcal{D}^\tau_x} \left\|x(t+1)-x(t)-\hat h^\tau_{\alpha,\beta}(x(t),u(t))\right\|^2\\
    &+\mu_\mathrm{dyn}(\|\alpha\|^2+\|\beta\|^2+\|\beta\|_1)
    \end{aligned}
\end{equation}
where $\|\beta\|_1$ is the $1$-norm of $\beta$. Then, we jointly learn the controller and the corresponding CCM inside $\mathcal{H}^\tau$. We parameterize the controller and the dual metric using neural networks $\pi^\tau_\phi(x,x^*,u^*)$ and $W^\tau_\theta(x)$ with parameters $\phi$ and $\theta$, and train them by replacing the CLF loss $\mathcal{L}^\tau_\mathrm{CLF}$ in the main text with the following loss:
\begin{equation}
    \mathcal{L}^\tau_\mathrm{CCM}=\frac{1}{N}\sum_{(x,x^*,u^*)\in\mathcal{D}^\tau_x}\left[L_\mathrm{PD}(-C_1(x;\theta))+\sum_{j=1}^{n_u}\|C_2^j(x;\theta)\|_F+L_\mathrm{PD}(-C_u(x,x^*,u^*;\phi))\right]
\end{equation}
where $C_1(x;\theta)$, $C_2(x;\theta)$, and $C_u(x,x^*,u^*;\phi)$ are the LHS of \Cref{eq:loss-c1}, \Cref{eq:loss-c2}, and \Cref{eq:loss-cu}, respectively.
$\|\cdot\|_F$ is the Frobenius norm. $L_\mathrm{PD}$ is the loss function to make its input positive definite. In our implementation, we use $L_\mathrm{PD}(\cdot)=\frac{1}{N}\sum\mathrm{ReLU}(\lambda(\cdot))$, where $\lambda(\cdot)$ is the eigenvalues. Note that $\theta$ is not a parameter of $C_u$ since we only use $C_u$ to find the controller. For more detailed discussions of the loss functions, one can refer to \cite{sun2021learning} and \cite{chou2021model}. Once we have the learned controller, we apply it in the environment to collect more data and enlarge the trusted tunnel $\mathcal{H}$, following the same process of \algo. We repeat this process several times until convergence. 

\subsection{Experimental Details of CCM}

\subsubsection{Implementation Details}

We implement our algorithm using the PyTorch framework~\citep{paszke2019pytorch} based on the CCM repository~\footnote{https://github.com/MIT-REALM/ccm}~\citep{sun2021learning}. The neural networks in the dynamics model $f^\tau_\alpha(x)$ and $g^\tau_\beta(x)$ have two hidden layers with size $128$ and $\mathrm{Tanh}$ as the activation function. We parameterize our controller using
\begin{equation}
    \pi^\tau_\phi(x,x^*,u^*)=\omega^\tau_2(x,x^*)\cdot\tanh{(\omega^\tau_1(x,x^*)\cdot(x-x^*))}+u^*
\end{equation}
where $\omega^\tau_1(x,x^*)$ and $\omega^\tau_2(x,x^*)$ are two neural networks with two hidden layers with size $128$ and $\mathrm{Tanh}$ as the activation function. Therefore, we have $x=x^*\implies u=u^*$ by construction. We model the dual metric using
\begin{equation}
    W^\tau_\theta(x)=C^\tau(x)^\top C^\tau(x)+\underline{\omega}I
\end{equation}
where $C(x)\in\mathbb{R}^{n_x\times n_x}$ is a neural networks with two hidden layers with size $128$ and $\mathrm{Tanh}$ as the activation function, $I$ is the identity matrix and $\underline{\omega}$ is the minimum eigenvalue. By construction, $W(x)$ is symmetric. We use ADAM as the optimizer with learning rate $3\times 10^{-4}$ to optimize the parameters. For the hyper-parameters, we set the convergence rate of CCM $\lambda=0.5$, and the minimum eigenvalue $\underline{\omega}=0.1$.

\begin{figure}[t]
    \centering
    \includegraphics[width=.49\textwidth]{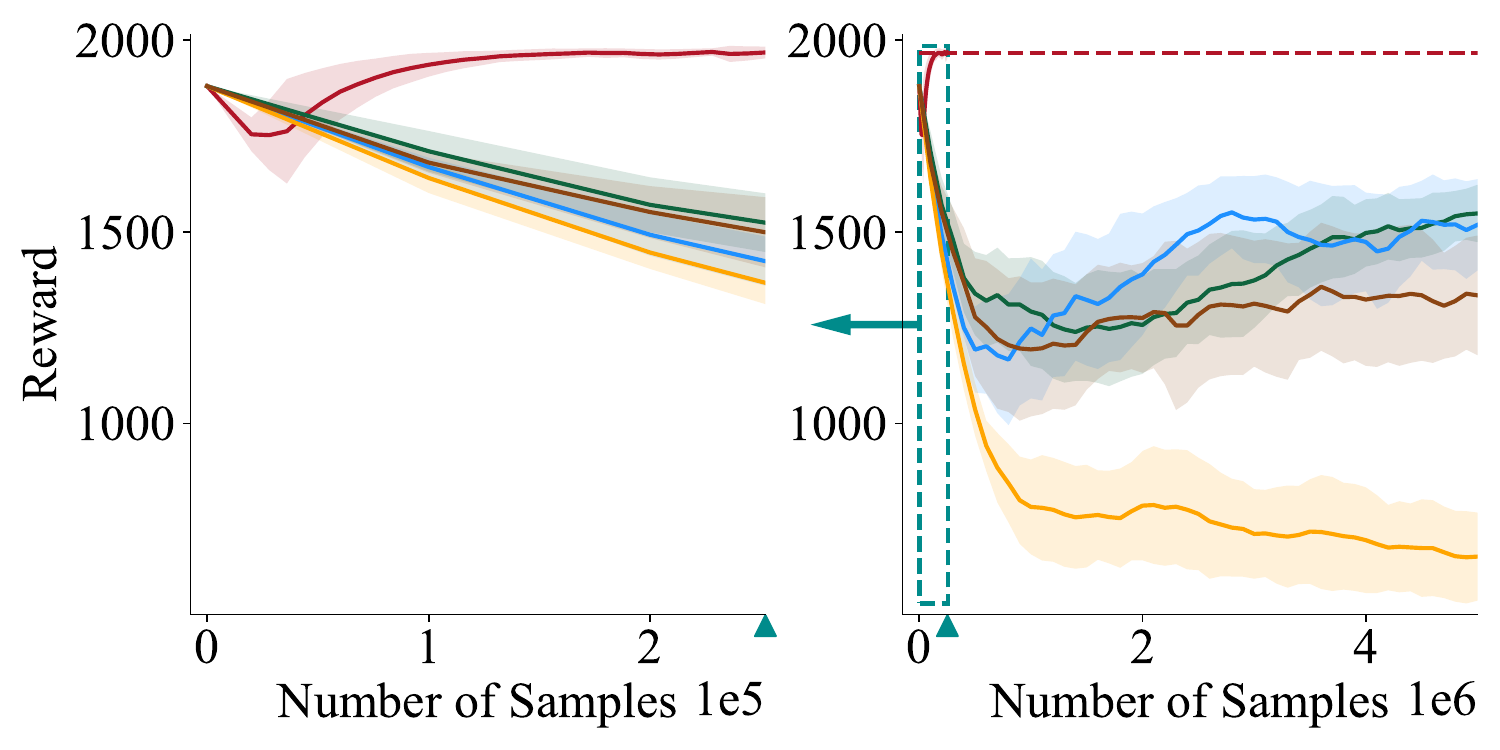}
    \includegraphics[width=.15\textwidth]{figs/legend3.pdf}
    \caption{The expected return w.r.t. the number of samples in Dubins car path tracking environment.}
    \label{fig:dubins-car-reward}
\end{figure}

\begin{table}[t]
    \caption{Converged Reward, Number of Samples, and Tracking Error of our algorithm and baselines in the Dubins car path tracking environment}
    \label{tab:ccm}
    \centering
    \begin{tabular}{c|cccc}
        \toprule
        Method & Converged Reward & Number of Samples (k) & Mean Tracking Error \\
        \midrule
        Ours & $\mathbf{1966.91}\pm15.42$ & $\mathbf{252}$ & $\textbf{0.0228}\pm0.0073$ \\
        PPO & $1547.52\pm74.74$ & $5000$ & $0.597\pm0.157$ \\
        AIRL & $1517.81\pm119.11$ & $5000$ & $0.494\pm0.255$ \\
        D-REX & $653.46\pm114.71$ & $5000$ & $0.701\pm0.166$ \\
        SSRR & $1334.08\pm155.85$ & $5000$ & $0.476\pm0.214$ \\
        \bottomrule
    \end{tabular}
\end{table}

\subsubsection{Environment}

We test our algorithm in a Dubins car path tracking environment. The state of the Dubins car is $x=[p_x,p_y,\psi,v]^\top$, where $p_x$, $p_y$ are the position of the car, $\psi$ is the heading, and $v$ is the velocity. The control input is $u=[\omega,a]\top$ where $\omega$ is the angular acceleration and $a$ is the longitudinal acceleration. The dynamics is given by $\dot x=f(x)+g(x)u$, with
\begin{equation}
    \begin{aligned}
     &f(x)=[v\cos\psi,v\sin\psi,0,0]^\top\\
     &g(x)=\left[\begin{array}{cc}
        0 & 0 \\
        0 & 0 \\
        1 & 0 \\
        0 & 1 \\
     \end{array}\right]
    \end{aligned}
\end{equation}

We set the initial state with $p_x,p_y\in[-0.2,0.2]$, $\psi\in[-0.5,0.5]$, and $v\in[0,0.2]$. For the demonstrations, we use the LQR controller solved with the error dynamics. We collect $20$ trajectories for the demonstrations with randomly generated reference paths, where each trajectory has $1000$ time steps. For the reward function, we use $r(x)=2-\|(p_x,p_y)-(p_x^*,p_y^*)\|$, where $(p_x^*,p_y^*)$ is the position on the reference path.

\subsubsection{More Results}

\paragraph{Training Process}
In \Cref{sec:extensions} we compare the tracking error of our algorithm and the baselines. Here in \Cref{fig:dubins-car-reward} we also provide the expected rewards and standard deviations of our algorithm and the baselines w.r.t. the number of samples used in the training process. We can observe that our algorithm reaches the highest reward and converges much faster than the baselines. 

\paragraph{Numerical Results}
We provide more detailed numerical results here. In \Cref{tab:ccm}, we show the converged reward, the number of samples used in training, and the mean tracking error of our algorithm and the baselines. We can observe that our algorithm achieves the highest reward and the lowest mean tracking error, with a large gap compared with other algorithms. In addition, the samples we used in training are more than one order of magnitude less than other algorithms.

\end{document}